\documentclass[acmsmall]{acmart}

\usepackage[ruled, vlined, linesnumbered]{algorithm2e}
\DontPrintSemicolon
\usepackage{bm}
\usepackage{mathtools}
\usepackage{amsmath}
\usepackage{amsthm}
\usepackage{todonotes}
\usepackage{wrapfig}
\usepackage{graphicx}
\usepackage{subcaption} 
\usepackage{enumitem}
\usepackage{tabularx}
\usepackage{booktabs}
\usepackage{multirow}
\usepackage{makecell}
\usepackage{array}
\usepackage{pifont}
\usepackage[dvipsnames]{xcolor}
\usepackage[normalem]{ulem}
\usepackage{tikz}
\usepackage{float}
\usepackage[commandnameprefix=always,authormarkup=none,markup=underlined]{changes}
\usepackage{cancel}
\usepackage{marginnote}
\usepackage{tcolorbox}
\usepackage{geometry}

\usetikzlibrary{positioning, arrows.meta, calc, decorations.pathreplacing}
\makeatletter
\newtheorem*{rep@theorem}{\rep@title}
\newcommand{\newreptheorem}[2]{%
\newenvironment{rep#1}[1]{%
 \def\rep@title{#2 \ref{##1}}%
 \begin{rep@theorem}}%
 {\end{rep@theorem}}}
\makeatother

\newtheorem*{remark}{Remark}

\theoremstyle{plain}
\newtheorem{fact}{Fact}
\newtheorem*{lem*}{Lemma}
\newreptheorem{lem}{Lemma}
\newtheorem*{thm*}{Theorem}
\newreptheorem{theorem}{Theorem}
\newreptheorem{corollary}{Corollary}

\theoremstyle{definition}
\newtheorem{prob}{\textbf{Problem}}

\newcommand{\tr}{\operatorname{tr}}

\newcommand{\sidsfd}{\textsc{AeroSketch}\xspace}
\newcommand{\sidscod}{\textsc{AeroSketchCOD}\xspace}

\newcommand{\pfive}{\textsf{P5}\xspace}
\newcommand{\cmark}{{\color{green} \textbf{\checkmark}}}
\newcommand{\xmark}{{\color{red} \textbf{---} }}

\AtBeginDocument{%
  }

\setcopyright{cc}
\setcctype{by}
\acmJournal{PACMMOD}
\acmYear{2026} \acmVolume{4} \acmNumber{1 (SIGMOD)} \acmArticle{8} \acmMonth{2} \acmPrice{}\acmDOI{10.1145/3786622}

\begin{document}

\title[Near-Optimal Time Matrix Sketch Framework for Persistent, Sliding Window, and Distributed Streams]{\sidsfd: Near-Optimal Time Matrix Sketch Framework for Persistent, Sliding Window, and Distributed Streams}

\author{Hanyan Yin}
\email{yinhanyan@ruc.edu.cn}
\orcid{0000-0003-2262-4386}
\affiliation{%
  \institution{Renmin University of China}
  \state{Beijing}
  \country{China}
}
\author{Dongxie Wen}
\email{2019202221@ruc.edu.cn}
\orcid{0009-0004-8359-9182}
\affiliation{%
  \institution{Renmin University of China}
  \state{Beijing}
  \country{China}
}
\author{Jiajun Li}
\email{2015201613@ruc.edu.cn}
\orcid{0009-0008-4149-5165}
\affiliation{%
  \institution{Renmin University of China}
  \state{Beijing}
  \country{China}
}

\author{Zhewei Wei}
\email{zhewei@ruc.edu.cn}
\orcid{0000-0003-3620-5086}
\authornote{Zhewei Wei is the corresponding author. The work was partially done at Gaoling School of Artificial Intelligence, Beijing Key Laboratory of Research on Large Models and Intelligent Governance, Engineering Research Center of Next-Generation Intelligent Search and Recommendation, MOE, and Pazhou Laboratory (Huangpu), Guangzhou, Guangdong 510555, China. }
\affiliation{%
  \institution{Renmin University of China}
  \state{Beijing}
  \country{China}
}

\author{Xiao Zhang}
\email{zhangx89@ruc.edu.cn}
\orcid{0009-0001-1857-1368}
\affiliation{%
  \institution{Renmin University of China}
  \state{Beijing}
  \country{China}
}

\author{Peng Zhao}
\email{zhaop@lamda.nju.edu.cn}
\orcid{0000-0001-7925-8255}
\affiliation{%
  \institution{Nanjing University}
  \city{Nanjing}
  \state{Jiangsu}
  \country{China}}

\author{Zhi-Hua Zhou}
\email{zhouzh@lamda.nju.edu.cn}
\orcid{0000-0003-0746-1494}
\affiliation{%
  \institution{Nanjing University}
  \city{Nanjing}
  \state{Jiangsu}
  \country{China}}

\begin{abstract}
  Many real-world matrix datasets arrive as high-throughput vector streams, making it impractical to store or process them in their entirety. To enable real-time analytics under limited computational, memory, and communication resources, matrix sketching techniques have been developed over recent decades to provide compact approximations of such streaming data. Some algorithms have achieved optimal space and communication complexity. However, these approaches often require frequent time-consuming matrix factorization operations. In particular, under tight approximation error bounds, each matrix factorization computation incurs cubic time complexity, thereby limiting their update efficiency.

  In this paper, we introduce \textsc{AeroSketch}, a novel matrix sketching framework that leverages recent advances in randomized numerical linear algebra (RandNLA). \textsc{AeroSketch} achieves optimal communication and space costs while delivering near-optimal update time complexity (within logarithmic factors) across persistent, sliding window, and distributed streaming scenarios. Extensive experiments on both synthetic and real-world datasets demonstrate that \textsc{AeroSketch} consistently outperforms state-of-the-art methods in update throughput. In particular, under tight approximation error constraints, \textsc{AeroSketch} reduces the cubic time complexity to the quadratic level. Meanwhile, it maintains comparable approximation quality while retaining optimal communication and space costs.
\end{abstract}

\begin{CCSXML}
  <ccs2012>
  <concept>
  <concept_id>10003752.10003809.10010055.10010057</concept_id>
  <concept_desc>Theory of computation~Sketching and sampling</concept_desc>
  <concept_significance>500</concept_significance>
  </concept>
  <concept>
  <concept_id>10003752.10003809.10010031.10002975</concept_id>
  <concept_desc>Theory of computation~Data compression</concept_desc>
  <concept_significance>300</concept_significance>
  </concept>
  <concept>
  <concept_id>10003752.10003809.10010170.10010173</concept_id>
  <concept_desc>Theory of computation~Vector / streaming algorithms</concept_desc>
  <concept_significance>100</concept_significance>
  </concept>
  </ccs2012>
\end{CCSXML}

\ccsdesc[500]{Theory of computation~Sketching and sampling}
\ccsdesc[300]{Theory of computation~Data compression}
\ccsdesc[100]{Theory of computation~Vector / streaming algorithms}

\keywords{Streaming Data, Optimal Time Complexity, Matrix Sketch, Persistent Sketch, Sliding Window, Distributed Sketch}

\received{July 2025}
\received[revised]{October 2025}
\received[accepted]{November 2025}

\maketitle

\section{Introduction}

Many types of real-world data—such as computer network traffic~\cite{xia2014survey,kim2020tea}, online machine learning~\cite{kuzborskij2019efficient,chen2020efficient,wen2024fast,chen2022efficient,wen2024matrix}, and sensor data~\cite{Xie2023OptimalSD,Aquino2007ASD}—are continuously generated as vast, high-velocity streams of vectors~\cite{zhou2024learnability,ICML'24:LARA,muthukrishnan2005data}. Storing even a reasonable portion of these streams in memory is impractical. Therefore, it is necessary to design online algorithms that process one vector at a time and immediately discard it, enabling real-time processing with reduced memory usage at the cost of some precision. Such compact representations are known as \textit{matrix sketching}~\cite{cormode2020small,liberty2013simple,woodruff2014sketching,desai2016improved,kacham2024efficient}.

\begin{table*}[h]
    \centering
    \scriptsize
    \caption{Given the dimension $d$ of each row vector (for tracking approximate matrix multiplication over sliding window (Problem~\ref{prob:sw-amm}), there are two vector streams with dimensions $d_x$ and $d_y$, respectively), the upper bound on the relative covariance error $\varepsilon$, the norm of each row $\left\|\bm{a}_i\right\|_2^2 \in \left[1, R\right]$, and the number of sites $m$ (in distributed scenarios), this table presents an overview of the communication/space and amortized update time complexities of algorithms addressing a wide range of matrix sketching problems.  Here, $\delta \in (0,0.01)$ denotes the upper bound on failure probability; that is, each listed complexity bound holds for an algorithm that returns a matrix sketch satisfying the stated error guarantees with probability at least $1-\delta$.  \cmark\xspace indicates that the asymptotic complexity is optimal; \xmark\xspace indicates that it is not optimal; and \cmark* denotes near-optimal complexity (up to a logarithmic factor).
    } \label{tab:comparison}
    \begin{tabularx}{\textwidth}{@{}lX||X|p{1.8cm}r|p{2.8cm}r@{}}
        \toprule
        \multicolumn{2}{l||}{\textbf{Problem Scenarios}} & \textbf{Methods}                                    & \textbf{Space / Communication}          &                                                                      & \textbf{Amortized Update Time Complexity} &                                                                                            \\ \midrule \midrule
        \multicolumn{2}{l||}{\makecell[l]{Full Stream Matrix Sketch                                                                                                                                                                                                                                                                                                      \\(Problem~\ref{prob:full})}}     & Frequent Directions~\cite{liberty2013simple,ghashami2016frequent}               & $O\left(\frac{d}{\varepsilon}\right)$ &\cmark                                 & $O\left(\frac{d}{\varepsilon}\right)$ & \cmark                          \\ \midrule
        \multicolumn{1}{l|}{\multirow{7}{*}{\makecell[l]{Tracking                                                                                                                                                                                                                                                                                                        \\Matrix\\Sketch\\Problem\\(Scenario~1)}}}                             & \multirow{3}{*}{\makecell[l]{Tracking Matrix Sketch                                                                                                                                                                                                                               \\over Sliding Window\\(Problem~\ref{prob:sw-fd})}}\vspace{2.5ex}                                               & Fast-DS-FD~\cite{yin2024optimal}                                   & $O\left(\frac{d}{\varepsilon}\log R\right)$  &\cmark& $O\left(\left(\frac{d}{\varepsilon}+\frac{1}{\varepsilon^3}\right)\log R\right)$                       & \xmark \\
        \multicolumn{1}{l|}{}                            &                                                     & \textbf{[Theorem~\ref{thm:sw}]}         & $O\left(\frac{d}{\varepsilon}\log R\right)$                          & \cmark                                    & $O\left(\frac{d}{\varepsilon}\log d \log R\right)$                              & \cmark * \\
        \multicolumn{1}{l|}{}                            &                                                     & \textbf{[Corollary~\ref{coro:amplify}]} & $O\left(\frac{d}{\varepsilon}\log R\right)$                          & \cmark                                    & $O\left(\frac{d}{\varepsilon}\log d \log R \log^2\frac{1}{\delta}\right)$       & \cmark * \\\cmidrule(l){2-7}
        \multicolumn{1}{l|}{}                            & \multirow{2}{*}{\makecell[lt]{Tracking Persistent                                                                                                                                                                                                                                                             \\Matrix Sketch\\(ATTP, Problem~\ref{prob:attp})}}\vspace{2.5ex} & ATTP-FD~\cite{shi2021time}              & $O\left(\frac{d}{\varepsilon}\log \|\bm{A}\|_F^2\right)$               & \cmark   & $O\left(\frac{d}{\varepsilon^2}\right)$                   & \xmark \\
        \multicolumn{1}{l|}{}                            &                                                     & \textbf{[Theorem~\ref{thm:attp}]}       & $O\left(\frac{d}{\varepsilon}\log \|\bm{A}\|_F^2\right)$             & \cmark                                    & $O\left(\frac{d}{\varepsilon}\log d \log^2 \frac{1}{\delta}\right)$             & \cmark * \\\cmidrule(l){2-7}
        \multicolumn{1}{l|}{}                            & \multirow{2}{*}{\makecell[l]{Tracking Approximate                                                                                                                                                                                                                                                             \\ Matrix Multiplication\\ over Sliding Window\\(Problem~\ref{prob:sw-amm})}}\vspace{4.5ex}                                    & {\begin{tabular}{l}ML-SO-COD~\cite{xian2025optimal}\\hDS-COD~\cite{yao2025optimal}\end{tabular}} &$O\left(\frac{d_x+d_y}{\varepsilon} \log R\right)$& \cmark & $O\left(\left(\frac{d_x+d_y}{\varepsilon}+\frac{1}{\varepsilon^3}\right)\log R\right)$                         & \xmark \\
        \multicolumn{1}{l|}{}                            &                                                     & \textbf{[Theorem~\ref{thm:sw-amm}]}     & $O\left(\frac{d_x+d_y}{\varepsilon}\log R\right)$                    & \cmark                                    & $O\left(\frac{d_x+d_y}{\varepsilon}\log d\log R \log^2 \frac{1}{\delta}\right)$ & \cmark * \\ \cmidrule(l){1-7}
        \multicolumn{1}{l|}{\multirow{4}{*}{\makecell[l]{Tracking                                                                                                                                                                                                                                                                                                        \\Distributed                                                                                                                                                                                                                                                                           \\Matrix\\Sketch\\Problem\\(Scenario~2)}}}                            & \multirow{2}{*}{\makecell[l]{Tracking Matrix Sketch                                                                                                                                                                                                                               \\over Distributed Sites\\(Problem~\ref{prob:dsfd})}}\vspace{2.5ex}                                               & P2~\cite{ghashami2014continuous}                                  & $O\left(\frac{md}{\varepsilon}\log \left\|\bm{A}\right\|_F^2\right)$ &\cmark& $O\left(\frac{d}{\varepsilon^2}\right)$                       & \xmark \\
        \multicolumn{1}{l|}{}                            &                                                     & \textbf{[Theorem~\ref{thm:dist}]}       & $O\left(\frac{md}{\varepsilon}\log \left\|\bm{A}\right\|_F^2\right)$ & \cmark                                    & $O\left(\frac{d}{\varepsilon}\log d \log^2 \frac{1}{\delta}\right)$             & \cmark * \\\cmidrule(l){2-7}
        \multicolumn{1}{l|}{}                            & \multirow{2}{*}{\makecell[l]{Tracking Matrix Sketch                                                                                                                                                                                                                                                           \\over Distributed Sliding\\Window (Problem~\ref{prob:dswfd})}}\vspace{2.5ex}                                              & DA2~\cite{zhang2017tracking}                             &$O\left(\frac{md}{\varepsilon}\log \left\|\bm{A}\right\|_F^2\right)$ &  \cmark    & $O\left(\frac{d}{\varepsilon^2} \right) $                         & \xmark \\
        \multicolumn{1}{l|}{}                            &                                                     & \textbf{[Theorem~\ref{thm:dist-sw}]}    & $O\left(\frac{md}{\varepsilon}\log \left\|\bm{A}\right\|_F^2\right)$ & \cmark                                    & $O\left(\frac{d}{\varepsilon}\log d \log^2 \frac{1}{\delta}\right)$             & \cmark * \\\bottomrule
    \end{tabularx}
\end{table*}

We observe that a range of matrix sketching problems over vector streams—including tracking persistent matrix sketches~\cite{shi2021time}, matrix sketches over sliding windows~\cite{wei2016matrix,yin2024optimal}, distributed matrix sketches~\cite{ghashami2014continuous,zhang2017tracking}, and approximate matrix multiplication~\cite{xian2025optimal,yao2025optimal}—can be reduced to two fundamental tasks: \textit{tracking matrix sketch problem with limited memory} (Scenario~1) and \textit{tracking distributed matrix sketch problem over bandwidth-constrained channels} (Scenario~2), as categorized in Table~\ref{tab:comparison}. The primary optimization metrics of matrix sketch algorithms are space cost (for Scenario~1), communication cost (for Scenario~2), and time cost (for both Scenarios~1 and 2).

Extensive research has been dedicated to developing efficient matrix sketch algorithms for various problem settings~\cite{liberty2013simple,ghashami2016frequent,ghashami2014relative,shi2021time,zhang2017tracking,ghashami2014continuous,wei2016matrix,yin2024optimal,xian2025optimal,yao2025optimal}, as well as theoretically analyzing the relationships among approximation quality, space/communication cost, and time complexity~\cite{huang2017efficient,huang2021communication,ghashami2016frequent,liberty2022even,yin2024optimal}. \cite{huang2017efficient, huang2021communication} proved a lower bound on the communication cost for the distributed deterministic matrix sketch problem. In other words, any deterministic matrix sketching algorithm that solves the problem under Scenario~2 must incur a communication cost of at least a certain number of bits. The corresponding lower bounds for the space complexities under Scenario~1 have also been proved by other researchers~\cite{ghashami2016frequent,yin2024optimal,xian2025optimal,yao2025optimal}. In addition to the theoretical results, numerous specific matrix sketch algorithms have been proposed. Some of them have the same asymptotic communication/space complexity as the theoretical lower bound, indicating that the sketch algorithms have achieved optimal communication/space costs, as listed in Table~\ref{tab:comparison} and marked by \cmark .

Despite their superiority in communication/space complexity, these algorithms often suffer from high computational costs, especially under tight approximation error bounds and over high-dimensional vector streams of dimension $d$. Their asymptotic time complexities are suboptimal compared with the time lower bound proved by Huang~\cite{huang2019near} (marked by \xmark in Table~\ref{tab:comparison}). This is mainly because they continuously perform time-consuming matrix factorizations to monitor changes in the stream and update the matrix sketch. In particular, under tight approximation error bounds, each matrix factorization incurs a cubic time complexity of $O(d^3)$, which is prohibitive for high-dimensional vector streams. As a result, the prior communication- and space-optimal algorithms are far from optimal in terms of computational complexity. This underperformance naturally raises the question: \textit{Can we design matrix sketching algorithms that are simultaneously optimal in communication, space, and time complexity?}

Fortunately, we find that some more efficient RandNLA (randomized numerical linear algebra) operators, including Power Iteration and Simultaneous Iteration, can replace time-consuming deterministic matrix factorization operators~\cite{halko2011finding, martinsson2016randomized, martinsson2020randomized,tropp2023randomized, murray2023randomized}. Although substituting with RandNLA operators may seem straightforward, there is a cost: RandNLA operators introduce probabilistic errors, but deterministic matrix factorization operators do not. As a result, key steps in prior algorithms must also be adapted. Moreover, randomness inherent in RandNLA makes it difficult to determine appropriate parameter settings. It is also challenging to rigorously analyze how probabilistic errors affect the final matrix sketch's accuracy, as well as the communication, space, and time complexity. Another obstacle is that RandNLA operators may not directly meet requirements. For example, we need to identify singular values and vectors that exceed a given threshold $\theta$, but simultaneous iteration only supports computing the top-$k$ singular components for a specified $k$. However, we do not know in advance how many of the top-$k$ singular values exceed the threshold $\theta$. This adds further difficulty to both algorithm design and theoretical complexity analysis.

In this work, we overcome these obstacles through proper algorithm design and parameter selection. The stochastic error from RandNLA operators can be analytically regulated, while maintaining optimal space/communication complexity on par with the best baseline methods. The amortized time complexity per update also approaches near-optimality (within a logarithmic factor). The new generalized matrix sketching framework is termed \sidsfd, which accelerates a wide range of tracking matrix sketching algorithms to near-optimal update time complexity (marked as \cmark* in Table~\ref{tab:comparison}) along with optimal communication/space complexity.
The specific contributions of this paper are outlined below:

\begin{itemize}[leftmargin=*]
    \item \textbf{Novel Framework:} For a wide range of tracking matrix sketching problems listed in Table~\ref{tab:comparison}, we design a new generalized matrix sketching framework, termed \sidsfd, which improves the time efficiency of the previous state-of-the-art methods, while maintaining their optimal communication and space overhead. In particular, under tight approximation error constraints $\varepsilon$, baseline algorithms degrade to $O(d^3)$ time complexity while our \sidsfd framework achieves $O(d^2 \log d)$ time complexity per update.
    \item \textbf{Rigorous Theoretical Analysis:} We provide a rigorous end-to-end proof of \sidsfd, demonstrating its competitive approximation quality, communication and space complexities, and superior update time complexities compared to state-of-the-art methods, especially under tight approximation error constraints $\varepsilon$. To the best of our knowledge, \sidsfd is the first implementable algorithm with theoretical guarantees of explicit probabilistic error bounds, near-optimal update time complexity, and optimal space/communication complexity.
    \item \textbf{Extensive Experiments:} We implement the \sidsfd framework and conduct comprehensive experiments to verify its superiority over state-of-the-art algorithms, particularly in terms of empirical approximation error, communication/space cost, and time consumption. Experimental results show that the trade-off between error and computational cost for the \sidsfd-optimized algorithm is more favorable than that for the best baseline algorithms on both synthetic and real-world datasets. The communication/space cost of our framework is also competitive. Furthermore, optimizing time cost becomes increasingly significant as the upper bound on covariance relative error $\varepsilon$ tightens and the dimensionality of the vectors $d$ increases substantially. These experimental findings strongly align with our theoretical analyses.
\end{itemize}

\section{Preliminaries and Related Work}

This section introduces problem definitions and relevant tools, including Power Iteration, Simultaneous Iteration, and Frequent Directions, on which our \sidsfd framework is based.

\subsection{Problem Definition: Matrix Sketching}
\label{sec:prob-defn}

We now provide formal definitions of the different variants of matrix sketching problems listed in Table~\ref{tab:comparison}. A stream $\bm{A}$ consists of row vectors $\bm{a}_i\in \mathbb{R}^{1\times d}$, where the norm of each row satisfies $\left\|\bm{a}_i\right\|_2^2 \in \left[1, R\right]$. Given two time values $s < t$, define $\bm{A}_{s,t}\in\mathbb{R}^{(t-s)\times d}$ as the portion of the stream that arrived during the time interval $(s, t]$. Let $0$ denote a time point before any elements of the stream have arrived, and $T$ denote the current time. We further define the specific stream subsets $\bm{A}_t = \bm{A}_{0, t}$ and $\bm{A}_{-t} = \bm{A}_{t,T}$. The sizes of both the sketching state and the output sketch matrix are significantly smaller than the matrix to be approximated. In distributed settings, the communication cost between the \(m\) sites and the coordinator, as well as the size of the output matrix, is significantly smaller than that of the matrix to be approximated.

\subsubsection{Full Stream Matrix Sketch}
\label{sec:full-stream-matrix-sketch}

The goal of the \textit{full stream matrix sketch} problem is to maintain a matrix sketch of a stream from its beginning up to the current time $T$. The formal definition is~\cite{liberty2013simple,ghashami2016frequent}:

\begin{prob}[Full Stream Matrix Sketching]
    \label{prob:full}
    Given an error parameter $\varepsilon$, at any current time $T$, we must maintain a matrix sketch $\bm{B}_T$ such that $\bm{B}_T$ approximates the matrix $\bm{A}_T$. The approximation quality is assured by the \textit{covariance error} bound, defined as follows:
    \begin{equation*}
        \textup{\textbf{cov-error}}(\bm{A}_T, \bm{B}_T) = \lVert \bm{A}_T^\top\bm{A}_T - \bm{B}_T^\top\bm{B}_T\rVert_2 \le O(1) \varepsilon \lVert\bm{A}_T\rVert_F^2.
    \end{equation*}
\end{prob}

\subsubsection{Matrix Sketching over Sliding Window}
\label{sec:matrix-sketch-over-sliding-windows}

In real-world scenarios, the focus often lies on the most recently arrived elements rather than outdated items within data streams. Datar et al.~\cite{datar2002maintaining} considered the problem of maintaining aggregates and statistics from the most recent portion of a data stream and referred to such a model as the \textit{sliding window model}. The \textit{matrix sketching over sliding window} problem is formally stated as follows~\cite{wei2016matrix,yin2024optimal}:

\begin{prob}[Matrix Sketching over Sliding Window]
    \label{prob:sw-fd}
    Given an error parameter $\varepsilon$ and a window size $N$, the goal is to maintain a matrix sketch $\bm{B}_\text{W}$ such that, at the current time $T$, $\bm{B}_\text{W}$ approximates the matrix $\bm{A}_\text{W} = \bm{A}_{T-N,T} \in \mathbb{R}^{N \times d}$. The approximation quality is assured by the \textit{covariance error} bound:
    \begin{equation*}
        \textup{\textbf{cov-error}}(\bm{A}_\text{W}, \bm{B}_\text{W}) = \lVert \bm{A}^\top_\text{W}\bm{A}_\text{W} - \bm{B}^\top_\text{W}\bm{B}_\text{W}\rVert_2 \le O(1) \varepsilon \lVert\bm{A}_\text{W}\rVert_F^2.
    \end{equation*}
\end{prob}

\begin{remark}
    Besides the space-optimal Fast-DS-FD algorithm proposed by Yin et al. in~\cite{yin2024optimal} as we listed in Table~\ref{tab:comparison}, they also mentioned a time-optimized version called \textbf{probabilistic Fast-DS-FD} in Theorem~3.1, claiming that the per-update time complexity is $O(d\ell)$, by setting $\ell \gets \min\left(\lceil \frac{1}{\varepsilon} \rceil, d\right)$, so that the time complexity becomes $O\left(\frac{d}{\varepsilon}\right)$. Unfortunately, probabilistic Fast-DS-FD lacks a concrete implementation and therefore remains a potential direction for future work. It also lacks strict proofs of error bounds and complexities. Important parameters are not defined explicitly, such as how to set the number of iterations for the RandNLA algorithm to meet the error bound. Therefore, we did not list it in Table~\ref{tab:comparison} as a competitor to our~\sidsfd framework. Additionally, we found that Lemma~1 in \cite{yin2024optimal}, which involves the key operation of dumping and restoring snapshots and the associated correctness proof, holds only under the premise that $\bm{v}_j$ is an exact singular vector. However, the probabilistic randomized algorithm used by probabilistic Fast-DS-FD to optimize time can obtain only an approximate estimate of $\bm{v}_j$. Therefore, directly replacing the SVD in Fast-DS-FD with RandNLA will introduce cumulative restoring norm errors, as shown in Figure~\ref{fig:norm-error}. Designing the dumping and restoring snapshots operation and correctness proof under this premise is far from straightforward. In this paper, our~\sidsfd framework will overcome these limitations.
\end{remark}

\subsubsection{(At-the-time) Persistent Matrix Sketching}
\label{sec:persistent-matrix-sketch}

In this paper, we focus on the ATTP (\textit{at-the-time persistent}) sketch task proposed by Shi et al.~\cite{shi2021time}, which requires supporting queries on arbitrary historical versions of the full stream matrix sketch. The formal definition is as follows:

\begin{prob}[ATTP Sketch]
    \label{prob:attp}
    Given an error parameter $\varepsilon$, we maintain a matrix sketch such that, at the current time $T$, it can return an approximation $\bm{B}_{t}$ for any matrix $\bm{A}_t = \bm{A}_{0, t} \in \mathbb{R}^{t \times d}$ with $t \le T$. The approximation quality is assured by the \textit{covariance error} bound:
    \begin{equation*}
        \forall t<T,\quad\textup{\textbf{cov-error}}(\bm{A}_t, \bm{B}_t) = \lVert \bm{A}_t^\top\bm{A}_t - \bm{B}_t^\top\bm{B}_t\rVert_2 \le O(1) \varepsilon \lVert\bm{A}_t\rVert_F^2.
    \end{equation*}
\end{prob}

\subsubsection{Approximate Matrix Multiplication over Sliding Window}
\label{sec:approximated-matrix-multiplication-over-sliding-windows}

We follow the problem definition of \textit{approximate matrix multiplication over sliding window} as in~\cite{yao2024approximate,xian2025optimal,yao2025optimal}:

\begin{prob}[Approximate Matrix Multiplication over Sliding Window]
    \label{prob:sw-amm}
    Let \(\{(\boldsymbol{x}_t, \boldsymbol{y}_t)\}_{t \ge 0}\) be sequences of two vector streams, where for each time \(t\), we have \(\boldsymbol{x}_t \in \mathbb{R}^{d_x}\) and \(\boldsymbol{y}_t \in \mathbb{R}^{d_y}\). For a fixed window size \(N\) and for any time \(T \ge N\), define the sliding window matrices:
    \[
        \boldsymbol{X}_\text{W} = \begin{bmatrix} \boldsymbol{x}_{T-N+1} & \boldsymbol{x}_{T-N+2} & \cdots & \boldsymbol{x}_T \end{bmatrix} \in \mathbb{R}^{d_x \times N},\]
    \[
        \boldsymbol{Y}_\text{W} = \begin{bmatrix} \boldsymbol{y}_{T-N+1} & \boldsymbol{y}_{T-N+2} & \cdots & \boldsymbol{y}_T \end{bmatrix} \in \mathbb{R}^{d_y \times N}.
    \]
    The goal is to yield, at every time \(T \ge N\), matrices \(\boldsymbol{A}_\text{W} \in \mathbb{R}^{d_x \times \ell}\) and \(\boldsymbol{B}_\text{W} \in \mathbb{R}^{d_y \times \ell}\) satisfying:
    \[
        \left\|\boldsymbol{X}_\text{W} \boldsymbol{Y}_\text{W}^\top - \boldsymbol{A}_\text{W} \boldsymbol{B}_\text{W}^\top\right\|_2 \le O(1) \epsilon\, \|\boldsymbol{X}_\text{W}\|_F \, \|\boldsymbol{Y}_\text{W}\|_F.
    \]
\end{prob}

\subsubsection{Tracking Distributed Matrix Sketch}
\label{sec:tracking-distributed-matrix-sketching}

In many scenarios, such as distributed databases, wireless sensor networks, and cloud computing, there are multiple distributed input streams. Each stream is observed by one of the $m$ distributed sites. Suppose we have a distributed data stream \(A = \{(\bm{a}_i, t_i, S_i) \mid i = 1, 2, \dots\}\), where each item \(\bm{a}_i \in \mathbb{R}^{1\times d}\) is a \(d\)-dimensional vector with timestamp \(t_i\) arriving at site \(S_i\in \{1, 2, \dots, m\}\). We adopt the definition of \textit{tracking distributed matrix sketch} given by Ghashami et al.~\cite{ghashami2014continuous}:

\begin{prob}[Tracking Distributed Matrix Sketch]
    \label{prob:dsfd}
    At any current time \(T\), define the matrix \(\bm{A}_T\) as the collection of all rows \(\bm{a}_i\) across all \(m\) sites with timestamps in the interval \((0,T]\). The goal is to maintain, at a central coordinator, a compact sketch matrix \(\bm{B}_T\) such that it approximates \(\bm{A}_T\), with the guarantee of the covariance error bound:
    \begin{equation*}
        \textup{\textbf{cov-error}}(\bm{A}_T, \bm{B}_T) = \lVert \bm{A}_T^\top\bm{A}_T - \bm{B}_T^\top\bm{B}_T\rVert_2 \le O(1) \varepsilon \lVert\bm{A}_T\rVert_F^2.
    \end{equation*}
\end{prob}

In distributed settings, attention is also often focused on recent elements, known as the problem of \textit{tracking distributed matrix sketch over sliding window}. The definition given by Zhang et al.~\cite{zhang2017tracking} is:

\begin{prob}[Tracking Distributed Matrix Sketch over Sliding Window]
    \label{prob:dswfd}
    Given a window size \(N\), define \(\bm{A}_\text{W}\) to consist of all vectors \(\bm{a}_i\) across all \(m\) sites with timestamps in the interval \((T - N,T]\). Maintain a compact sketch matrix \(\bm{B}_\text{W}\) at a central coordinator such that it approximates \(\bm{A}_\text{W}\) with bounded covariance error:
    \begin{equation*}
        \textup{\textbf{cov-error}}(\bm{A}_\text{W}, \bm{B}_\text{W}) = \lVert \bm{A}_\text{W}^\top\bm{A}_\text{W} - \bm{B}_\text{W}^\top\bm{B}_\text{W}\rVert_2 \le O(1) \varepsilon \lVert\bm{A}_\text{W}\rVert_F^2.
    \end{equation*}
\end{prob}

\begin{remark}
    In Table~\ref{tab:comparison}, the baseline algorithm we list for solving Problem~\ref{prob:dswfd} is DA2, proposed by Zhang et al.~\cite{zhang2017tracking}. Although the update time complexity of DA2 is not explicitly given in~\cite{zhang2017tracking}, since each update requires performing an SVD on a matrix of size $d \times O(1/\varepsilon)$, the update time complexity of DA2 should be dominated by this operation, with a time cost of $O(d/\varepsilon^2)$. When $\varepsilon < O(1/d)$, the time complexity degrades to $O(d^3)$. An analysis in the experimental section of the original paper also confirms this: ``This is because DA2 needs to compute matrix factorizations periodically, which leads to running time quadratic or even cubic in $d$.'' (last paragraph of Sec~IV.B in~\cite{zhang2017tracking}).
\end{remark}

\subsection{RandNLA Operators}

\subsubsection{Power Iteration}
\textbf{Power Iteration} (also known as the \textbf{Power Method}) is a classic algorithm for computing an approximate largest singular value~\cite{trefethen2022numerical,kuczynski1992estimating}. Given a matrix $\bm{A}$, the algorithm produces an estimate $\hat{\sigma}_1^2$ of the squared largest singular value of $\bm{A}$, along with an estimated corresponding top singular vector $\hat{\bm{v}}_1$. The pseudocode is shown in Algorithm~\ref{alg:power}.

\begin{algorithm}[h]
    \caption{Power Iteration Algorithm}
    \label{alg:power}
    \KwIn{$\bm{A}\in \mathbb{R}^{d\times \ell}$: the matrix for which the approximate largest singular value (squared) and corresponding singular vector are to be computed; $k$: the number of iterations.}
    \KwOut{$\hat{\sigma}_1^2$: the squared largest singular value of $\bm{A}$; $\bm{v}_1$: the corresponding singular vector.}
    \tcc{$\mathcal{N}(\bm{0}, \bm{I}_\ell)$ is the standard multivariate normal distribution}
    $\bm{x}^{(0)} \gets \mathcal{N}(\bm{0}, \bm{I}_\ell)$, $\bm{x}^{(0)} \gets \frac{\bm{x}^{(0)}}{\left\| \bm{x}^{(0)} \right\|_2}$\;

    \For{$i=1$ \KwTo $k$}{
        $\bm{x}^{(i)} \gets \bm{A}^\top \bm{A} \bm{x}^{(i-1)}$, $\bm{x}^{(i)} \gets \frac{\bm{x}^{(i)}}{\left\| \bm{x}^{(i)} \right\|_2}$\;
    }
    $\hat{\sigma_1}^2 \gets \left(\bm{x}^{(k)}\right)^\top \bm{A}^\top \bm{A} \bm{x}^{(k)}, \bm{v}_1 \gets \bm{x}^{(k)}$\;
    \Return $\hat{\sigma_1}^2, \bm{v}_1$\;
\end{algorithm}

A textbook theorem about the convergence rate of Power Iteration states the following: if the initial vector $\bm{x}^{(0)}$ in line 1 of Algorithm~\ref{alg:power} is not orthogonal to the top singular vector, then after $k$ iterations the gap between the estimated squared singular value $\hat{\sigma}_1^2$ and the true squared top singular value $\sigma_1^2$ satisfies $\lvert\hat{\sigma}_1^2 - \sigma_1^2\rvert = O\left(\sigma_2^2/\sigma_1^2\right)^{2k}$~\cite{trefethen2022numerical}, where $\sigma_2^2 / \sigma_1^2$ (\textit{spectral gap}) is the ratio of the second-largest squared singular value of $\bm{A}$ to the top one.

However, instead of the \textit{spectral-gap-dependent} error bound, the \textit{spectral-gap-free} error bound (which does not depend on $\sigma_2^2$) of Power Iteration is more convenient for proving the theorems related to the \sidsfd\ framework proposed in this paper. A \textit{spectral-gap-free} error bound was proved by Kuczy\'nski and Wo\'zniakowski as Theorem 4.1(a) in~\cite{kuczynski1992estimating}, which provides a tight bound but in a rather lengthy form. Therefore, we derive a looser yet more concise version that suffices for our analysis and state it as follows:

\begin{corollary}[Probabilistic error bound and time complexity of Power Iteration, a simplified corollary of Theorem~4.1(a) in~\cite{kuczynski1992estimating}]
    \label{thm:power}
    For Algorithm~\ref{alg:power}, if we set $k = \lceil\log_{2} d\rceil+1$, then the probability that the estimated top squared singular value $\hat{\sigma}_1^2$ exceeds half of the true value $\sigma_1^2 / 2$ is bounded by:
    \begin{equation*}
        \Pr\left[\hat{\sigma}_1^2 \ge \sigma_1^2/2\right] \ge 1- {\frac{2}{\pi\sqrt{e}}\cdot\frac{1}{\sqrt{d \log_2 d}}}.
    \end{equation*}
    Power Iteration requires only matrix-vector multiplications. The total computational time cost of Algorithm~\ref{alg:power} is $O(d \ell k)$. If we set $k = \lceil\log_2 d\rceil+1$, then the time cost becomes $O(d \ell \log d)$.
\end{corollary}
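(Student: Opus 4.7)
The plan is to invoke Theorem~4.1(a) of Kuczy\'nski and Wo\'zniakowski~\cite{kuczynski1992estimating} directly, then specialize to $\delta = 1/2$ and $k = \lceil \log_2 d \rceil + 1$ and simplify with elementary inequalities.

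First I would verify that Algorithm~\ref{alg:power} satisfies the hypotheses of KW: normalizing a draw from $\mathcal{N}(\bm{0}, \bm{I}_\ell)$ yields a sample uniformly distributed on the unit sphere in $\mathbb{R}^\ell$ (by rotational invariance of the isotropic Gaussian), which is exactly the starting distribution assumed in KW. The iteration is applied to $\bm{A}^\top \bm{A}$, a symmetric PSD matrix whose top eigenvalue equals $\sigma_1^2$ and whose rank is at most $d$; hence the dimension appearing in the KW bound may be taken as $d$. The KW theorem then provides a gap-free inequality of the shape $\Pr[\hat{\sigma}_1^2 < (1-\delta)\sigma_1^2] \le g(d,k)\cdot (1-\delta)^{2k-1}$, where $g(d,k)$ scales essentially like $\sqrt{d}/(\pi\sqrt{k})$.

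Next I would plug in $\delta = 1/2$, collapsing $(1-\delta)^{2k-1}$ to $2\cdot 4^{-k}$, and then set $k = \lceil \log_2 d\rceil + 1$, which forces $4^{-k} \le 1/(4d^2)$. Multiplying by the prefactor $\Theta(\sqrt{d/k})$ and using $k = \log_2 d + \Theta(1)$ yields a residual of order $1/\sqrt{d\log d}$. The specific constant $2/(\pi\sqrt{e})$ is obtained by tracking the $1/\pi$ coming from the KW integral together with a final application of $(1-1/k)^k \le 1/e$ (or a closely related inequality) when bounding the ratio $\sqrt{\log_2 d}/\sqrt{\log_2 d + 1}$.

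The time bound follows from the structure of the loop: each iteration computes $\bm{A}^\top \bm{A}\bm{x}^{(i-1)}$ as two matrix-vector products of cost $O(d\ell)$ each, followed by an $O(d)$ normalization; the final Rayleigh quotient is likewise $O(d\ell)$. Summing over $k$ iterations gives total cost $O(d\ell k)$, which becomes $O(d\ell \log d)$ under the chosen $k$. The main obstacle is not conceptual but bookkeeping: carefully tracking the numerical constants through the KW bound to recover the precise coefficient $2/(\pi\sqrt{e})$ requires sharpening a few intermediate estimates, whereas the asymptotic rate $1/\sqrt{d\log d}$ is immediate from the parameter choices.
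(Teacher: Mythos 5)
Your proposal follows exactly the route of the paper's proof in Appendix~A: invoke Theorem~4.1(a) of Kuczy\'nski--Wo\'zniakowski, verify that the normalized Gaussian start is uniform on the sphere, specialize the relative error to $1/2$ and $k=\lceil\log_2 d\rceil+1$, and read off the time bound from the per-iteration matrix--vector products. Two details in your sketch are off, though neither changes the conclusion. First, the KW bound has the form $\min\{0.824,\,0.354/\sqrt{\varepsilon(k-1)}\}\sqrt{d}\,(1-\varepsilon)^{k-1/2}$, not $(1-\delta)^{2k-1}$; with your exponent the arithmetic would give $2\cdot 4^{-k}=1/(2d^2)$ and hence a residual of order $1/(d^{3/2}\sqrt{\log d})$, which contradicts your own claimed rate $1/\sqrt{d\log_2 d}$ --- with the correct exponent one gets $(1/2)^{\log_2 d+1/2}=1/(d\sqrt{2})$ and the stated rate falls out directly. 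Second, the constant $2/(\pi\sqrt{e})$ does not require any auxiliary $(1-1/k)^k\le 1/e$ step: it is simply the observation that KW's numerical constant satisfies $0.354\approx\tfrac{192}{105}\cdot\tfrac{1}{\pi\sqrt{e}}\le\tfrac{2}{\pi\sqrt{e}}$, after which the factors of $\sqrt{2}$ from the prefactor and from $(1/2)^{1/2}$ cancel exactly.
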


The original statement of Theorem 4.1(a) in~\cite{kuczynski1992estimating}, together with the detailed derivation of Corollary~\ref{thm:power}, can be found in Appendix~\ref{appendix:proof-power}.

\subsubsection{Simultaneous Iteration}

\textbf{Simultaneous Iteration}, also known as \textbf{Subspace Iteration} or \textbf{Orthogonal Iteration}, is an efficient algorithm for approximating the top-$k$ singular values and vectors of a matrix (in contrast to Power Iteration, which approximates only the top-1)~\cite{musco2015randomized}. The procedure for this method is shown in Algorithm~\ref{alg:simul}.%

\begin{algorithm}[h]
    \caption{Simultaneous Iteration (\texttt{simul\_iter})}
    \label{alg:simul}
    \KwIn{$\bm{A}\in \mathbb{R}^{d\times \ell}$: the matrix for which the top-$k$ singular values and their corresponding singular vectors are to be obtained; $k$: the number of singular components to compute; $\varepsilon_{\text{SI}}$: an error coefficient within the range $(0,1)$.}
    \KwOut{$\bm{Z}\in \mathbb{R}^{d\times k}$: the matrix of the approximate top-$k$ singular vectors; $\hat{\bm{\Sigma}}\in \mathbb{R}^{k\times k}$: the diagonal matrix of the approximate top-$k$ singular values.}

    $q \gets \Theta\left(\frac{\log d}{\varepsilon_{\text{SI}}}\right), \bm{\Pi} \sim \mathcal{N}(0, 1)^{\ell \times k}$\;
    $\bm{K} \gets \left(\bm{A} \bm{A}^\top \right)^q\bm{A} \bm{\Pi}$ \tcp*[f]{$\bm{K}\in\mathbb{R}^{d \times k}$, takes $\Theta\left(qd\ell k\right)$ time}\;
    $\bm{Q}, \bm{R} \gets \texttt{QR}(\bm{K})$ \tcp*[f]{$\bm{Q}\in \mathbb{R}^{d\times k}$, takes $\Theta\left(dk^2\right)$ time}\;
    $\bm{M}\gets \bm{Q}^\top\bm{A}\bm{A}^\top \bm{Q}$\tcp*[f]{$\bm{M}\in \mathbb{R}^{k\times k}$, takes $O(d\ell k)$ time }  \;
    $[\bm{U}, \hat{\bm{\Sigma}}^2, \bm{U}^\top]\gets \texttt{SVD}(\bm{M})$ \tcp*[f]{takes $O(k^3)$ time}\;
    \Return $\bm{Z} \gets \bm{Q}\bm{U}$, and $\hat{\bm{\Sigma}}$
\end{algorithm}

After running Algorithm~\ref{alg:simul} on matrix $\bm{A}$, we obtain the approximate top-$k$ singular vectors $\bm{Z} = \left[\bm{z}_1, \bm{z}_2, \dots, \bm{z}_k\right]$ and the approximate top-$k$ singular values $\hat{\bm{\Sigma}} = \operatorname{diag}(\hat{\sigma}_1, \hat{\sigma}_2, \dots, \hat{\sigma}_k)$ of $\bm{A}$. In more intuitive terms, if the exact singular value decomposition (SVD) of $\bm{A}$ is given by $\bm{A} = \bm{U} \bm{\Sigma} \bm{V}^\top$, then Algorithm~\ref{alg:simul} returns $\bm{Z} \approx \bm{U}_k$ and $\hat{\bm{\Sigma}} \approx \bm{\Sigma}_k$, where $\bm{\Sigma}_k$ is the diagonal matrix containing the top-$k$ singular values of $\bm{A}$, and $\bm{U}_k$ contains the corresponding singular vectors.

The probabilistic error guarantees for the approximations, produced by Algorithm~\ref{alg:simul} provided by Musco and Musco~\cite{musco2015randomized}, are sufficient for our later analysis and are stated as follows:

\begin{theorem}[Probabilistic error bounds and time complexity of Simultaneous Iteration, the main lemma proven in~\cite{musco2015randomized}]
    \label{thm:simul}
    Given a matrix $\bm{A} \in \mathbb{R}^{d \times \ell}$, a target rank $k$, and an error parameter $\varepsilon_{\text{SI}}$, after executing Algorithm~\ref{alg:simul}, we obtain matrices $\bm{Z}\in\mathbb{R}^{d\times k}$ and $\hat{\bm{\Sigma}}\in\mathbb{R}^{k\times k}$. With high probability (at least $99/100$), the following guarantees hold:

    \begin{enumerate}
        \item $\|\bm{A} - \bm{Z}\bm{Z}^\top \bm{A}\|_F \le \left(1 + \varepsilon_{\text{SI}}\right) \|\bm{A} - \bm{A}_k\|_F$,

              where $\bm{A}_k$ is the best rank-$k$ approximation of $\bm{A}$. If the singular value decomposition of $\bm{A}$ is $\bm{U} \bm{\Sigma} \bm{V}^\top$, then $\bm{A}_k = \bm{U}_k \bm{\Sigma}_k \bm{V}_k^\top$. \footnote{Specifically, for $k = 0$, we define $\bm{A}_0 = \bm{0}$.}

        \item $\|\bm{A} - \bm{Z} \bm{Z}^\top \bm{A}\|_2 \le \left(1 + \varepsilon_{\text{SI}}\right) \|\bm{A} - \bm{A}_k\|_2$.

        \item $\forall i,\ \left|\bm{u}_i^\top \bm{A} \bm{A}^\top \bm{u}_i - \bm{z}_i^\top \bm{A} \bm{A}^\top \bm{z}_i\right| \le \varepsilon_{\text{SI}} \cdot \sigma_{k+1}^2$, where $\bm{u}_i$ is the $i$-th left singular vector of $\bm{A}$ and $\bm{z}_i$ is the $i$-th column of $\bm{Z}$.
    \end{enumerate}

    The comments in Algorithm~\ref{alg:simul} analyze the time cost of each line. The time complexity of Algorithm~\ref{alg:simul} is dominated by line 2, which takes $\Theta(q d \ell k)$ time, where $q = \Theta\left(\frac{\log d}{\varepsilon_{\text{SI}}}\right)$. If $\varepsilon_{\text{SI}}$ is a constant, then the total time complexity of Algorithm~\ref{alg:simul} becomes $\Theta(d \ell k \log d)$.

\end{theorem}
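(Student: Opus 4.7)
The plan is to adapt the analysis of block power iteration with a random Gaussian initializer, following the gap-free polynomial framework underlying randomized SVD. The key observation is that $\bm{K} = (\bm{A}\bm{A}^\top)^q \bm{A} \bm{\Pi}$ can be read as applying the odd polynomial $p(x) = x^{2q+1}$ to the singular values of $\bm{A}$, which amplifies large singular values relative to small ones. Orthonormalizing via \texttt{QR} and then extracting singular components from the small compressed matrix $\bm{M} = \bm{Q}^\top \bm{A}\bm{A}^\top \bm{Q}$ yields a Rayleigh--Ritz approximation to the top-$k$ left singular structure of $\bm{A}$.

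First, I would analyze the Gaussian initialization. Writing the SVD $\bm{A} = \bm{U} \bm{\Sigma} \bm{V}^\top$ and partitioning $\bm{V} = [\bm{V}_k, \bm{V}_{-k}]$, I split $\bm{\Pi}$ into $\bm{\Pi}_1 = \bm{V}_k^\top \bm{\Pi}$ and $\bm{\Pi}_2 = \bm{V}_{-k}^\top \bm{\Pi}$. Both are Gaussian by rotational invariance, and standard tail bounds on the extreme singular values of a Gaussian matrix (e.g.\ Gordon's inequality) give $\sigma_{\min}(\bm{\Pi}_1) \ge c/\sqrt{d}$ with probability at least $99/100$. This rules out the pathological case of $\bm{\Pi}$ being nearly orthogonal to the target subspace, and supplies an inverse-polynomial lower bound on the initial alignment.

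Second, I would prove the gap-free subspace capture. The standard spectral-gap bound $(\sigma_{k+1}/\sigma_k)^{2q+1}$ is useless when $\sigma_k = \sigma_{k+1}$, so I follow the Musco--Musco trick of splitting the spectrum at the threshold $(1-\varepsilon_{\text{SI}})\sigma_{k+1}$. For singular values above this threshold, the residual they leave after $q = \Theta(\log d / \varepsilon_{\text{SI}})$ iterations is $(1-\varepsilon_{\text{SI}})^{2q+1} = O(1/\operatorname{poly}(d))$, small enough to be absorbed by the $1/\sqrt{d}$ loss from initialization. For singular values below the threshold, the factor $(1+\varepsilon_{\text{SI}})$ applied to $\|\bm{A} - \bm{A}_k\|$ already dominates their contribution. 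Combining the two regimes gives both the Frobenius bound (item~1) and the spectral bound (item~2). Item~3 then follows from a Weyl-type perturbation argument on the compressed eigenproblem: since $\bm{z}_i = \bm{Q}\bm{u}_i$ is orthonormal and $\bm{z}_i^\top \bm{A}\bm{A}^\top \bm{z}_i$ is the corresponding Ritz value, the defect $|\bm{u}_i^\top \bm{A}\bm{A}^\top \bm{u}_i - \bm{z}_i^\top \bm{A}\bm{A}^\top \bm{z}_i|$ is bounded by $\|\bm{A}\bm{A}^\top - \bm{Q}\bm{Q}^\top \bm{A}\bm{A}^\top\|_2 \le \varepsilon_{\text{SI}} \sigma_{k+1}^2$ via the projection bound from item~2 squared.

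The main obstacle is the gap-free subspace step: a naive contraction bound breaks down precisely when $\sigma_k$ and $\sigma_{k+1}$ are close, and avoiding this dependence requires the careful spectral-threshold polynomial argument above, coupled with the Gaussian initialization estimate to control the pre-factor. Once that is in place, the time complexity is routine bookkeeping from the annotations in Algorithm~\ref{alg:simul}: line~2 dominates at $\Theta(q d \ell k)$, while lines~3--5 cost $\Theta(dk^2)$, $O(d\ell k)$, and $O(k^3)$ respectively. Substituting $q = \Theta(\log d / \varepsilon_{\text{SI}})$ and treating $\varepsilon_{\text{SI}}$ as a constant yields the claimed $\Theta(d \ell k \log d)$ runtime.
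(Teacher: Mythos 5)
The paper does not prove this theorem at all: it is imported verbatim as the main result of Musco and Musco~\cite{musco2015randomized}, and is used downstream as a black box. So there is no in-paper argument to compare against; I can only judge your reconstruction on its own terms. Your outline for items~1 and~2 follows the standard gap-free analysis (reading $\bm{K}=(\bm{A}\bm{A}^\top)^q\bm{A}\bm{\Pi}$ as the odd polynomial $x^{2q+1}$ applied to the spectrum, a $\sigma_{\min}(\bm{V}_k^\top\bm{\Pi})\gtrsim 1/\sqrt{d}$ bound from Gaussian initialization, and a spectral split at a threshold near $\sigma_{k+1}$ so that $(1-\varepsilon_{\text{SI}})^{2q+1}$ is inverse-polynomial in $d$), which is indeed the right skeleton, and the runtime bookkeeping matches the annotations in Algorithm~\ref{alg:simul}.

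The genuine gap is in item~3. You claim the per-vector bound follows from a Weyl-type argument via $\left|\bm{u}_i^\top\bm{A}\bm{A}^\top\bm{u}_i-\bm{z}_i^\top\bm{A}\bm{A}^\top\bm{z}_i\right|\le\|\bm{A}\bm{A}^\top-\bm{Q}\bm{Q}^\top\bm{A}\bm{A}^\top\|_2\le\varepsilon_{\text{SI}}\sigma_{k+1}^2$, obtained from ``item~2 squared.'' Neither inequality holds. First, $\bm{u}_i$ and $\bm{z}_i$ are different unit vectors, so the difference of the two quadratic forms is not controlled by a single operator-norm perturbation of $\bm{A}\bm{A}^\top$. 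Second, squaring the projection bound only gives $\|(\bm{I}-\bm{Z}\bm{Z}^\top)\bm{A}\|_2^2\le(1+\varepsilon_{\text{SI}})^2\sigma_{k+1}^2$, an error of order $\sigma_{k+1}^2$ rather than $\varepsilon_{\text{SI}}\sigma_{k+1}^2$; and $\|\bm{A}\bm{A}^\top-\bm{Q}\bm{Q}^\top\bm{A}\bm{A}^\top\|_2$ is generically as large as $\sigma_1\cdot\|(\bm{I}-\bm{Q}\bm{Q}^\top)\bm{A}\|_2\approx\sigma_1\sigma_{k+1}$, which dwarfs $\varepsilon_{\text{SI}}\sigma_{k+1}^2$ whenever $\sigma_1\gg\sigma_{k+1}$. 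The additive $\varepsilon_{\text{SI}}\sigma_{k+1}^2$ per-vector guarantee is precisely the contribution of~\cite{musco2015randomized} that does \emph{not} follow from the classical Frobenius/spectral low-rank approximation bounds; it requires their finer Rayleigh--Ritz analysis tracking each individual Ritz value against the amplified spectrum. As written, your derivation of item~3 would fail, and you would need to reproduce (or simply cite) that argument, as the paper itself does.
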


\subsection{Frequent Directions}
\label{sec:frequent-directions}

\textbf{Frequent Directions} (FD)~\cite{liberty2013simple,ghashami2016frequent} is a deterministic matrix sketch algorithm in the row-update model. It solves Problem~\ref{prob:full} with optimal space and update time complexity.
At initialization, FD creates an all-zero matrix $\bm{B}$ with the size of $2\ell\times d$, where $\ell$ is an input parameter satisfying $\ell\le d/2$.
To process each arriving row vector $\bm{a}_i$, FD first checks whether \(\bm{B}\) contains any zero-valued rows. If so, it inserts \(\bm{a}_i\) into one of them. Otherwise, all $2\ell$ rows are filled, and it performs a singular value decomposition \([\bm{U}, \bm{\Sigma}, \bm{V}^\top] = \texttt{SVD}(\bm{B})\), rescales the ``directions” in \(\bm{B}\) using the \(\ell\)-th largest singular value \(\sigma_\ell\), and ``forgets'' the least significant direction in the column space of \(\bm{V}^\top\). The updated \(\bm{\Sigma}^\prime\) and sketch matrix \(\bm{B}^\prime\) are computed as:
\begin{equation*}
    \bm{\Sigma}^\prime = \text{diag}\left(\sqrt{\sigma_1^2 - \sigma_\ell^2}, \dots, \sqrt{\sigma_{\ell-1}^2 - \sigma_\ell^2}, 0, \dots, 0 \right),
\end{equation*}
and \(\bm{B}^\prime = \bm{\Sigma}^\prime \bm{V}^\top\), where \(\sigma_1^2 \ge \sigma_2^2 \ge \dots \ge \sigma_{2\ell}^2\). The updated \(\bm{B}^\prime\) will have at least \(\ell + 1\) zero rows, allowing the process to continue updating the sketch matrix \(\bm{B}\) as new row vectors arrive. The amortized computational cost per row is \(O(d\ell)\).

Frequent Directions has also been adapted to the matrix sketching variants defined in Section~\ref{sec:prob-defn}—for example, PFD~\cite{shi2021time} for Problem~\ref{prob:attp} (ATTP Sketch) and DS-FD~\cite{yin2024optimal} for Problem~\ref{prob:sw-fd} (Matrix Sketching over Sliding Window). However, a common drawback of these methods is that they fail to achieve the optimal update time complexity of the original Frequent Directions algorithm, especially under tight approximation error constraints $\varepsilon$. In this paper, we develop a new framework with theoretical guarantees, termed \sidsfd, to overcome this limitation for Problems~\ref{prob:sw-fd},~\ref{prob:attp},~\ref{prob:sw-amm},~\ref{prob:dsfd}, and~\ref{prob:dswfd}. %

\section{Our Framework: \sidsfd}

In this section, we present a framework, termed \sidsfd, for accelerating the update time complexity of the baseline algorithms solving the tracking matrix sketching problems introduced in Section~\ref{sec:prob-defn}. We first select Problem~\ref{prob:sw-fd} as a representative case and describe how our \sidsfd\ framework is applied. We also provide rigorous end-to-end theoretical proofs for the error bound, time, and space complexity. In later sections, we show how the \sidsfd\ framework can be applied to optimize other problems.

\subsection{\sidsfd Framework Description}

\subsubsection{High-level idea}

Taking Problem~\ref{prob:sw-fd} as an example, \sidsfd shares a common high-level idea and skeleton with Fast-DS-FD: we implicitly maintain a matrix sketch $\bm{B}_\text{W}$ to approximate the matrix $\bm{A}_\text{W}$ consisting of the most recent $N$ elements in the stream. As time progresses, matrix $\bm{A}_\text{W}$ continuously changes as new vectors arrive and the old vectors slide out of the window. When the difference between the matrix $\bm{A}_\text{W}$ and the sketch matrix $\bm{B}_\text{W}$ exceeds the error threshold $\theta$, this discrepancy is computed, and a \textbf{snapshot} containing information about the difference and the current timestamp is recorded. As the sliding window advances, snapshots with old timestamps expire. The remaining recorded snapshots can be summed to form the sketch matrix $\bm{B}_\text{W}$ that approximates matrix $\bm{A}_\text{W}$ within the current sliding window.%

In Fast-DS-FD~\cite{yin2024optimal}, the difference between $\bm{A}_\text{W}$ and $\bm{B}_\text{W}$ is computed by singular value decomposition (SVD). It calculates the singular values and vectors of the covariance residual matrix $\bm{C}^\top\bm{C} = \bm{A}_\text{W}^\top \bm{A}_\text{W} - \bm{B}_\text{W}^\top \bm{B}_\text{W}$, and components where the singular values are greater than $\theta$ are recorded as snapshots. If we set $\theta=O(1)\cdot\varepsilon\left\|\bm{A}_\text{W}\right\|_F^2$, then we will have \[\left\|\bm{A}_\text{W}^\top \bm{A}_\text{W} - \bm{B}_\text{W}^\top \bm{B}_\text{W}\right\|_2 =\left\|\bm{C}^\top\bm{C}\right\|_2\le \theta = O(1)\cdot\varepsilon\left\|\bm{A}_\text{W}\right\|_F^2,\] which satisfies the covariance error bound.  Since $\bm{A}_\text{W}$ (consequently $\bm{C}$) changes with each incoming row vector, there may be new singular values greater than $\theta$. Therefore, SVD must be performed on $\bm{C}^\top\bm{C}$ for each update. When $1/\varepsilon > d$, the residual matrix $\bm{C}$ reaches a size of $d \times d$, and each SVD update will cost $O(d^3)$ time. For high-dimensional vector streams, this complexity is unacceptable.

As mentioned in the introduction, \sidsfd uses RandNLA operators, such as Power Iteration and Simultaneous Iteration, to overcome the time bottleneck caused by SVD. This is because we only need singular values greater than the error threshold $\theta$, while the time-consuming full SVD computes all singular values, resulting in computational waste. However, since RandNLA methods are approximate algorithms, directly summing the approximate snapshots using the method of Fast-DS-FD would introduce significant \textbf{cumulative restoring norm error} in the $\bm{B}_\text{W}$, in the form of
\begin{equation}
    \label{eq:norm-error}
\left\|\sum_{i=T-N+1}^T \left(\bm{Z}_i \bm{Z}_i^\top {\bm{C}_i^\prime}^\top \bm{C}_i^\prime + {\bm{C}_i^\prime}^\top \bm{C}_i^\prime \bm{Z}_i \bm{Z}_i^\top - 2\bm{Z}_i \bm{Z}_i^\top {\bm{C}_i^\prime}^\top \bm{C}_i^\prime \bm{Z}_i \bm{Z}_i^\top\right) \right\|_2.
\end{equation}

\begin{wrapfigure}{r}{0.4\textwidth}
\centering
\includegraphics[width=\linewidth]{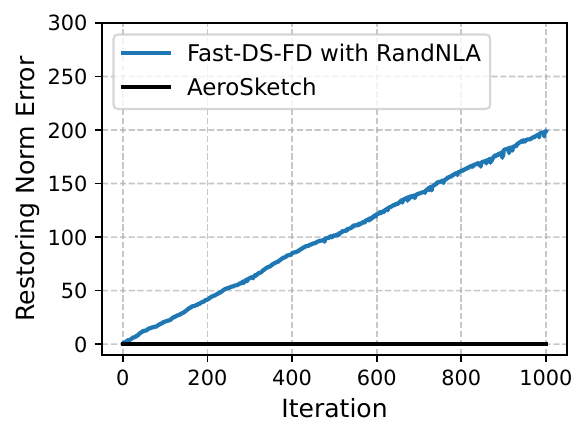}
\caption{Directly replacing the SVD in Fast-DS-FD with RandNLA would introduce cumulative restoring norm errors as shown in Eq.~\eqref{eq:norm-error}, whereas our \sidsfd does not.}
\label{fig:norm-error}
\end{wrapfigure}

If $\bm{Z}_i$'s are formed by exact singular vectors of ${\bm{C}_i^\prime}^\top \bm{C}_i^\prime$ as in Fast-DS-FD, then Eq.~\eqref{eq:norm-error} is implicitly zero. However, if the $\bm{Z}_i$'s are approximate singular vectors, the restoring norm error becomes explicitly nonzero and can accumulate as the update proceeds, as shown by the blue line in Figure~\ref{fig:norm-error} (See Appendix~\ref{app:norm-error} for a detailed derivation). It is also difficult to derive an upper bound for this error. Fortunately, we found that by redesigning the saved snapshot ($\bm{Z}_i$ and $\bm{Z}_i^\top {\bm{C}^\prime_i}^\top {\bm{C}^\prime_i}$ at line 15 of Algorithm~\ref{alg:sidsfd-update}) and the restoring method of $\bm{B}_\text{W}$ involving the snapshots (Algorithm~\ref{alg:pidsfd-query}), this cumulative error can be eliminated. Furthermore, other execution procedures and parameter selections must be meticulously configured to bound the error, space/communication costs, and time complexity. Through analysis and experiments, we find that the restoring norm error becomes zero (as shown by the black line in Figure~\ref{fig:norm-error}) and the algorithm's error now stems solely from the FD reduce operation and the residual matrix $\bm{C}_{T-N}$ at time $T-N$, both of which are more tractable for error analysis. Additionally, we discovered that the auxiliary sketch and queue in Fast-DS-FD can be removed, saving their space and time overhead.

\begin{algorithm}[h]
    \caption{ \sidsfd : \textsc{Initialize}($d, \varepsilon, \theta$)}
    \label{alg:sidsfd-init}
    \KwIn{
        $d$: dimension of input vectors of \textsc{Update};
        $\varepsilon$: upper error bound;
        $\theta$: dump threshold.  %
    }
    $\ell\gets\lceil\frac{2}{\varepsilon}\rceil$,\quad $\bm{C}_0\leftarrow \bm{0}_{2\ell\times d}$\;
    $\mathcal{S}\leftarrow \mathsf{queue}.\textsc{Initialize()}$ \tcp*[f]{Queue of snapshots}\;
\end{algorithm}

\subsubsection{Data Structures}

Algorithm~\ref{alg:sidsfd-init} shows the data structures used by \sidsfd. At initialization, \sidsfd requires the input vector dimension $d$, the upper bound on the relative covariance error $\varepsilon$, and a dump threshold $\theta$. It then initializes an empty matrix $\bm{C}_0$ of size $2\ell \times d$ and a queue of snapshots $\mathcal{S}$.

\begin{algorithm}[h]
    \caption{\sidsfd : \textsc{Update}($\bm{a}_i$)}
    \label{alg:sidsfd-update}
    \KwIn{$\bm{a}_i$: the row vector arriving at timestamp $i$.}

    Insert $\bm{a}_i$ into an empty row of $\bm{C}_{i-1}$ to get $\tilde{\bm{C}}_i$\;

    \eIf(\tcp*[f]{$\tilde{\bm{C}}_i$ has no zero rows}){$\text{rows}(\tilde{\bm{C}}_i) == 2\ell$}{
        $[\bm{U}_i, \bm{\Sigma}_i, \bm{V}_i^\top] \gets \mathtt{SVD}(\tilde{\bm{C}}_i)$  \tcp*[f]{$\bm{\Sigma}_i=\text{diag}\left(\sigma_1, \sigma_2 \dots,\sigma_{\min\left(2\ell, d\right)}\right)$}\;
        $\bm{C}_i^\prime \gets \sqrt{\max\left(\bm{\Sigma}_i^2 - \sigma_\ell^2\bm{I}, \bm{0}\right)} \, \bm{V}_i^\top$ \tcp*[f]{element-wise max}\;
    }{
        ${\bm{C}}_i^\prime \gets \tilde{\bm{C}}_i$\;
    }

    $\hat{\sigma}_1^2, \hat{\bm{v}}_1 \gets \mathtt{power\_iteration}({\bm{C}^\prime_i}^\top , k=\lceil\log_2 d \rceil+1)$ \;%

    \eIf{$\hat{\sigma}_1^2 \geq \frac{\theta}{2}$}{
    \For{$j=1$ \KwTo $\lceil \log_2 \min\left(\ell, d\right) \rceil$}{
    $k \gets \min(2^j, \ell, d) $\;
    $[\bm{Z}_i^\prime, \hat{\bm{\Sigma}}] \gets \mathtt{simul\_iter}({\bm{C}^\prime_i}^\top , k, \varepsilon_{\text{SI}}=0.4)$ \tcp*[f]{$\hat{\bm{\Sigma}}=\text{diag}\left(\hat{\sigma}_1, \hat{\sigma}_2 \dots,\hat{\sigma}_{k}\right)$}\;%
    \If{$\hat{\bm{\sigma}}^2_k<\theta$}{
    $\xi \gets \max_{\xi} \{\hat{\bm{\sigma}}_{\xi}^2\ge\theta\}$\;
    $\bm{Z}_i \gets {\bm{Z}_i^{\prime}}[:,1:\xi\ ]$\;
    $\mathcal{S}.\textsc{append}((\bm{Z}_i,\bm{Z}_i^\top {\bm{C}^\prime_i}^\top {\bm{C}^\prime_i}), s=\mathcal{S}[-1].t+1, t=i)$ \tcp*[f]{$\mathcal{S}[-1]$ is the tail element in queue $\mathcal{S}$}\;
    $\bm{C}_{i} \gets {\bm{C}^\prime_i} - {\bm{C}^\prime_i}\bm{Z}_i\bm{Z}_i^\top$  \;
    break\;
    }
    }
    $\bm{C}_i\gets \bm{C}_i^\prime$\;
    }{
    $\bm{C}_i\gets \bm{C}_i^\prime$\;
    }
\end{algorithm}

\subsubsection{Update Algorithm}

Algorithm~\ref{alg:sidsfd-update} shows how \sidsfd updates the sketch matrix $\bm{C}_i$ when a new row vector $\bm{a}_i$ arrives at timestamp $i$. First, we insert $\bm{a}_i$ into an empty row of $\bm{C}_{i-1}$ to form $\tilde{\bm{C}}_i$ (line 1). If the number of rows in $\tilde{\bm{C}}_i$ reaches $2\ell$, we apply lines 3-4, the FD reduction described in Section~\ref{sec:frequent-directions} to form $\bm{C}_i^\prime$. Otherwise, we simply set $\bm{C}_i^\prime$ to $\tilde{\bm{C}}_i$ (line 6). Next, we perform Power Iteration on matrix $\bm{C}_i^\prime$ to estimate the squared top singular value $\hat{\sigma}_1^2$ (line 7). If it exceeds half the dump threshold, i.e., $\hat{\sigma}_1^2 > \theta / 2$ (line 8), we apply Simultaneous Iteration to $\bm{C}_i^\prime$ to compute the approximate squared top-$k$ singular values greater than $\theta$ along with the corresponding singular vectors. Thus, instead of recomputing the exact singular vectors, we quickly approximate them in a few iterations.

However, there is a catch: before applying Simultaneous Iteration, we do not know how many squared singular values are greater than $\theta$, and thus we cannot predefine $k$, which is a required parameter of Simultaneous Iteration. To resolve this, we use the classic \textit{doubling} technique. We start by computing the top-2 approximate singular values and vectors of $\bm{C}_i^\prime$ by setting $k = 2$ in Simultaneous Iteration, and then continue doubling $k$ (i.e., $k = 4, 8, 16, \dots, 2^j$, as in lines~9-11). Eventually, we find the value $k = 2^j$ such that the $k$-th top estimated squared singular value is smaller than $\theta$ (line~12). This implies the existence of some $\xi$ satisfying $2^{j-1} \le \xi < 2^j$ such that the $\xi$-th largest estimated squared singular value is at least $\theta$, while the $(\xi+1)$-th and smaller estimated squared singular values fall below $\theta$ (line 13).

Once $\xi$ is determined, we extract the approximate top-$\xi$ singular vectors as a matrix denoted by $\bm{Z}_i$ (line~14). We then save a snapshot containing two matrices $\bm{Z}_i$ and $\bm{Z}_i^\top {\bm{C}_i^\prime}^\top \bm{C}_i^\prime$ along with the current timestamp $i$, and push it into the queue $\mathcal{S}$ (line~15). Next, we remove the subspace spanned by $\bm{Z}_i$ from $\bm{C}_i^\prime$ to obtain $\bm{C}_i$ (line~16). Finally, the for-loop and the update procedure terminate (line~17).

\begin{algorithm}[h]
    \caption{ \sidsfd : \textsc{Query}(lb, ub)}
    \label{alg:pidsfd-query}
    \KwIn{$lb, ub$: start and end timestamp of the query. For Problem~\ref{prob:sw-fd}, $lb=T-N$, $ub=T$.}%
    $\bm{B} \gets \bm{C}_{ub}^\top \bm{C}_{ub}$\;
    \ForAll{$((\bm{Z}_j, \bm{Z}_j^\top {\bm{C}_j^\prime}^\top \bm{C}^\prime_j), s, t=j)$ of $\mathcal{S}$}{
    \If{$lb< t=j\le ub$}{
    $\bm{B} \gets \bm{B} + \boldsymbol{Z}_j{\boldsymbol{Z}_j}^{\top}{\boldsymbol{C}^\prime_j}^{\top}\boldsymbol{C}^\prime_j+{\boldsymbol{C}^\prime_j}^{\top}\boldsymbol{C}^\prime_j\boldsymbol{Z}_j{\bm{Z}_j}^{\top}-\boldsymbol{Z}_j {\bm{Z}_j}^{\top}{\boldsymbol{C}^\prime_j}^{\top}\boldsymbol{C}^\prime_j\boldsymbol{Z}_j\bm{Z}_j^{\top}$ \;
    }
    }
    $[\bm{V}, \bm{\Lambda}, \bm{V}^\top] \gets \mathtt{eigen\_decomposition}(\bm{B})$ \tcp*[f]{$\bm{\Lambda}=\text{diag}\left(\lambda_1, \lambda_2 \dots,\lambda_{d}\right), \lambda_1\ge \lambda_2 \ge \dots \ge \lambda_d$} \;
    $\hat{\bm{B}}\gets\sqrt{\max\left(\bm{\Lambda} - \bm{I}\lambda_\ell, \bm{0}\right)} \, \bm{V}^\top$\;
    \Return $\hat{\bm{B}}[1:\ell, :]$
\end{algorithm}

\subsubsection{Query Algorithm}

Algorithm~\ref{alg:pidsfd-query} shows how to construct the sketch matrix $\hat{\bm{B}}$ for the current sliding window $(T - N,T]$. Since we store snapshots of the form $(\bm{Z}_i, \bm{Z}_i^\top {\bm{C}_i^\prime}^\top \bm{C}_i^\prime)$ for certain timestamps $i$ in the snapshot queue $\mathcal{S}$, we can compute $\bm{Z}_i \bm{Z}_i^\top {\bm{C}_i^\prime}^\top \bm{C}_i^\prime + {\bm{C}_i^\prime}^\top \bm{C}_i^\prime \bm{Z}_i \bm{Z}_i^\top - \bm{Z}_i \bm{Z}_i^\top {\bm{C}_i^\prime}^\top \bm{C}_i^\prime \bm{Z}_i \bm{Z}_i^\top$ by matrix multiplications and additions. We then sum it over all $i$ within the sliding window and add them to the residual matrix $\bm{C}_T^\top \bm{C}_T$ to obtain matrix $\bm{B}$, as shown in lines 1-4.  Finally, we perform an eigendecomposition of $\bm{B}$ (line~5), reduce it by the $\ell$-th largest eigenvalue (line~6), and return the top-$\ell$ rows of the resulting matrix as the sketch matrix $\hat{\bm{B}}\in\mathbb{R}^{\ell\times d}$ (line~7).

\subsubsection{General Unnormalized Model}

\begin{algorithm}[h]
    \caption{ \textsc{ML}-\sidsfd}
    \label{alg:mlsidsfd}
    \SetKwProg{Fn}{Function}{:}{}

    \Fn{\textup{\textsc{Initialize}($d, N, R, \varepsilon$)}}{
        \KwIn{$d$: dimension of the vectors in the stream; $N$: sliding window size; $R$: upper bound of squared norms; $\varepsilon$: desired relative covariance error bound.}
        $L\gets \lceil \log_2 R\rceil; M\gets []$\;
        \For{$i=0$ \KwTo $L-1$}{
            $M.\texttt{append}(\sidsfd.\textsc{Initialize}(d, \varepsilon, \theta=2^i\varepsilon N))$\;
        }
    }

    \Fn{\textup{\textsc{Update}($\bm{a}_i$)}}{
        \KwIn{$\boldsymbol{a}_i$: the row vector arriving at timestamp $i$.}

        \For{$j = 0$ \KwTo $L-1$} {
            \While{$len(M[j].\mathcal{S})> \frac{8}{\varepsilon}$ or $M[j].\mathcal{S}[0].t \leq i-N$} {
                $M[j].\mathcal{S}$.\textsc{Popleft}() \;
            }
            \If{$\left\|\boldsymbol{a}_i\right\|_2^2\geq \theta$} {
                $M[j].\mathcal{S}$.\textsc{append}($\bm{a}_i$, $s=M[j].\mathcal{S}[-1].t$, $t = i$)
            }
            \Else {
                $M[j].\textsc{Update}(\boldsymbol{a}_i)$ \;
            }
        }
    }

    \Fn{\textup{\textsc{Query}()}}{
        $j\gets\min_{j} \{1\le M[j].\mathcal{S}[0].s \le T-N+1\}$\;
        \Return $\mathsf{FD}$($M[j]$.\textsc{Query}($lb=T-N, ub=T$), $[\bm{a}_i \in M[j].\mathcal{S}]$) \;
    }
\end{algorithm}

For the actual implementation of \sidsfd\ for Problem~\ref{prob:sw-fd}, we adopt the \textit{parallel multi-level} technique, as described in Algorithm~\ref{alg:mlsidsfd}. Briefly, we use this technique for two main reasons: (1) As new rows arrive and the sliding window advances, old row vectors (those arriving before time $T - N$) become outdated. Therefore, we first check for and remove expired snapshots from the queue $\mathcal{S}$ (line 7).  (2) We have to set the parameter $\theta = O(1)\cdot\varepsilon\left\|\bm{A}_{T-N,T}\right\|_F^2$ at initialization, but $\left\|\bm{A}_{T-N,T}\right\|_F^2$ can vary dramatically between very small and very large values as the sliding window progresses. To handle this variability, we instantiate multiple parallel \sidsfd\ levels (lines 3-4). This multi-level sketching strategy is widely used in streaming algorithms under the sliding window model~\cite{lee2006simpler,yin2024optimal,xian2025optimal,yao2025optimal}. This technique guarantees that there will always be at least one level $j$ for which the error threshold satisfies $\theta\le \varepsilon \left\|\bm{A}_{T-N,T}\right\|_F^2$. Specifically:

\begin{fact}
    \label{prop:ml}
    There always exists one level $j=\lfloor \log_2 \frac{\|\bm{A}_{T-N,T}\|_F^2}{N}\rfloor$ such that%
    \[
        \theta = 2^j \varepsilon N \le \varepsilon \|\bm{A}_{T-N,T}\|_F^2 \le 2\theta,
    \]
    and we can find the required $j$ using line~14 of Algorithm~\ref{alg:mlsidsfd}, in the same way as Algorithm~7 does in~\cite{yin2024optimal}.
\end{fact}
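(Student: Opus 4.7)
The plan is to verify the two claims in the fact separately: the existence of a level $j$ satisfying the sandwich $\theta \le \varepsilon\|\bm{A}_{T-N,T}\|_F^2 \le 2\theta$, and the correctness of the retrieval rule on line~14 of Algorithm~\ref{alg:mlsidsfd}.

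I would begin by bounding the window's Frobenius mass. Because the sliding window contains exactly $N$ rows and each satisfies $\|\bm{a}_i\|_2^2 \in [1, R]$, summing gives $N \le \|\bm{A}_{T-N,T}\|_F^2 \le NR$. Writing $j^{*} := \lfloor \log_2 \tfrac{\|\bm{A}_{T-N,T}\|_F^2}{N} \rfloor$, this immediately forces $0 \le j^{*} \le \lfloor \log_2 R \rfloor \le L-1$, so $M[j^{*}]$ is actually instantiated by lines~3-4. Then, by the defining inequality of the floor function,
\[
    j^{*} \;\le\; \log_2 \frac{\|\bm{A}_{T-N,T}\|_F^2}{N} \;<\; j^{*}+1.
\]
Exponentiating and multiplying through by $\varepsilon N$ rearranges to
\[
    2^{j^{*}}\varepsilon N \;\le\; \varepsilon\|\bm{A}_{T-N,T}\|_F^2 \;<\; 2\cdot 2^{j^{*}}\varepsilon N,
\]
which, upon substituting $\theta = 2^{j^{*}}\varepsilon N$, is precisely the asserted sandwich. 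This settles the first half in a single floor-function manipulation.

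For the retrieval claim, I would argue that $M[j].\mathcal{S}[0].s$ is monotone in $j$ and that the smallest $j$ with $M[j].\mathcal{S}[0].s \le T-N+1$ coincides with $j^{*}$. The update loop on lines~7-12 routes each incoming row either into the queue (when $\|\bm{a}_i\|_2^2 \ge \theta = 2^j\varepsilon N$) or into the underlying \sidsfd state, while the discard rule on lines~8-9 evicts both expired snapshots and those beyond the cap $8/\varepsilon$. Because higher levels have strictly larger thresholds, they accumulate queue entries more slowly and therefore retain older heads; this induces the monotonicity in $j$ required for line~14 to pick exactly the level whose threshold first becomes tight against the current window mass. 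The selection-correctness analysis of Algorithm~7 in~\cite{yin2024optimal} transfers verbatim on this point, since the relevant queue invariants are identical.

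The main obstacle is not the sandwich identity, which is elementary, but confirming that the queue-maintenance invariants of our \sidsfd framework match those assumed by the retrieval analysis of~\cite{yin2024optimal}. I would confine the only genuinely new work to checking that the structural simplifications flagged earlier in the text—in particular, the removal of the auxiliary sketch and auxiliary queue present in Fast-DS-FD—do not disturb the monotonicity of $M[j].\mathcal{S}[0].s$ across levels; once that is verified, the rest of the argument is directly inherited.
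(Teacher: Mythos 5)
Your proposal is correct and follows essentially the same (implicit) route as the paper: the sandwich is exactly the floor-function manipulation $2^{j^*}N \le \|\bm{A}_{T-N,T}\|_F^2 < 2^{j^*+1}N$ combined with $\|\bm{a}_i\|_2^2 \in [1,R]$ to place $j^*$ among the instantiated levels, and the retrieval claim is deferred to the analysis of Algorithm~7 in~\cite{yin2024optimal}, just as the Fact itself does. The only point worth noting is the boundary case where $R$ is an exact power of two and every row in the window has squared norm exactly $R$, in which $j^* = \log_2 R = L$ exceeds the largest instantiated level $L-1$; this artifact is inherited from the paper's choice $L = \lceil\log_2 R\rceil$ rather than from your argument.
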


\subsection{Analysis}

We provide the error bound, asymptotic space and time complexities of Algorithm~\ref{alg:mlsidsfd} in Theorem~\ref{thm:sw}.
\begin{theorem}
    \label{thm:sw}
    ML-\sidsfd (Algorithm~\ref{alg:mlsidsfd}) solves the Tracking Approximate Matrix Sketch over Sliding Window (Problem~\ref{prob:sw-fd}) with space complexity $O\left(\frac{d}{\varepsilon}\log R\right)$ and amortized time complexity $O\left(\frac{d}{\varepsilon}\log d \log R\right)$ per update, with success probability (the probability that the output sketch satisfies the covariance error bound) at least
    \begin{equation}
        \label{eq:prob}
        \frac{99}{100}\left(1- \frac{2}{\pi\sqrt{e}}\cdot\frac{1}{\sqrt{d \log_2 d}}\right).
    \end{equation}

\end{theorem}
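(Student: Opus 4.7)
I would organize the proof in three layers: correctness of the covariance-error bound, the space bound, and the amortized update-time bound, with the probability estimate tracked along the way. By Fact~\ref{prop:ml}, there always exists a ``good'' level $j^{\ast}$ at which the dump threshold satisfies $\theta = 2^{j^{\ast}}\varepsilon N \in \left[\tfrac{1}{2}\varepsilon\|\bm{A}_\text{W}\|_F^2,\ \varepsilon\|\bm{A}_\text{W}\|_F^2\right]$, and essentially the entire analysis can be carried out on this single \sidsfd instance; the other levels are needed only for the $O(\log R)$ search in line~14 of Algorithm~\ref{alg:mlsidsfd}.

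\textbf{Correctness.} The core of the proof is the algebraic identity
\[
\bm{Z}\bm{Z}^\top {\bm{C}'}^\top \bm{C}' + {\bm{C}'}^\top \bm{C}' \bm{Z}\bm{Z}^\top - \bm{Z}\bm{Z}^\top {\bm{C}'}^\top \bm{C}' \bm{Z}\bm{Z}^\top \;=\; {\bm{C}'}^\top \bm{C}' \;-\; (\bm{C}' - \bm{C}'\bm{Z}\bm{Z}^\top)^\top (\bm{C}' - \bm{C}'\bm{Z}\bm{Z}^\top),
\]
which holds for \emph{any} $\bm{Z}$ with orthonormal columns, regardless of whether $\bm{Z}$ contains exact singular vectors. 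Summing this identity over the snapshots restored in Algorithm~\ref{alg:pidsfd-query} and adding the current residual ${\bm{C}_T}^\top \bm{C}_T$ telescopes to $\bm{A}_\text{W}^\top \bm{A}_\text{W}$ minus (i) the FD-reduction loss inside the buffer and (ii) the initial residual ${\bm{C}_{T-N}}^\top \bm{C}_{T-N}$. Piece~(i) is absorbed by the standard Frequent Directions argument at $O(\|\bm{A}_\text{W}\|_F^2/\ell) = O(\varepsilon\|\bm{A}_\text{W}\|_F^2)$. Piece~(ii) satisfies $\|{\bm{C}_{T-N}}^\top \bm{C}_{T-N}\|_2 \le \theta$: any direction whose true squared singular value exceeds $\theta$ is caught by Power Iteration through the $\hat{\sigma}_1^2 \ge \sigma_1^2/2$ guarantee of Corollary~\ref{thm:power} matched against the halved threshold $\theta/2$ in line~8 of Algorithm~\ref{alg:sidsfd-update} and is then dumped. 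The residual cost of using \emph{approximate} top-$k$ components inside each snapshot is bounded by Theorem~\ref{thm:simul}(3) with $\varepsilon_\text{SI}=0.4$ and $\sigma_{k+1}^2 \le \theta$, and the final $\mathsf{FD}$-style eigendecomposition truncation in lines~5--7 of Algorithm~\ref{alg:pidsfd-query} keeps this extra term at spectral norm $O(\varepsilon \|\bm{A}_\text{W}\|_F^2)$. Combining these pieces yields the desired covariance bound at level $j^{\ast}$.

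\textbf{Space and time.} For space, a charging argument at level $j^{\ast}$ gives $\tfrac{\theta}{2}\sum_i \xi_i \le \|\bm{A}_\text{W}\|_F^2$, because each dumped direction carries squared mass $\ge \theta/2$ and dumps strictly shrink the residual; hence $\sum_i \xi_i = O(1/\varepsilon)$. Combined with the eviction loop that caps snapshots at $8/\varepsilon$ and the buffer $\bm{C}$ of width $2\ell$, a single level fits in $O(d/\varepsilon)$, and the $L = O(\log R)$ parallel levels yield $O((d/\varepsilon)\log R)$. For time, per arriving row and per level: the FD reduction amortizes to $O(d\ell) = O(d/\varepsilon)$; Power Iteration on the $d \times 2\ell$ buffer with $k=\lceil\log_2 d\rceil + 1$ iterations costs $O((d/\varepsilon)\log d)$ by Corollary~\ref{thm:power}; and each triggered round of the doubling Simultaneous Iterations costs a geometric $O(d\ell\,\xi\log d)$ that amortizes against $\sum_i \xi_i = O(1/\varepsilon)$ to at most $O((d/\varepsilon)\log d)$ per update. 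The dominant $O((d/\varepsilon)\log d)$ per level, multiplied by $L = O(\log R)$ levels, gives $O((d/\varepsilon)\log d \log R)$. For a single query, independence of the failure events in Corollary~\ref{thm:power} (Power Iteration at level $j^{\ast}$) and Theorem~\ref{thm:simul} (the final Simultaneous Iteration) produces the product bound in Eq.~\eqref{eq:prob}.

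\textbf{Main obstacle.} The hardest part will be showing that the new snapshot format $(\bm{Z}_i,\ \bm{Z}_i^\top {\bm{C}'_i}^\top \bm{C}'_i)$ together with the restoration rule in Algorithm~\ref{alg:pidsfd-query} genuinely eliminate the cumulative restoring norm error depicted in Eq.~\eqref{eq:norm-error} and Figure~\ref{fig:norm-error}, even when the $\bm{Z}_i$'s are only approximate top singular directions --- namely, that the telescoping identity above is \emph{exactly} zero rather than an inequality that accumulates with the number of snapshots --- and, in parallel, that the one-sided Power-Iteration guarantee ($\hat\sigma_1^2 \ge \sigma_1^2/2$) correctly aligns the $\theta/2$ decision threshold in line~8 with the nominal dump threshold $\theta$ so that no large direction can survive in the residual. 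These two alignments drive both the correctness argument and the $\sum_i \xi_i = O(1/\varepsilon)$ charging bound that underpins the space and time analyses.
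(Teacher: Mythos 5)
Your proposal matches the paper's own proof in all essentials: the same exact-restoration identity eliminating the cumulative restoring error, the same decomposition into an FD-reduction term plus the residual $\bm{C}_{T-N}^\top\bm{C}_{T-N}$, the same $\theta/2$-vs-$\sigma_1^2/2$ alignment of Power Iteration with the dump threshold, the same charging argument giving $\sum_i k_i = O(1/\varepsilon)$ for both space and time, and the same product-of-success-probabilities bound. The only deviations are immaterial constants (the paper proves $\|\bm{C}_{T-N}^\top\bm{C}_{T-N}\|_2 \le 4\theta$ rather than $\le\theta$, since $\varepsilon_{\text{SI}}=0.4$ gives the factor $(1+\varepsilon_{\text{SI}})^2/(1-\varepsilon_{\text{SI}})\le 4$, and its Term~1 bound carries an extra $\|\bm{C}_{T-N}\|_F^2/\ell$ that is likewise $O(\varepsilon\|\bm{A}_{\text{W}}\|_F^2)$), so no gap results.
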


\begin{figure}[h]
    \centering
    \includegraphics[width=0.4\linewidth]{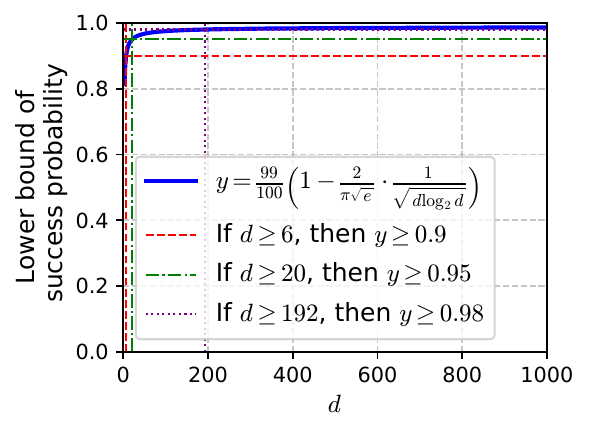}
    \caption{Success probability lower bound with respect to $d$.}
    \label{fig:prob-lb}
\end{figure}

When $d$ is not very small, the success probability lower bound Eq.~\eqref{eq:prob} is close to $\frac{99}{100}$, which is a high probability.\footnote{Note that this is a ``lower bound'' of the success probability, and the actual success probability could be much higher. In fact, in our real experiments, there were no instances where the output sketch failed to satisfy the covariance error bound.} We plot the function graph of the success probability lower bound Eq.~\eqref{eq:prob} with respect to the vector dimension \(d\) in Figure~\ref{fig:prob-lb}. As the dimension $d$ increases, the success probability lower bound of the algorithm quickly approaches \( \frac{99}{100} \). When \( d \geq  6\), the success probability is at least 0.9, and when \( d \geq 20 \) and \( d \geq 192 \), the success probability is at least 0.95 and 0.98, respectively. This indicates that when \( d \) is not very small, the success probability lower bound is close to \( \frac{99}{100} \).

\subsubsection{Reducing the Failure Probability from Eq.~\eqref{eq:prob} to Arbitrarily Small $\delta$}

Furthermore, with a simple modification to Algorithm~\ref{alg:sidsfd-update}, we can boost the success probability of the algorithm from at most $99/100$ in Eq.~\eqref{eq:prob} to $1-\delta$, where $\delta$ can be made arbitrarily close to~0. At the same time, the time complexity only increases by a multiplicative factor of $\log^2(1/\delta)$ compared with the result in Theorem~\ref{thm:sw}, i.e., it increases from $O\left(\frac{d}{\varepsilon}\log d \log R\right)$ to $O\left(\frac{d}{\varepsilon}\log d \log R\log^2(1/\delta)\right)$, while the space complexity remains the same as in Theorem~\ref{thm:sw}.

More specifically, this probability-amplification technique replaces line~11 of Algorithm~\ref{alg:sidsfd-update} with Algorithm~\ref{alg:amplify}. The key idea is to replace the original single Simultaneous Iteration with $r=\log(1/\delta)$ independent Simultaneous Iterations, and then select the iteration whose result minimizes the matrix spectral norm of the $\bm{C}_i^\prime-\bm{C}_i^\prime\bm{Z}\bm{Z}^\top$. Formally, select $Z_{r^*}$ where
\[
    r^* = \operatorname*{argmin}_{h \in [r]}
    \left\| \bm{C}_i^\prime - \bm{C}_i^\prime \bm{Z}_h \bm{Z}_h^\top \right\|_2.
\]

However, there is a pitfall here: computing the matrix spectral norm $\left\| \bm{C}_i^\prime - \bm{C}_i^\prime \bm{Z}_h \bm{Z}_h^\top \right\|_2$ exactly in each round incurs significant time overhead. Therefore, instead of computing the exact spectral norm, we repeatedly run $s=2 \log_3 \frac{2}{\delta}$ independent rounds of Power Iteration on matrix $\bm{C}_i^\prime - \bm{C}_i^\prime \bm{Z}_h \bm{Z}_h^\top$ to obtain $s$ estimates (lines~6-7), take the maximum among them as the estimate of $\left\| \bm{C}_i^\prime - \bm{C}_i^\prime \bm{Z}_h \bm{Z}_h^\top \right\|_2$ (line~8), and finally choose the round with the smallest estimated value among all $r$ rounds as the final result (line~9). It can be shown that, after this modification, we obtain the following corollary:
\begin{corollary}
    \label{coro:amplify}
    If Algorithm~\ref{alg:amplify} is applied, then ML-\sidsfd\ (Algorithm~\ref{alg:mlsidsfd}) solves the Tracking Approximate Matrix Sketch over Sliding Window (Problem~\ref{prob:sw-fd}) with space complexity $O\!\left(\frac{d}{\varepsilon}\log R\right)$ and amortized time complexity $O\left(\frac{d}{\varepsilon}\log d \log R \log^2 \frac{1}{\delta}\right)$ per update, with success probability at least $1-\delta$, where $\delta$ is a tunable parameter that can be chosen arbitrarily in the interval $(0, 1/100)$.
\end{corollary}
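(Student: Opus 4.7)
The plan is to reduce Corollary~\ref{coro:amplify} to Theorem~\ref{thm:sw} plus a local analysis of the replacement subroutine (Algorithm~\ref{alg:amplify}) that swaps the single Simultaneous-Iteration call at line~11 for a boosted version returning $\bm{Z}_{r^*}$. Because the outer ML-\sidsfd skeleton, the snapshot queue, the multi-level construction, and the dumping/restoring identity are all unchanged, the space bound $O((d/\varepsilon)\log R)$ and the covariance-error calculation from Theorem~\ref{thm:sw} transfer verbatim, provided the replacement subroutine is shown to satisfy the same deterministic projection guarantee as the call it replaces, now with failure probability at most $\delta$ instead of the dimension-dependent constant in Eq.~\eqref{eq:prob}.

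For the success probability I would argue in two stages. First, by Theorem~\ref{thm:simul}, each of the $r=\Theta(\log(1/\delta))$ independent Simultaneous-Iteration runs is \emph{good} (meets the top-$k$ guarantee) with probability at least $99/100$, so by independence the probability that none of the $r$ runs is good is at most $(1/100)^r\le \delta/4$. Second, on the complementary event we must actually \emph{select} a good run via the approximate argmin over $\|\bm{C}_i^\prime-\bm{C}_i^\prime\bm{Z}_h\bm{Z}_h^\top\|_2$. Because Power Iteration always underestimates the top singular value, taking the maximum over $s=2\log_3(2/\delta)$ independent rounds preserves this lower-bound property while driving the lower-tail failure of the estimator to at most $\delta^2/4$ per candidate; a union bound over the $r$ candidates costs another $\delta/4$. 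A short deterministic argument then shows that if every candidate's estimate lies in $[\sigma_h^2/2,\sigma_h^2]$, then selecting the index of the smallest estimate identifies a run whose true squared residual norm is within a factor of~$2$ of the minimum. Combining these events with the Power Iteration at line~7 of Algorithm~\ref{alg:sidsfd-update} (also boosted to the same $\delta$-budget) via one more union bound gives total failure probability at most $\delta$.

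The space bound is immediate: the $r$ runs are executed sequentially, and only the winning pair $(\bm{Z}_{r^*},\bm{Z}_{r^*}^\top{\bm{C}_i^\prime}^\top\bm{C}_i^\prime)$ is retained in $\mathcal{S}$ per dump event. For time, I would recost the per-dump work: $r=O(\log(1/\delta))$ Simultaneous Iterations cost $O(d\ell\log d\cdot\log(1/\delta))$ by Theorem~\ref{thm:simul}, and the $r\cdot s=O(\log^2(1/\delta))$ Power Iterations on the residual cost $O(d\ell\log d\cdot\log^2(1/\delta))$ by Corollary~\ref{thm:power} (evaluated via matrix--vector products without materialising the residual). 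Plugging $\ell=\Theta(1/\varepsilon)$ and reusing the amortisation and multi-level factor $\log R$ already established in the proof of Theorem~\ref{thm:sw} yields the claimed amortised bound $O((d/\varepsilon)\log d\log R\log^2(1/\delta))$ per update.

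The main obstacle I expect is the amplification of the spectral-norm estimator. Corollary~\ref{thm:power}'s lower-tail bound is only $1-O(1/\sqrt{d\log d})$ and becomes vacuous for very small $d$, so obtaining failure probability exponentially small in $\log(1/\delta)$ requires first boosting the base success of each Power Iteration above a fixed constant such as $2/3$---either by increasing the number of iterations $k$ beyond $\lceil\log_2 d\rceil+1$ by a constant factor, or by invoking the sharper full statement of Theorem~4.1(a) of~\cite{kuczynski1992estimating} rather than its simplified corollary. The second subtle point is verifying that a lower-bound estimator, once combined with an argmin rule, still picks a near-optimal candidate; this requires a short but non-obvious deterministic step that must precede the probabilistic union bound. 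Once both points are handled, the remainder reduces to a union bound and a routine restatement of the amortisation and multi-level arguments already carried out for Theorem~\ref{thm:sw}.
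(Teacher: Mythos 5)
Your proposal matches the paper's own argument in all essentials: run $r=\Theta(\log\frac{1}{\delta})$ independent Simultaneous Iterations, estimate each residual $\bigl\|\bm{C}_i^\prime-\bm{C}_i^\prime\bm{Z}_h\bm{Z}_h^\top\bigr\|_2$ by a max over $s=\Theta(\log\frac{1}{\delta})$ Power Iterations (which always underestimate), use the deterministic factor-$\sqrt{2}$ argmin argument plus a union bound, and recost the per-dump work as $O(d\ell k_i\log d\log^2\frac{1}{\delta})$ before reusing the amortisation and multi-level factors from Theorem~\ref{thm:sw}. The one "obstacle" you flag is a non-issue in the paper's treatment: for $d\ge 2$ the base success probability $1-\frac{2}{\pi\sqrt{e}}\cdot\frac{1}{\sqrt{d\log_2 d}}$ of Corollary~\ref{thm:power} is already at least $2/3$, so no extra boosting of the individual Power-Iteration calls is needed (the paper also slightly tightens $\varepsilon_{\text{SI}}$ to $0.2$ to absorb the $\sqrt{2}$ selection loss into the same constant $4\theta$).
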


The detailed proof can be found in Appendix~\ref{app:amplified-estimation}.

\begin{algorithm}[h]
    \caption{Probability Amplification Procedure (Replacing Line~11 in Algorithm~\ref{alg:sidsfd-update})}
    \label{alg:amplify}
    $c \gets +\infty, \quad r\gets \log_{100} \frac{2}{\delta}, \quad s\gets 2 \log_3 \frac{2}{\delta} $\;
    \For{$h=1$ \KwTo $r$}{
        $[\bm{Z}_{h}^\prime, \hat{\bm{\Sigma}}_{h}] \gets \mathtt{simul\_iter}({\bm{C}^\prime_i}^\top , k, \varepsilon_{\text{SI}}=0.2)$ \tcp*[f]{$\hat{\bm{\Sigma}}_{h}=\text{diag}\left(\hat{\sigma}_{h1}, \hat{\sigma}_{h2} \dots,\hat{\sigma}_{hk}\right)$}\;%
        $\xi_h \gets \max_{\xi} \{\hat{\sigma}_{h\xi}^2 \ge \theta\}, \quad \bm{Z}_h\gets \bm{Z}_h^\prime[:,1:\xi]$\;
        $c_h \gets 0$\;
        \For{$g=1$ \KwTo $s$}{
            $c_{hg}^2 \gets \texttt{power\_iteration}(\bm{C}_i^\prime-\bm{C}_i^\prime\bm{Z}_h\bm{Z}_h^\top)$ \;
            $c_h \gets \max(c_h, c_{hg})$
        }
        \lIf{$c_h<c$}{$c \gets c_h,  \quad \bm{Z}_i^\prime \gets \bm{Z}_{h}^\prime$}
    }
\end{algorithm}

\subsubsection{A proof sketch of Theorem~\ref{thm:sw}}

For the complete proof of Theorem~\ref{thm:sw}, we refer the reader to Appendix~\ref{sec:proof-sw}. Here, we provide a concise proof sketch that contains the key technical ideas and intuition.

\textbf{(Error bound)} If $\left\|\boldsymbol{a}_i\right\|_2^2\geq \theta$, then at line~10 in Algorithm~\ref{alg:mlsidsfd}, $\bm{a}_i$ is saved accurately, and thus no error is introduced. Otherwise, if line 12 is triggered, then Algorithm~\ref{alg:sidsfd-update} is executed. Since line 15 of Algorithm~\ref{alg:sidsfd-update} stores matrix $\bm{Z}_i$ and $\bm{Z}_i^\top {\bm{C}^\prime_i}^\top {\bm{C}^\prime_i}$, we can accurately restore the matrix $\bm{C}^\prime_i$ at line 4/6 from the matrix $\bm{C}_i$ at line 16 through the following computation:
\begin{equation*}
    \label{eq:restore}
    \begin{split}
                & \bm{C}_i^\top \bm{C}_i  + \underline{\bm{Z}_i}\cdot \underline{\bm{Z}_i^\top {\bm{C}_i^\prime}^\top \bm{C}_i^\prime} + (\underline{\bm{Z}_i^\top {\bm{C}_i^\prime}^\top \bm{C}_i^\prime})^\top  \underline{\bm{Z}_i}^\top - \underline{\bm{Z}_i}\cdot \underline{\bm{Z}_i^\top {\bm{C}_i^\prime}^\top \bm{C}_i^\prime}\cdot \underline{\bm{Z}_i}\cdot\underline{ \bm{Z}_i}^\top \\
        {{}={}} & (\bm{C}_i^\prime - \bm{C}_i^\prime \bm{Z}_i \bm{Z}_i^\top )^\top (\bm{C}_i^\prime - \bm{C}_i^\prime \bm{Z}_i \bm{Z}_i^\top ) +\bm{Z}_i \bm{Z}_i^\top {\bm{C}_i^\prime}^\top \bm{C}_i^\prime + {\bm{C}_i^\prime}^\top \bm{C}_i^\prime \bm{Z}_i \bm{Z}_i^\top - \bm{Z}_i \bm{Z}_i^\top {\bm{C}_i^\prime}^\top \bm{C}_i^\prime \bm{Z}_i \bm{Z}_i^\top                                                                                                                                                           \\
        {{}={}} & {\bm{C}_i^\prime}^\top \bm{C}_i^\prime\cancel{-\bm{Z}_i \bm{Z}_i^\top {\bm{C}_i^\prime}^\top \bm{C}_i^\prime} \cancel{- {\bm{C}_i^\prime}^\top \bm{C}_i^\prime \bm{Z}_i \bm{Z}_i^\top}  \cancel{+\bm{Z}_i \bm{Z}_i^\top {\bm{C}_i^\prime}^\top \bm{C}_i^\prime \bm{Z}_i \bm{Z}_i^\top}                                                                                          \\
                & \cancel{+\bm{Z}_i \bm{Z}_i^\top {\bm{C}_i^\prime}^\top \bm{C}_i^\prime } \cancel{+{\bm{C}_i^\prime}^\top \bm{C}_i^\prime \bm{Z}_i \bm{Z}_i^\top}  \cancel{-\bm{Z}_i \bm{Z}_i^\top {\bm{C}_i^\prime}^\top \bm{C}_i^\prime \bm{Z}_i \bm{Z}_i^\top } = {\bm{C}_i^\prime}^\top \bm{C}_i^\prime .
    \end{split}
\end{equation*}
This can be regarded as the ``inverse operation'' of Algorithm~\ref{alg:sidsfd-update} and used in the query procedure of Algorithm~\ref{alg:pidsfd-query}. Therefore, lines 7-20 do not introduce the restoring norm error. The overall error mainly comes from the FD reduction operation in line~4. Additionally, by performing $N$ steps of the ``inverse operation'' from the current time $T$, we can recover the matrix $\bm{C}^\prime_{T-N}$ and thus $\bm{C}_{T-N}$ at time $T-N$. We can prove that the covariance error can be decomposed into two parts:
\begin{equation*}
    \left\|\bm{A}_{\text{W}}^\top \bm{A}_{\text{W}}-\bm{B}^\top \bm{B}\right\|_2   \leq \underbrace{\left\|\sum_{i=T-N+1}^{T} \bm{\Delta}_i\right\|_2 }_{\text{Term 1}} + \underbrace{\left\|\bm{C}_{T-N}^\top \bm{C}_{T-N}\right\|_2}_{\text{Term 2}}. \\
\end{equation*}
Among them, $\bm{\Delta}_i=\tilde{\bm{C}_{i}}^\top \tilde{\bm{C}_{i}} - {\bm{C}^\prime_i}^\top {\bm{C}^\prime_i}$, the gap error before and after line~4 in Algorithm~\ref{alg:sidsfd-update}. Therefore, Term~1 is the error that originates from the FD reduction, and Term~2 is the error from $\bm{C}_{T-N}$. The derivation of the upper bound for Term~1 is similar to the idea of FD~\cite{ghashami2016frequent}, and we can conclude that:
\begin{equation*}
    \text{Term 1} \le O(1)\varepsilon \left\|\bm{A}_{T-N,T}\right\|_F^2  + 2\cdot \text{Term 2}.
\end{equation*}

For the upper bound of Term 2 to hold, both the Power Iteration (line 7) and the Simultaneous Iteration (line 11) at time $T-N$ need to have converged successfully. According to Corollary~\ref{thm:power} and Theorem~\ref{thm:simul}, the probability is given by Eq.~\eqref{eq:prob}. Under the condition of successful convergence and based on Fact~\ref{prop:ml}, we can always find one level $j$ of Algorithm~\ref{alg:mlsidsfd} that satisfies:
\begin{equation*}
    \text{Term 2} \le O(1)\theta\le O(1)\varepsilon \|\bm{A}_{T-N,T}\|_F^2.
\end{equation*}
Finally, we have the probabilistic upper bound of covariance error:
\begin{equation*}
    \left\|\bm{A}_{T-N,T}^\top \bm{A}_{T-N,T}-\bm{B}^\top \bm{B}\right\|_2 \le \text{Term 1}+\text{Term 2} \le O(1) \varepsilon \|\bm{A}_{T-N,T}\|_F^2.
\end{equation*}

\textbf{(Time complexity)} In Algorithm~\ref{alg:mlsidsfd}, the update function executes the update subroutines of Algorithm~\ref{alg:sidsfd-update} at most $L=\lceil \log_2 R\rceil$ times. The time cost of Algorithm~\ref{alg:sidsfd-update} consists of 3 parts: the FD reduction in lines 2-6, the Power Iteration in line 7, and the doubling Simultaneous Iteration in lines 8-17. The SVD for FD reduction is executed once every $\ell$ updates, and the amortized time cost of lines 2-6 is $O(d\ell^2)/\ell = O(d\ell)$. According to the time complexity result in Corollary~\ref{thm:power}, the Power Iteration in line 7 takes $O(d\ell k) = O(\frac{d}{\varepsilon} \log d)$ time.

The analysis of the time cost for lines 9 to 17 is more subtle. We need to analyze each value of $k$ when the for-loop terminates during the $i$-th update, which we denote as $k_i$. $k_i$ is an integer, where intuitively each 1 in $k_i$ corresponds to a component greater than $\theta$ in $\left\|\bm{A}_{T-N,T} \right\|_F^2$. Therefore, for level $j$ given by Fact 1, we have set $\theta=\varepsilon\left\|\bm{A}_{T-N, T}\right\|_F^2$, and a constraint on the sum of $k_i$ within the current sliding window holds:
\begin{equation}
    \label{eq:sum-of-ki}
    \begin{split}
        \sum_{i=T-N+1}^{T} k_i = O\left(\frac{\left\|\bm{A}_{T-N, T}\right\|_F^2}{\theta}\right) = O\left(\frac{1}{\varepsilon}\right).
    \end{split}
\end{equation}
According to Theorem~\ref{thm:simul}, the time cost of the Simultaneous Iteration corresponding to each  $k_i$  is \( O(d\ell k_i \log d) \). Therefore, the total time cost for all  $i$  is:
\begin{equation*}
    \sum_{i=T-N+1}^{T} O(d\ell k_i \log d) = O(d\ell \log d) \sum_{i=T-N+1}^{T} k_i = O\left(\frac{d\ell}{\varepsilon}  \log d\right).
\end{equation*}

We assume $N > \ell$ (otherwise, if $N < \ell$, we can directly store all $\ell$ rows exactly, so the assumption $N \ge \ell$ is without loss of generality). Amortizing over $N$ updates, the per-update time cost of Algorithm~\ref{alg:sidsfd-update} of level $j$ becomes:
\begin{equation*}
    \frac{1}{N} O\left(\frac{d\ell}{\varepsilon} \log d\right) = O\left(\frac{d}{\varepsilon} \log d\right).
\end{equation*}

We show that the update time complexity of level $j=\lfloor \log_2 \frac{\|\bm{A}_{T-N,T}\|_F^2}{N}\rfloor$ in Algorithm~\ref{alg:mlsidsfd} given by Fact~\ref{prop:ml} is the maximum over all $L$ levels. The intuition is that for lower levels than $j$, the $\theta$ decreases exponentially, so the time-efficient line~10 can be triggered more frequently. For levels higher than $j$, the $\theta$ increases exponentially, so the sum of $k_i$ in Eq.~\eqref{eq:sum-of-ki} decreases exponentially. Therefore, the update time cost is dominated by the level $j$, and the total time cost of all $L=\lceil \log_2 R \rceil$ levels is $O\left(\frac{d}{\varepsilon}\log d \log R\right)$.

\textbf{(Space complexity)} The space cost of each level in Algorithm~\ref{alg:mlsidsfd} consists of the sketch matrix $\bm{C}$ and the snapshots in $\mathcal{S}$. The sketch matrix $\bm{C}$ requires $O(d\ell)$ space, as its maximum size is $d \times 2\ell$. The queue $\mathcal{S}$ also uses $O(d/\varepsilon)$ space, since it stores matrices $\bm{Z}_i \in \mathbb{R}^{d \times k_i}$ and $\bm{Z}_i^\top {\bm{C}_i^\prime}^\top \bm{C}_i^\prime \in \mathbb{R}^{k_i \times d}$. Summing over all $N$ updates and using the bound from Eq.~\eqref{eq:sum-of-ki}, the space cost of the queue $\mathcal{S}$ is:
\begin{equation*}
    O\left(d \sum_{i=T-N+1}^{T} k_i\right) = O\left(\frac{d}{\varepsilon}\right).
\end{equation*}

Thus, the space cost of each level is $O\left(d\ell + \frac{d}{\varepsilon}\right) = O\left(\frac{d}{\varepsilon}\right)$, and the total space cost of all $L=\lceil \log_2 R \rceil$ levels is $O\left(\frac{d}{\varepsilon} \log R\right)$.

\section{Applying \sidsfd to Other Problems}
\label{sec:applications}

Now we show how \sidsfd\ serves as a general framework that can be plugged into various algorithms to replace their time-consuming matrix factorizations for a broad class of matrix sketching problems.

\subsection{Persistent Matrix Sketch (Problem~\ref{prob:attp})}

Algorithm~\ref{alg:attp} describes how \sidsfd\ can be used to solve the ATTP matrix sketch problem. The high-level idea is similar to FD-ATTP (Algorithm~1 in ~\cite{shi2021time}), but we replace the SVD calculation with \sidsfd to avoid frequent and time-consuming matrix factorizations. %

\begin{algorithm}
    \caption{ATTP Sketch using \sidsfd Framework}
    \label{alg:attp}
    \SetKwProg{Fn}{Function}{:}{}

    \Fn{\textup{\textsc{Initialize}($d, \varepsilon$)}}{
        $\left\|\bm{A}_0\right\|_F^2 \leftarrow 0$\;
        \sidsfd.\textsc{Initialize}($d, \varepsilon, \theta = 0$)
    }

    \Fn{\textup{\textsc{Update}($\bm{a}_i$)}}{
        $\left\|\bm{A}_t\right\|_F^2 \gets \left\|\bm{A}_{t-1}\right\|_F^2 + \left\|\bm{a}_i\right\|_2^2$\;
        \sidsfd.$\theta\gets \varepsilon \left\|\bm{A}_t\right\|_F^2$\;
        \sidsfd.\textsc{Update}($\bm{a}_i$)\;
    }

    \Fn{\textup{\textsc{Query}($t$)}}{
        \Return \sidsfd.\textsc{Query}(0, $t$)\;
    }
\end{algorithm}

\begin{theorem}
    \label{thm:attp}
    Algorithm~\ref{alg:attp} solves the ATTP Matrix Sketch (Problem~\ref{prob:attp}) with space complexity $O\!\left(\frac{d}{\varepsilon}\log \left\|\bm{A}\right\|_F^2\right)$ and amortized time complexity $O\!\left(\frac{d}{\varepsilon}\log d \right)$ per update, with success probability at least that given in Eq.~\eqref{eq:prob}. Alternatively, using the probability amplification of Algorithm~\ref{alg:amplify}, it achieves amortized time complexity $O\!\left(\frac{d}{\varepsilon}\log d \log^2 \frac{1}{\delta}\right)$ per update while guarantee success probability at least $1-\delta$, where $\delta$ is a tunable parameter that can be chosen arbitrarily in the interval $(0, 1/100)$.
\end{theorem}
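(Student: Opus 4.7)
The plan is to adapt the sketch of Theorem~\ref{thm:sw} to the ATTP setting, exploiting two simplifications specific to Algorithm~\ref{alg:attp}. First, the algorithm maintains a single \sidsfd\ instance initialized with $\bm{C}_0 = \bm{0}$, so the residual term that played the role of $\|\bm{C}_{T-N}^\top \bm{C}_{T-N}\|_2$ in the sliding-window proof is identically zero. Second, the threshold $\theta = \varepsilon \|\bm{A}_t\|_F^2$ is refreshed at every update and is monotonically nondecreasing, which removes the need for the parallel multi-level construction used in ML-\sidsfd.

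For the error bound at an arbitrary query time $t \le T$, I would first reuse the cancellation identity of the sketch of Theorem~\ref{thm:sw} to show that the snapshot-dump and snapshot-restore steps (lines 7-20 of Algorithm~\ref{alg:sidsfd-update} paired with line 4 of Algorithm~\ref{alg:pidsfd-query}) contribute no cumulative restoring norm error. This reduces the covariance error to the FD-reduction contribution
\[
\left\| \bm{A}_t^\top \bm{A}_t - \bm{B}^\top \bm{B}\right\|_2 \le \left\|\textstyle\sum_{i=1}^t \bm{\Delta}_i\right\|_2,
\]
where $\bm{\Delta}_i = \tilde{\bm{C}}_i^\top \tilde{\bm{C}}_i - {\bm{C}_i^\prime}^\top \bm{C}_i^\prime$. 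An FD-style trace argument identical to that of Ghashami et al., combined with the monotonicity $\theta_i \le \theta_t = \varepsilon\|\bm{A}_t\|_F^2$ for every $i\le t$, would then bound this sum by $O(1)\varepsilon\|\bm{A}_t\|_F^2$, conditioned on every Power and Simultaneous Iteration call up to time $t$ correctly detecting the components above its own threshold. Corollary~\ref{thm:power} and Theorem~\ref{thm:simul} deliver the joint success probability claimed in Eq.~\eqref{eq:prob}, and substituting Algorithm~\ref{alg:amplify} for line~11 of Algorithm~\ref{alg:sidsfd-update} upgrades this to $1-\delta$ at the expected $\log^2(1/\delta)$ multiplicative cost in time.

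For time and space, I would partition $[1,T]$ into $L = O(\log\|\bm{A}\|_F^2)$ epochs within each of which $\|\bm{A}_i\|_F^2$ (and therefore $\theta_i$) doubles. Because the new Frobenius mass added in any epoch is bounded by the current norm and each dumped direction carries squared singular value at least $\theta_i$, we get $\sum_{\text{epoch}}\xi_i = O(1/\varepsilon)$ and hence $\sum_{i=1}^T \xi_i = O(\log\|\bm{A}\|_F^2/\varepsilon)$, where $\xi_i$ is the number of directions dumped at update $i$. This gives, on one hand, amortized per-update time dominated by the $O(d\ell\log d) = O(\frac{d}{\varepsilon}\log d)$ Power Iteration of line 7 after folding in the $O(d/\varepsilon)$ amortized FD-reduction cost and the doubling Simultaneous Iteration cost $O(d\ell\log d \cdot \sum_i \xi_i)$, which is a low-order term; and on the other hand, total snapshot storage $O(d\sum_i \xi_i) = O(\frac{d}{\varepsilon}\log\|\bm{A}\|_F^2)$, with the live sketch $\bm{C}$ contributing a subsumed $O(d\ell)$.

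The main obstacle is to re-certify the cancellation identity of Eq.~\eqref{eq:restore} under the randomized $\bm{Z}_i$'s across the entire persistent history, and in parallel to arrange that Algorithm~\ref{alg:pidsfd-query} can access the correct past residual $\bm{C}_t$ for any $t\le T$ without inflating the space budget. A natural fix is to record $\bm{C}_i$ alongside each dumped snapshot; this remains consistent with the $O(\frac{d}{\varepsilon}\log\|\bm{A}\|_F^2)$ bound since only $O(\log\|\bm{A}\|_F^2/\varepsilon)$ dumps occur and each $\bm{C}_i$ has at most $2\ell$ rows. A secondary subtlety is the epoch-based amortization of $\sum_i \xi_i$: it is this step, rather than the error analysis, that prevents an extra $\log\|\bm{A}\|_F^2$ factor from appearing in the per-update time despite $\theta$ growing throughout the stream.
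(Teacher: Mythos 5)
Your time and space analysis is essentially the paper's (the paper bounds $\sum_i k_i = O\bigl(\frac{1}{\varepsilon}\log\frac{\|\bm{A}_T\|_F^2}{\varepsilon}\bigr)$ via an induction lemma rather than your epoch-doubling argument, but the two are interchangeable). The gap is in the error bound. You claim that because $\bm{C}_0=\bm{0}$ the residual term ``is identically zero'' and reduce the covariance error to $\bigl\|\sum_{i\le t}\bm{\Delta}_i\bigr\|_2$ alone. But the term that vanishes is only the analogue of the \emph{window-start} residual; in the ATTP setting a query at a historical time $t<T$ cannot include $\bm{C}_t^\top\bm{C}_t$ in the restored sketch (only the current $\bm{C}_T$ is kept), so the decomposition necessarily carries a second term $\|\bm{C}_t^\top\bm{C}_t\|_2$. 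This is exactly the paper's Term~2, and it is the \emph{only} place the randomness enters: it is bounded by $4\theta_t\le 4\varepsilon\|\bm{A}_t\|_F^2$ via Lemma~\ref{lem:power-iter}, i.e., by the guarantee that Power/Simultaneous Iteration did not leave a residual direction above the threshold undetected. Your Term-1 bound, by contrast, is a deterministic trace argument and does not need the probabilistic conditioning you attach to it.

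Your proposed repair---storing $\bm{C}_i$ alongside each dumped snapshot so the query can ``access the correct past residual''---does not work and is not needed. $\bm{C}_i$ changes at every update (a row is inserted each step and an FD reduction fires every $\ell$ steps), so the copy saved at the most recent dump before $t$ omits all rows $\bm{a}_{i+1},\dots,\bm{a}_t$ arriving after that dump; storing $\bm{C}_i$ at every step would blow the space budget. The correct move is the paper's: accept that $\bm{C}_t^\top\bm{C}_t$ is omitted from the query output and show its spectral norm is at most $4\theta_t$ with the probability of Eq.~\eqref{eq:prob} (or $1-\delta$ after amplification). With that term restored, the rest of your argument---the cancellation identity for the snapshots, the FD trace bound giving $\frac{1}{\ell}\|\bm{A}_t\|_F^2$ for Term~1, and the $\sum_i k_i$ count driving the time and space bounds---goes through.
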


The detailed proof can be found in Appendix~\ref{app:attp}.

\subsection{Tracking Approximate Matrix Sketch over Sliding Window (Problem~\ref{prob:sw-amm})}

\begin{algorithm}
    \caption{\sidscod Framework for Problem~\ref{prob:sw-amm}}
    \label{alg:sw-amm}
    \SetKwProg{Fn}{Function}{:}{}

    \Fn{\textup{\textsc{Initialize}($d_x, d_y, \varepsilon, N, \theta$)}}{
        \KwIn{$d_x, d_y$: dimension of the vectors in stream; $N$: sliding window size;  $\varepsilon$: relative covariance error bound;  %
            $\theta$: dump threshold. %
        }
        \(\ell\gets\lceil\frac{2}{\varepsilon}\rceil,\quad \hat{\boldsymbol{A}} \leftarrow \boldsymbol{0}^{d_x\times l},\quad \hat{\boldsymbol{B}} \leftarrow \boldsymbol{0}^{d_y\times l} \) \;
        \(\mathcal{S} \leftarrow\) an empty queue \;
    }

    \Fn{\textup{\textsc{Update}($\bm{x}_i, \bm{y}_i$)}}{
    \KwIn{$\boldsymbol{x}_i$: the column vector of $\bm{X}$ arriving at time $i$; $\boldsymbol{y}_i$: the column vector of $\bm{Y}$ arriving at time $i$;}

    \While(\tcp*[f]{snapshot expired}){$\mathcal{S}[0].t + N \leq i$}{
        $\mathcal{S}.\textsc{popleft}()$\\
    }
    $\tilde{\bm{A}} \leftarrow \begin{bmatrix} \bm{A}_{i-1}^\top & \bm{a}_i \end{bmatrix}^\top, \tilde{\bm{B}} \leftarrow \begin{bmatrix} \bm{B}_{i-1}^\top & \bm{b}_i \end{bmatrix}^\top$\\

    \eIf(\tcp*[f]{$\tilde{\bm{A}}$ has no zero rows}){$\text{rows}(\tilde{\bm{A}}) \geq 2\ell$}{
        $\bm{A}^\prime_i,\bm{B}^\prime_i\leftarrow \textsf{COD}(\tilde{\bm{A}}, \tilde{\bm{B}}, \ell)$    \tcp*[f]{Co-Occuring Directions Sketch, refer to~\cite{mroueh2017co}}
    }{
        $\bm{A}^\prime_i,\bm{B}^\prime_i\leftarrow \tilde{\bm{A}},\tilde{\bm{B}}$
    }

    $\hat{\sigma}_1^2, \hat{\bm{v}}_1 \gets \mathtt{power\_iteration}(\bm{A}^\prime_i{\bm{B}^\prime_i}^\top, k=O(\log_2 \ell))$

    \eIf(\tcp*[f]{largest singular value}){$\hat{\sigma}_1 \geq \frac{\theta}{2}$}{
        \For{$j=1$ \KwTo $\lceil \log_2 \ell \rceil$}{
            $k \gets \min(2^j, \ell) $\\
            $[\bm{Z}^\prime, \hat{\bm{\Sigma}}] \gets \mathtt{simul\_iter}(\bm{A}^\prime_i {\bm{B}^\prime_i}^\top, k, \varepsilon_{\text{SI}})$ \\ %
            $[\bm{H}^\prime, \hat{\bm{\Sigma}}] \gets \mathtt{simul\_iter}({\bm{B}^\prime_i \bm{A}^\prime_i }^\top, k, \varepsilon_{\text{SI}})$ \\ %
            \If{$\hat{\bm{\Sigma}}_k<\theta$}{
                $k^\prime \gets \arg \max_{k^\prime} \{\hat{\bm{\Sigma}}_{k^\prime}\ge\theta\}$\\
                $\bm{Z}_i \gets \bm{Z}^\prime{[:,1:k^\prime]}, \quad \bm{H}_i\gets \bm{H}^\prime{[:,1:k^\prime]}$\\
                $\texttt{snapshot}\gets(\bm{Z}_i,\bm{Z}_i^\top \bm{A}^\prime_i {\bm{B}^\prime_i}^\top, \bm{A}^\prime_i {\bm{B}^\prime_i}^\top \bm{H}_i^\top, \bm{H}_i)$\\
                $\mathcal{S}.\textsc{append}\left(\texttt{snapshot}, s=\mathcal{S}[-1].t+1, t=i\right)$\\
                $\bm{A}_{i} \gets \left(\bm{I}-\bm{Z}\bm{Z}^\top\right)\bm{A}_i^\prime, \ \bm{B}_{i} \gets \left(\bm{I}-\bm{H}\bm{H}^\top\right)\bm{B}_i^\prime$\\
                break\;
            }
        }
    }{
        $\bm{A}_i \gets \bm{A}_i^\prime, \  \bm{B}_i \gets \bm{B}_i^\prime$
    }
    }

    \Fn{\textup{\textsc{Query}()}}{
        $\bm{A}\bm{B}^\top \gets \bm{A}_T \bm{B}_T^\top$\\
        \ForAll{$(\bm{Z}_i,\bm{Z}_i^\top \bm{A}^\prime_i {\bm{B}^\prime_i}^\top, \bm{A}^\prime_i {\bm{B}^\prime_i}^\top \bm{H}_i^\top, \bm{H}_i)$ in $\mathcal{S}$}{
            $\bm{A}\bm{B}^\top \gets \bm{A}\bm{B}^\top + \bm{Z}_i\bm{Z}_i^\top \bm{A}^\prime_i {\bm{B}^\prime_i}^\top+\bm{A}^\prime_i {\bm{B}^\prime_i}^\top \bm{H}_i^\top \bm{H}_i - \bm{Z}_i\bm{Z}_i^\top \bm{A}^\prime_i {\bm{B}^\prime_i}^\top\bm{H}_i^\top \bm{H}_i $
        }
        $\left[\bm{U}, \bm{\Sigma},\bm{V}^\top\right]\gets \text{SVD}(\bm{A}\bm{B}^\top)$\;
        ${\bm{A}}^*\gets \bm{U}\sqrt{\max\left(\bm{\Sigma}-\sigma_\ell \bm{I},\bm{0}\right)},\quad{\bm{B}}^*\gets\bm{V}\sqrt{\max\left(\bm{\Sigma}-\sigma_\ell \bm{I},\bm{0}\right)}$\;
            \Return ${\bm{A}}^*[:, 1:\ell],\quad{\bm{B}}^*[:,1:\ell]$
    }
\end{algorithm}

Algorithm~\ref{alg:sw-amm} shows the sketching algorithm for tracking an approximate matrix sketch over sliding windows. The high-level idea and algorithm skeleton are similar to hDS-COD (Algorithm~3 in~\cite{yao2025optimal}) and SO-COD (Algorithm~2 in~\cite{xian2025optimal}), but we replace the DS-COD subroutines with \sidscod to avoid frequent and time-consuming matrix factorizations.

\begin{theorem}
    \label{thm:sw-amm}
    Algorithm~\ref{alg:sw-amm} solves the Tracking Approximate Matrix Multiplication over Sliding Window (Problem~\ref{prob:sw-amm}) with space complexity $O\left(\frac{d_x+d_y}{\varepsilon}\log R\right)$ and amortized time complexity $O\left(\frac{d_x+d_y}{\varepsilon}\log d \log R \right)$ per update, with success probability at least that given in Eq.~\eqref{eq:prob}. Alternatively, using the similar like probability amplification of Algorithm~\ref{alg:amplify}, it achieves amortized time complexity $O\!\left(\frac{d_x+d_y}{\varepsilon}\log d \log^2 \frac{1}{\delta}\right)$ per update while guarantee success probability at least $1-\delta$, where $\delta$ is a tunable parameter that can be chosen arbitrarily in the interval $(0, 1/100)$.
\end{theorem}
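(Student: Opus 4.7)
The plan is to follow the blueprint laid out for Theorem~\ref{thm:sw}, adapting every step to the asymmetric matrix-multiplication setting with two streams $\bm{X}$ and $\bm{Y}$. First I would embed Algorithm~\ref{alg:sw-amm} inside a multi-level wrapper ML-\sidscod analogous to Algorithm~\ref{alg:mlsidsfd}: instantiate $L=\lceil\log_2 R\rceil$ parallel copies with geometrically increasing thresholds $\theta_j = 2^j\varepsilon N$, handle vectors whose ``energy'' $\|\bm{x}_i\|_2\|\bm{y}_i\|_2$ exceeds $\theta$ by direct storage, and invoke Fact~\ref{prop:ml} to guarantee the existence of a level $j^*$ at which $\theta_{j^*}\le \varepsilon\|\bm{X}_\text{W}\|_F\|\bm{Y}_\text{W}\|_F \le 2\theta_{j^*}$. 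This reduces the analysis to a single level, exactly as in the proof of Theorem~\ref{thm:sw}.

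For the error bound I would establish a two-sided restoration identity mirroring the one-sided cancellation displayed in the sketch of Theorem~\ref{thm:sw}. Concretely, I expect
\begin{equation*}
\begin{split}
    \bm{A}_i^\prime {\bm{B}_i^\prime}^\top &= (\bm{I}-\bm{Z}_i\bm{Z}_i^\top)\bm{A}_i^\prime {\bm{B}_i^\prime}^\top (\bm{I}-\bm{H}_i\bm{H}_i^\top) \\
    &\quad{} + \bm{Z}_i\bm{Z}_i^\top \bm{A}_i^\prime{\bm{B}_i^\prime}^\top + \bm{A}_i^\prime{\bm{B}_i^\prime}^\top\bm{H}_i\bm{H}_i^\top - \bm{Z}_i\bm{Z}_i^\top\bm{A}_i^\prime{\bm{B}_i^\prime}^\top\bm{H}_i\bm{H}_i^\top,
\end{split}
\end{equation*}
which is precisely the combination stored in the snapshot tuple on line~20 and reassembled in the query on line~28. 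This identity guarantees that no restoring norm error accumulates across snapshots, so the only error source is the COD reduction on line~10. Following the style of the FD decomposition, I would split the spectral error into $\|\sum_{i=T-N+1}^T \bm{\Delta}_i\|_2 + \|\bm{A}_{T-N}{\bm{B}_{T-N}}^\top\|_2$ with $\bm{\Delta}_i = \tilde{\bm{A}}_i\tilde{\bm{B}}_i^\top - \bm{A}_i^\prime{\bm{B}_i^\prime}^\top$; the first term is controlled via the standard COD analysis of~\cite{mroueh2017co} (yielding $O(1)\varepsilon\|\bm{X}_\text{W}\|_F\|\bm{Y}_\text{W}\|_F$ up to a residual of the second term), and the second is bounded by $\theta$ precisely when the Power and Simultaneous Iterations at time $T-N$ converge, which by Corollary~\ref{thm:power} and Theorem~\ref{thm:simul} holds with probability at least Eq.~\eqref{eq:prob}.

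For the complexity bounds I would reuse the doubling-and-summation argument verbatim: each update's cost is dominated by at most two Simultaneous Iterations (one for the left singular vectors $\bm{Z}$ and one for the right $\bm{H}$) with $k$ doubled until the smallest returned singular value drops below $\theta$. Writing $k_i$ for the terminal value, the analog of Eq.~\eqref{eq:sum-of-ki} is
\begin{equation*}
    \sum_{i=T-N+1}^{T} k_i = O\!\left(\tfrac{\|\bm{X}_{\text{W}}\|_F\|\bm{Y}_{\text{W}}\|_F}{\theta}\right) = O\!\left(\tfrac{1}{\varepsilon}\right)
\end{equation*}
at level $j^*$, and each iteration costs $O((d_x+d_y)\ell k_i\log d)$ by Theorem~\ref{thm:simul}; amortizing over $N$ updates gives $O((d_x+d_y)\varepsilon^{-1}\log d)$ per level, which is dominated by the worst case $j^*$ and multiplied by $L=\lceil\log_2 R\rceil$ across the multi-level structure. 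The space analysis is identical: each snapshot contributes $O((d_x+d_y)k_i)$ cells, and $\sum k_i=O(1/\varepsilon)$ plus the COD buffer of size $O((d_x+d_y)\ell)$ yields $O((d_x+d_y)\varepsilon^{-1}\log R)$ overall. Finally, applying the amplification trick of Algorithm~\ref{alg:amplify} to both sides (run $\log(1/\delta)$ independent Simultaneous Iterations and pick the round minimizing the Power-Iteration-estimated residual spectral norm) boosts the confidence to $1-\delta$ at the cost of a $\log^2(1/\delta)$ factor, as in Corollary~\ref{coro:amplify}.

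The hard part will be the two-sided restoration identity: unlike the covariance case, $\bm{A}_i^\prime{\bm{B}_i^\prime}^\top$ is not symmetric, so the clean telescoping of cross-terms requires choosing $\bm{Z}_i$ and $\bm{H}_i$ as the approximate left and right singular subspaces of the \emph{same} product matrix and verifying that the deflation $\bm{A}_i\gets(\bm{I}-\bm{Z}_i\bm{Z}_i^\top)\bm{A}_i^\prime$, $\bm{B}_i\gets(\bm{I}-\bm{H}_i\bm{H}_i^\top)\bm{B}_i^\prime$ produces a residual whose outer product matches the first summand of the identity above. In parallel, the COD reduction step does not admit the symmetric eigenvalue cancellation used in FD, so the bound on $\bm{\Delta}_i$ must be re-derived from the rescaling properties of COD; I expect this to be the most delicate portion, and I would handle it by adapting Lemma~2 of~\cite{mroueh2017co} to the residual-restoration framework introduced here.
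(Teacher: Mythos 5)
Your proposal follows essentially the same route as the paper's proof in Appendix~\ref{app:sw-amm}: the two-sided restoration identity you display is exactly the one the paper verifies, the decomposition into $\bigl\|\sum_i\bm{\Delta}_i\bigr\|_2+\bigl\|\bm{A}_{T-N}\bm{B}_{T-N}^\top\bigr\|_2$ with $\bm{\Delta}_i=\tilde{\bm{A}}_i\tilde{\bm{B}}_i^\top-\bm{A}_i^\prime{\bm{B}_i^\prime}^\top$ matches Eq.~\eqref{eq:amm-2k-update-12}, and the "delicate portion'' you flag is resolved in the paper precisely as you anticipate, by replacing the trace argument of FD with 1-Schatten-norm telescoping (the nuclear norm is non-increasing under the two orthogonal deflations), which also yields the bound $\sum_i k_i=O(1/\varepsilon)$ driving the time and space analysis. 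The only step you leave implicit is the final Weyl-inequality bound on the Query-time shrinking $\|\hat{\bm{A}}_T\hat{\bm{B}}_T^\top-\bm{A}^*(\bm{B}^*)^\top\|_2$, which carries over from the Theorem~\ref{thm:sw} blueprint you cite.
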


The detailed proof can be found in Appendix~\ref{app:sw-amm}.

\subsection{Distributed Matrix Sketch (Problem~\ref{prob:dsfd})}

Algorithm~\ref{alg:dsfd} describes how \sidsfd helps to solve the Tracking Distributed Matrix Sketch problem. The high-level idea and algorithm skeleton are similar to P2 (Algorithms~5.3 and~5.4 in ~\cite{ghashami2014continuous}), although we replace the SVD with \sidsfd to avoid frequent and time-consuming matrix factorizations.

\begin{algorithm}
    \caption{\pfive : \sidsfd Framework for Problem~\ref{prob:dsfd}}
    \label{alg:dsfd}
    \SetKwProg{Proc}{Procedure}{:}{}

    \Proc(\tcp*[f]{at Site $j$}){\textup{\textsc{Update}($\bm{a}_i$)}}{
        $F_j \gets F_j + \left\|\bm{a}_i\right\|_2^2$\\
        \If{$F_j \ge \frac{\varepsilon}{m}\hat{F}$}{
            Send $F_j$ to coordinator; set $F_j \gets 0$\\
        }
        \sidsfd.\textsc{Update}($\bm{a}_i$)\\
        \If{\sidsfd.$\mathcal{S}$ is not empty}{
            Send $\bm{Z}, \bm{Z}^\top \bm{C}^\top \bm{C}$ of $\mathcal{S}[0]$ to coordinator\\
            Delete $\mathcal{S}[0]$ then \sidsfd.$\mathcal{S}$ is empty\\
            \Return $\bm{Z}, \bm{Z}^\top \bm{C}^\top \bm{C}$
        }
        \uCase{received $\hat{F}$}{
            \sidsfd.$\theta\gets \varepsilon \hat{F}$
        }
    }

    \Proc(\tcp*[f]{at Coordinator}){\textup{\textsc{Receive\_Update}}}{
        \uCase{received $F_j$}{
            $\hat{F} \mathrel{+}= F_j$ and $\#\texttt{msg}\mathrel{+}=1$ \\
            \If{$\#\texttt{msg} \ge m$}{
                $\#\texttt{msg}\gets 0$, then Broadcast $\hat{F}$ to all sites
            }
        }
        \uCase{received $\bm{Z}$ and $\bm{Z}^\top\bm{C}^\top\bm{C}$}{
            $\bm{B} \mathrel{+}= \bm{Z}\bm{Z}^\top\bm{C}^\top\bm{C}+\bm{C}^\top \bm{C}\bm{Z}\bm{Z}^\top - \bm{Z}\bm{Z}^\top\bm{C}^\top\bm{C}\bm{Z}\bm{Z}^\top$\\
        }
    }

    \Proc(\tcp*[f]{at Coordinator}){\textup{\textsc{Query}}}{
        $[\bm{V}, \bm{\Lambda}, \bm{V}^\top] \gets \mathtt{eigen\_decomposition}(\bm{B})$ \; %
    $\hat{\bm{B}}\gets\sqrt{\max\left(\bm{\Lambda} - \bm{I}\lambda_\ell, \bm{0}\right)} \, \bm{V}^\top$\;
    \Return $\hat{\bm{B}}[1:\ell, :]$
    }

\end{algorithm}

\begin{theorem}
    \label{thm:dist}
    \pfive (Algorithm~\ref{alg:dsfd}) solves the Distributed Matrix Sketch problem  (Problem~\ref{prob:dsfd}) with communication complexity $O\left(\frac{md}{\varepsilon}\log \left\|\bm{A}\right\|_F^2\right)$ and amortized time complexity $O\left(\frac{d}{\varepsilon}\log d \right)$ per update, with success probability at least that given in Eq.~\eqref{eq:prob}. Alternatively, using the probability amplification of Algorithm~\ref{alg:amplify}, it achieves amortized time complexity $O\!\left(\frac{d}{\varepsilon}\log d \log^2 \frac{1}{\delta}\right)$ per update while guarantee success probability at least $1-\delta$, where $\delta$ is a tunable parameter that can be chosen in the interval $(0, 1/100)$.
\end{theorem}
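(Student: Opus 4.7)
The plan is to adapt the three-part structure used in the proof sketch of Theorem~\ref{thm:sw}, specialized to the distributed full-stream setting. Two new ingredients must be added: the continuous-tracking argument that keeps the coordinator's estimate $\hat F$ close to $\|\bm{A}\|_F^2$, and a per-site accounting of messages and time. Since each vector is processed at exactly one site and the multi-level wrapper of Algorithm~\ref{alg:mlsidsfd} is unnecessary in the non-sliding-window case, most of the single-machine complexity analysis from Theorem~\ref{thm:sw} can be reused verbatim.

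First I would show that at every time the coordinator's estimate satisfies $\hat F\le\|\bm{A}\|_F^2\le(1+\varepsilon)\hat F$: because each site pushes $F_j$ whenever it reaches $\tfrac{\varepsilon}{m}\hat F$, the total unreported mass across the $m$ sites is at most $\varepsilon\hat F$, which pins down the site-local dump threshold $\theta=\varepsilon\hat F$ within a constant factor of $\varepsilon\|\bm{A}\|_F^2$. Next I would write the coordinator's $\bm{B}$ as the sum over all received snapshots across all sites and apply the same algebraic identity highlighted in the proof sketch of Theorem~\ref{thm:sw} to show that each snapshot contributes exactly ${\bm{C}_i^{\prime}}^\top\bm{C}_i^\prime-\bm{C}_i^\top\bm{C}_i$, so no cumulative restoring norm error appears when the coordinator accumulates snapshots originating from different sites. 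What remains is (i) the FD-reduction slack inside each site's \sidsfd, whose telescoping sum across all sites is bounded by $\|\bm{A}\|_F^2/\ell=O(\varepsilon\|\bm{A}\|_F^2)$ in the same way as in Frequent Directions, (ii) the leftover spectral mass in each site-local residual $\bm{C}_j$, absorbed by the query-time shrinkage by $\lambda_\ell$ at the coordinator which plays the role of a global FD reduction, and (iii) the $(1+\varepsilon)$ slack between $\hat F$ and $\|\bm{A}\|_F^2$, absorbed into the $O(1)$ constant. The probability bound in Eq.~\eqref{eq:prob} then follows by applying Corollary~\ref{thm:power} and Theorem~\ref{thm:simul} to the Power and Simultaneous Iteration calls executed at the query time.

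For the communication bound, I would partition time into $O(\log\|\bm{A}\|_F^2)$ epochs, one per doubling of $\hat F$. Within one epoch the scalar $F_j$-updates total $O(m/\varepsilon)$ by construction, while each site sends snapshots of total size $O(d/\varepsilon)$, using the bound $\sum_i k_i\le\hat F/\theta=O(1/\varepsilon)$ that mirrors Eq.~\eqref{eq:sum-of-ki}. Summing over $m$ sites and $O(\log\|\bm{A}\|_F^2)$ epochs yields the announced $O(md/\varepsilon\cdot\log\|\bm{A}\|_F^2)$. For the amortized update time, each arriving vector is handled by exactly one site that simply runs a single call of single-threshold \sidsfd.\textsc{Update}, so the cost specializes to the $O(d/\varepsilon\cdot\log d)$ bound already established inside the time analysis of Theorem~\ref{thm:sw}. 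The $1-\delta$ variant then follows by substituting Algorithm~\ref{alg:amplify} for line~11 of \sidsfd.\textsc{Update}, which inflates the time by a $\log^2(1/\delta)$ factor as in Corollary~\ref{coro:amplify} while leaving the error and communication bounds unaffected.

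The main obstacle I anticipate is step~(ii) of the error argument: each site-local residual $\bm{C}_j$ may carry spectral mass up to $\theta=\varepsilon\hat F$, and a naive union bound over $m$ sites would produce an $m\theta$ term that breaks the target $O(\varepsilon\|\bm{A}\|_F^2)$. The resolution, following the P2 argument of~\cite{ghashami2014continuous}, is that the final shrinkage by $\lambda_\ell$ at the coordinator acts as a global FD reduction over the aggregated local residuals, capping their contribution at $O(\theta)$ rather than $O(m\theta)$. Making this rigorous in our RandNLA setting, where each $\bm{Z}_i$ is only an approximate singular basis rather than an exact one, is the delicate place where the distributed analysis diverges from the single-machine analysis underlying Theorem~\ref{thm:sw}; the rest of the proof is a careful but essentially mechanical combination of the per-site \sidsfd\ guarantees with the continuous Frobenius-norm tracking argument.
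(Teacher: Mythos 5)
Your communication and time accounting is sound and matches the paper's in spirit (the paper counts $O\bigl(\tfrac{1}{\varepsilon}\log\|\bm{A}\|_F^2\bigr)$ rounds between broadcasts of $\hat F$ and invokes Lemma~3 of~\cite{ghashami2014continuous}, whereas you count doublings of $\hat F$ directly; both give $O\bigl(\tfrac{md}{\varepsilon}\log\|\bm{A}\|_F^2\bigr)$), and the per-site reduction of update time to the single-machine \sidsfd\ bound is exactly what the paper does. However, the error analysis has a genuine gap at precisely the point you flag as the "main obstacle," and your proposed resolution does not work. The local residuals $\bm{C}_j$ are never transmitted: the coordinator's $\bm{B}$ consists only of the received snapshots, so the mass $\sum_{j}\bm{C}_j^\top\bm{C}_j$ is simply absent from $\bm{B}$ and contributes up to $O(m\theta)$ to $\|\bm{A}^\top\bm{A}-\bm{B}\|_2$ in the worst case (e.g., when the residual directions align across sites). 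The query-time subtraction of $\lambda_\ell(\bm{B})$ operates only on the matrix the coordinator actually holds; it is a shrinkage that can only \emph{add} to the discrepancy from $\bm{A}^\top\bm{A}$ (this is exactly how it is used in Eq.~\eqref{eq:fd-svd-2k-update-32}), so it cannot "absorb" or cap mass that never arrived. There is no global FD reduction over the aggregated local residuals anywhere in Algorithm~\ref{alg:dsfd}.

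The paper's actual route avoids this entirely: it observes that each site, in isolation, runs the ATTP procedure of Algorithm~\ref{alg:attp} (with "enqueue snapshot" replaced by "send snapshot"), so Theorem~\ref{thm:attp} gives a per-site guarantee $\|\bm{A}_j^\top\bm{A}_j-\bm{B}_j^\top\bm{B}_j\|_2\le\tfrac{\varepsilon}{m}\|\bm{A}\|_F^2$, and then the decomposability of Frequent-Directions-type sketches (Lemma~\ref{lem:Decomposability}) sums these $m$ per-site errors to $\varepsilon\|\bm{A}\|_F^2$. In other words, the $m\theta$ issue is resolved by scaling the \emph{per-site} error budget (hence the effective per-site threshold) by $1/m$, not by any coordinator-side reduction; your argument needs this substitution to go through. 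Separately, your claim that the probability bound follows from Power/Simultaneous Iteration calls "executed at the query time" is misplaced — those calls occur at update time at each site — and a per-site failure probability of roughly $1/100$ would in principle require a union bound over sites, a subtlety the paper itself elides but which your write-up should at least acknowledge rather than attribute to the query step.
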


The detailed proof can be found in Appendix~\ref{app:dist}.

\subsubsection{Tracking Distributed Matrix Sketch over Sliding Window (Problem~\ref{prob:dswfd})}

Algorithm~\ref{alg:dswfd} describes how \pfive (Algorithm~\ref{alg:dsfd}) can be used to solve the Tracking Distributed Matrix Sketch over Sliding Window problem. The high-level idea and algorithm skeleton are similar to DA2 (Algorithm~5 in~\cite{zhang2017tracking}). The IWMT protocol used by DA2 is a ``black-box'' one-way communication version of P2. Here we replace IWMT with \pfive (Algorithm~\ref{alg:dsfd}) to achieve the same communication complexity while avoiding frequent and time-consuming matrix factorizations.

\begin{algorithm}
    \caption{\sidsfd Framework for Problem~\ref{prob:dswfd}}
    \label{alg:dswfd}
    \SetKwProg{Proc}{Procedure}{:}{}

    \Proc(\tcp*[f]{at Site $j$}){\textup{\textsc{Update}($\bm{a}_i$)}}{
        $\bm{Z}, \bm{Z}^\top\bm{C}^\top\bm{C} \gets$\pfive.\textsc{Update}($\bm{a}_i$)\\
        Send $(\bm{Z}, \bm{Z}^\top \bm{C}^\top \bm{C}, \texttt{flag}=\texttt{update})$\\
        $mEH_a.\textsc{Update}(\bm{a}_i)$;\\
        \If{$i \mod N\equiv 0$}{
            $Q \gets IWMT_c(mEH_a)$
        }
        \For{$\bm{e}_t\in Q$ and $t<i-N$}{
            Send $(\bm{e}_t, \texttt{flag}=\texttt{expire})$, then remove $\bm{e}_t$ from $Q$
        }
    }

    \Proc(\tcp*[f]{at Coordinator}){\textup{\textsc{Receive\_Update}}}{
        \Switch{\texttt{flag}}{
            \uCase{\texttt{update}}{
                $\bm{B} \mathrel{+}= \bm{Z}\bm{Z}^\top\bm{C}^\top\bm{C}+\bm{C}^\top \bm{C}\bm{Z}\bm{Z}^\top - \bm{Z}\bm{Z}^\top\bm{C}^\top\bm{C}\bm{Z}\bm{Z}^\top$\\
            }
            \uCase{\texttt{expire}}{
                $\bm{B} \mathrel{-}= \bm{e}_t^\top\bm{e}_t$
            }
        }
    }

    \Proc(\tcp*[f]{at Coordinator}){\textup{\textsc{Query}}}{
         $[\bm{V}, \bm{\Lambda}, \bm{V}^\top] \gets \mathtt{eigen\_decomposition}(\bm{B})$ \; %
    $\hat{\bm{B}}\gets\sqrt{\max\left(\bm{\Lambda} - \bm{I}\lambda_\ell, \bm{0}\right)} \, \bm{V}^\top$\;
    \Return $\hat{\bm{B}}[1:\ell, :]$
    }
\end{algorithm}

\begin{theorem}
    \label{thm:dist-sw}
    Algorithm~\ref{alg:dswfd} solves the Tracking Distributed Matrix Sketch over Sliding Window (Problem~\ref{prob:dswfd}) with communication complexity $O\left(\frac{md}{\varepsilon}\log \left\|\bm{A}\right\|_F^2\right)$ and amortized time complexity $O\left(\frac{d}{\varepsilon}\log d \right)$ per update, with success probability at least that given in Eq.~\eqref{eq:prob}. Alternatively, using the probability amplification of Algorithm~\ref{alg:amplify}, it achieves amortized time complexity $O\!\left(\frac{d}{\varepsilon}\log d \log^2 \frac{1}{\delta}\right)$ per update to guarantee success probability at least $1-\delta$, where $\delta$ is a tunable parameter that can be chosen arbitrarily in the interval $(0, 1/100)$.
\end{theorem}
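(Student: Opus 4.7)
The plan is to derive Theorem~\ref{thm:dist-sw} by reducing the sliding-window distributed problem to two ingredients that have already been analyzed: (i) \pfive for the persistent distributed sketch (Theorem~\ref{thm:dist}) and (ii) the IWMT / mEH expiration protocol of~\cite{zhang2017tracking}, which tracks which rows have fallen out of the window. Concretely, at any time $T$ the coordinator's matrix $\bm{B}$ equals the running sum of all \texttt{update} messages minus the running sum of all \texttt{expire} messages, so $\bm{B}^\top\bm{B}$ approximates $\bm{A}_\text{W}^\top\bm{A}_\text{W}$ whenever (a) the \pfive contributions correctly approximate the persistent covariance $\sum_{i\le T}\bm{a}_i^\top\bm{a}_i$, and (b) the expire messages exactly cancel the contribution of rows with timestamp $t\le T-N$. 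First I would formalize this decomposition as $\bm{A}_\text{W}^\top\bm{A}_\text{W}-\bm{B}^\top\bm{B} = (\bm{A}_T^\top\bm{A}_T - \bm{B}_{\pfive}^\top\bm{B}_{\pfive}) + (\bm{B}_{\pfive}^\top\bm{B}_{\pfive} - \bm{B}^\top\bm{B})$, so that the first term is controlled by Theorem~\ref{thm:dist} and the second term is governed by the mEH/IWMT accuracy guarantee.

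Second, I would bound the spectral norm of each part. The \pfive term satisfies $\|\bm{A}_T^\top\bm{A}_T - \bm{B}_{\pfive}^\top\bm{B}_{\pfive}\|_2 \le O(1)\varepsilon\|\bm{A}_T\|_F^2 \le O(1)\varepsilon\|\bm{A}_\text{W}\|_F^2\cdot\frac{\|\bm{A}_T\|_F^2}{\|\bm{A}_\text{W}\|_F^2}$; a uniform bound requires rescaling the \pfive threshold by the window norm, which mirrors the parallel multi-level construction of Algorithm~\ref{alg:mlsidsfd} and of DA2 in~\cite{zhang2017tracking}. The expiration term is handled exactly in expectation because $\bm{e}_t^\top\bm{e}_t$ is subtracted verbatim from $\bm{B}$; any residual error is the known $O(\varepsilon)\|\bm{A}_\text{W}\|_F^2$ slack of the IWMT ``almost-exact'' block, which we inherit as a black box from~\cite{zhang2017tracking}.

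Third, for the complexity bounds I would argue as follows. Per site, every arriving row triggers one call to \pfive.\textsc{Update}, whose amortized cost by Theorem~\ref{thm:dist} is $O\!\left(\frac{d}{\varepsilon}\log d\right)$; the mEH update and the periodic IWMT refresh contribute only an $O(\mathrm{polylog})$ multiplicative overhead per row and no additional matrix factorizations, so the overall amortized per-update time is $O\!\left(\frac{d}{\varepsilon}\log d\right)$, matching the claimed bound. For communication, the \texttt{update} messages are exactly those emitted by \pfive, which total $O\!\left(\frac{md}{\varepsilon}\log\|\bm{A}\|_F^2\right)$ by Theorem~\ref{thm:dist}. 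Each \texttt{expire} message corresponds to a block that was created earlier and is matched one-to-one (up to constant factors) against a previously counted IWMT bucket, so the expirations add only a constant-factor overhead and preserve the same asymptotic communication. The success probability is inherited directly from \pfive, namely Eq.~\eqref{eq:prob}; plugging Algorithm~\ref{alg:amplify} into the inner RandNLA step of \pfive boosts this to $1-\delta$ at the cost of an extra $\log^2(1/\delta)$ factor in the per-update time, while leaving the communication bound unchanged.

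The main obstacle I anticipate is reconciling the RandNLA-based \pfive snapshots (which use the rank-revealing structure $\bm{Z}\bm{Z}^\top\bm{C}^\top\bm{C} + \bm{C}^\top\bm{C}\bm{Z}\bm{Z}^\top - \bm{Z}\bm{Z}^\top\bm{C}^\top\bm{C}\bm{Z}\bm{Z}^\top$) with the raw-vector expirations $\bm{e}_t^\top\bm{e}_t$ coming from the mEH layer: a priori, subtracting a raw $\bm{e}_t^\top\bm{e}_t$ from a sum of snapshot-shaped matrices could reintroduce the cumulative restoring-norm error of Eq.~\eqref{eq:norm-error}. I would resolve this by showing that the IWMT buckets are maintained on the raw stream in parallel with and independently of the \pfive snapshot layer, so that the expiration subtractions act on a separate, exactly-maintained covariance component; the only interaction between the two layers is the shared $\theta$ parameter, whose correctness is already guaranteed by the multi-level Fact~\ref{prop:ml} argument. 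Once this independence is made explicit, the error, communication, and time bounds follow by combining Theorem~\ref{thm:dist} with the expiration analysis of~\cite{zhang2017tracking} in a black-box fashion.
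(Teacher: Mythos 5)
Your proposal follows essentially the same route as the paper's own (very terse) proof, which simply inherits the error and communication guarantees from DA2's analysis in~\cite{zhang2017tracking} and the per-site amortized time bound from Theorem~\ref{thm:dist} because each site runs \pfive\ as a module; you in fact supply considerably more detail, including the snapshot-vs-raw-expiration reconciliation that the paper leaves implicit. One minor algebraic slip: your displayed decomposition should carry the subtracted prefix term $\bm{A}_{T-N}^\top\bm{A}_{T-N}$ (since $\bm{A}_\text{W}^\top\bm{A}_\text{W}=\bm{A}_T^\top\bm{A}_T-\bm{A}_{T-N}^\top\bm{A}_{T-N}$), which is exactly what the \texttt{expire} messages are meant to cancel and what the mEH/IWMT error term actually measures.
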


The detailed proof can be found in Appendix~\ref{app:dist-sw}.

\section{Experiments}

\subsection{Experimental Setup}

\subsubsection{Problem Scenarios, Algorithms and Parameter Settings}

We evaluate our proposed framework \sidsfd against the space-/communication-optimal baseline algorithms listed in Table~\ref{tab:comparison} for various tracking matrix sketching problems:

\begin{itemize}[leftmargin=*]
    \item \textbf{Matrix Sketch over \underline{Sliding Window} (Problem~\ref{prob:sw-fd})}: Fast-DS-FD~\cite{yin2024optimal}.
    \item \textbf{\underline{ATTP} Matrix Sketch (Problem~\ref{prob:attp})}: PFD~\cite{shi2021time}.
    \item \textbf{Tracking Approximate Matrix Multiplication over Sliding Window (\underline{AMM}, Problem~\ref{prob:sw-amm})}: hDS-COD/SO-COD~\cite{xian2025optimal,yao2025optimal}.
    \item \textbf{Tracking \underline{Distributed} Matrix Sketch (Problem~\ref{prob:dsfd})}: P2~\cite{ghashami2014continuous}.
    \item \textbf{Distributed Matrix Sketch over Sliding Window (\underline{Distributed SW}, Problem~\ref{prob:dswfd})}: DA2~\cite{zhang2017tracking}.
\end{itemize}

The \textbf{\underline{bold-underlined}} terms are the abbreviations used in the experimental result figures. For all algorithms, common parameters are the dimension of the vector stream, denoted as $d$, and the upper bound on the relative covariance error, $\varepsilon$. Our experiments primarily focus on comparing various performance metrics of \sidsfd against baseline algorithms under varying values of $\varepsilon$. The specific metrics we focus on will be elaborated in Section~\ref{sec:metrics}.

\subsubsection{Metrics}
\label{sec:metrics}

\begin{itemize}[leftmargin=*]
    \item \textbf{Maximum sketch size} assesses the memory cost of a matrix sketch algorithm. For each algorithm, we record the peak total memory usage of all data structures (in number of rows or KB) during runtime over the entire stream.

    \item \textbf{Empirical relative covariance error} assesses the quality of the approximate sketch matrix produced by algorithms. It is defined as $\lVert \bm{A}^\top \bm{A} - \bm{B}^\top \bm{B}\rVert_2 / \lVert \bm{A} \rVert_F^2$, where $\bm{A}$ is the ground-truth matrix and $\bm{B}$ is the sketch matrix generated by the algorithm. In the case of approximate matrix multiplication, the relative error is defined as $\lVert \bm{X} \bm{Y}^\top - \bm{A} \bm{B}^\top \rVert_2 / (\lVert \bm{X} \rVert_F \lVert \bm{Y} \rVert_F)$, where $\bm{X}$ and $\bm{Y}$ are the input stream matrices, while $\bm{A}$ and $\bm{B}$ are the resulting sketches. It is important to note that the parameter $\varepsilon$ is an upper bound on the relative covariance error (i.e., the worst-case approximation guarantee). In practice, the actual relative covariance error observed during experiments may be significantly lower than $\varepsilon$. Even under the same $\varepsilon$ setting, different algorithms may exhibit different empirically measured covariance errors.

    \item \textbf{Amortized update time} refers to the average time to update for each input vector from the streams during the runtime.

    \item \textbf{Speedup ratio} measures how much faster \sidsfd is compared with the baseline. It is calculated as the ratio of the amortized update time of the baseline algorithm to the amortized update time of the \sidsfd algorithm.

    \item \textbf{Communication cost}: In distributed settings, this metric captures the total size of messages exchanged between sites and the coordinator during the execution of the sketch algorithm.

    \item \textbf{Space/communication cost ratio} compares the space/communication footprint of \sidsfd algorithm with the baseline algorithm, calculated as the ratio of space/communication cost of our \sidsfd algorithms to that of the baseline algorithms.
\end{itemize}

\subsubsection{Hardware, Environments, and Datasets}

All algorithms are implemented in Python 3.12.0. To ensure stable timing measurements, experiments are conducted on a single idle core of an Intel® Xeon® CPU E7-4809 v4 clocked at 2.10 GHz. Each algorithm is run on each dataset 20 times, and the average of the update time is recorded. For probabilistic algorithms such as Power Iteration and Simultaneous Iteration, we use a fixed random seed to ensure the reproducibility of our results. The experiments conducted in distributed settings are implemented with the help of Ray\footnote{https://www.ray.io/} 2.46.0~\cite{moritz2018ray}. We query and record the sketch and metrics every 20 steps. The datasets we use are:

\begin{enumerate}[leftmargin=*]
    \item \textbf{Uniform Random}: The entries of matrices are drawn uniformly at random from the interval $(0, 1]$. We use this dataset for AMM. We set $d_x=300, d_y=500, N=5000, T=10000$.
    \item \textbf{Random Noisy}~\cite{vershynin2011spectral}: The matrices are generated by the formula $\bm{A} = \bm{SDU} + \bm{N}/\zeta$. Here, $\bm{S}$ is an $n \times d$ matrix of signal coefficients, with each entry drawn from a standard normal distribution. $\bm{D}$ is a diagonal matrix with $\bm{D}_{i, i} = 1-(i-1)/d$. $\bm{U}$ represents the signal row space, satisfying $\bm{UU}^\top = \bm{I}_d$, and $\zeta$ is a positive real parameter for adjusting noise intensity to control the signal-to-noise ratio. The matrix $\bm{N}$ adds Gaussian noise, with $\bm{N}_{i,j}$ drawn from the standard Gaussian distribution $\mathcal{N}(0, 1)$. We use the dataset for ATTP, Distributed (SW). For ATTP, we set $d=500, T=10000$. For Distributed, we set $m=4, d=500, T=10000$. For Distributed (SW), we set $m=4, d=500, N=5000, T=10000$.
    \item \textbf{Real-world Dataset}: GloVe\footnote{\url{https://nlp.stanford.edu/projects/glove/}} provides a dataset of pre-trained word vectors~\cite{pennington2014glove} with dimension of $d=300$. We use the dataset for the Sliding Window. We set $N=5000, T=10000$.
\end{enumerate}

\subsection{Experimental Results}

Figures~\ref{fig:eps-time}-\ref{fig:d-time} show the comparison of different metrics between the baseline algorithms and the \sidsfd algorithms across the five problem scenarios. Among them, the {\color{cyan} cyan} lines represent the space/communication-optimal baseline algorithms, while the {\color{magenta} magenta} lines represent the \sidsfd-optimized algorithms proposed in this work. The {\color{blue} blue} lines represent the speedup ratio, while the {\color{green} green} lines represent the space/communication cost ratio.

\begin{figure*}[htbp]
    \centering
    \begin{subfigure}[t]{0.19\textwidth}
        \captionsetup{justification=centering, skip=0em}
        \includegraphics[width=\textwidth]{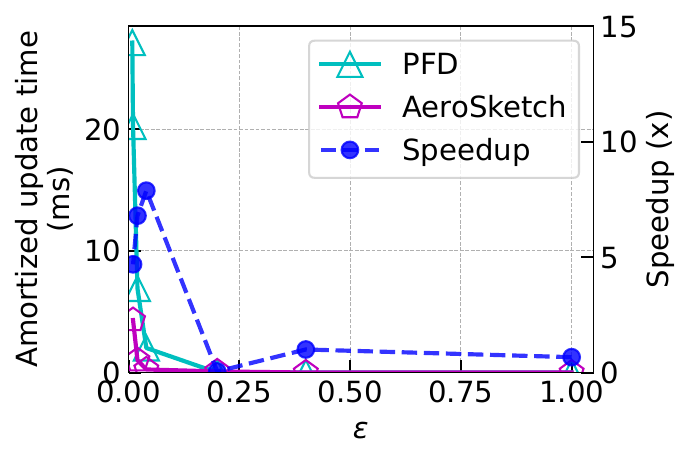}
        \caption{ATTP}\label{fig:eps-time-attp}
    \end{subfigure}
    \begin{subfigure}[t]{0.19\textwidth}
        \captionsetup{justification=centering, skip=0em}
        \includegraphics[width=\textwidth]{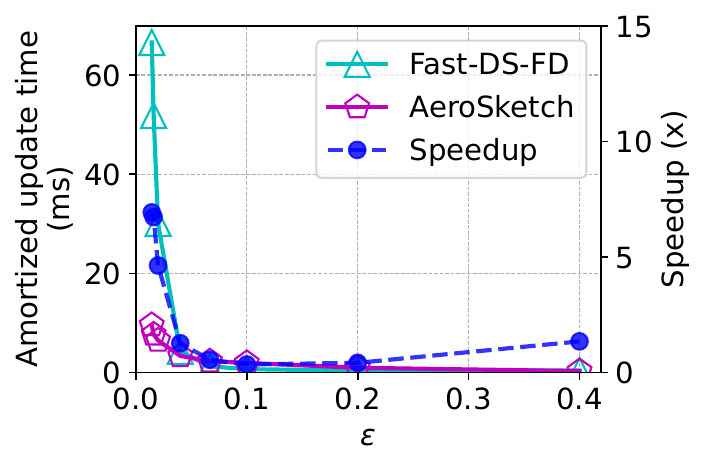}
        \caption{Sliding Window}\label{fig:eps-time-sw}
    \end{subfigure}
    \begin{subfigure}[t]{0.19\textwidth}
        \captionsetup{justification=centering, skip=0em}
        \includegraphics[width=\textwidth]{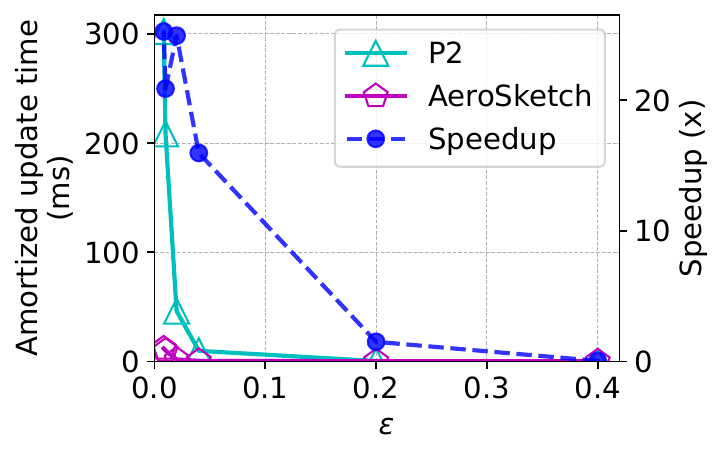}
        \caption{Distributed}\label{fig:eps-time-dsfd}
    \end{subfigure}
    \begin{subfigure}[t]{0.19\textwidth}
        \captionsetup{justification=centering, skip=0em}
        \includegraphics[width=\textwidth]{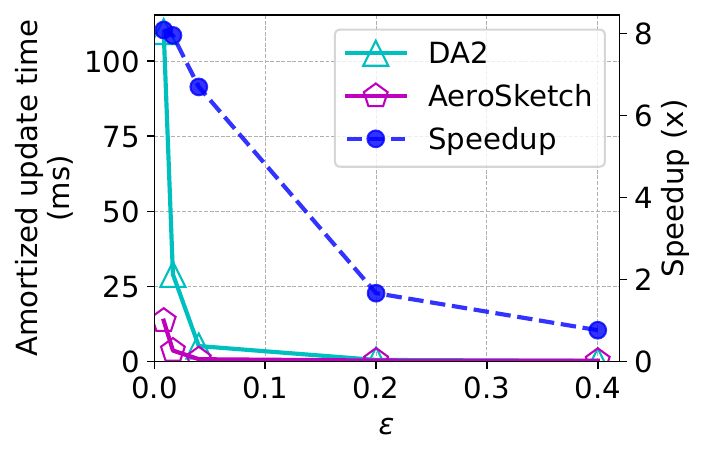}
        \caption{Distributed SW}\label{fig:eps-time-dswfd}
    \end{subfigure}
    \begin{subfigure}[t]{0.19\textwidth}
        \captionsetup{justification=centering, skip=0em}
        \includegraphics[width=\textwidth]{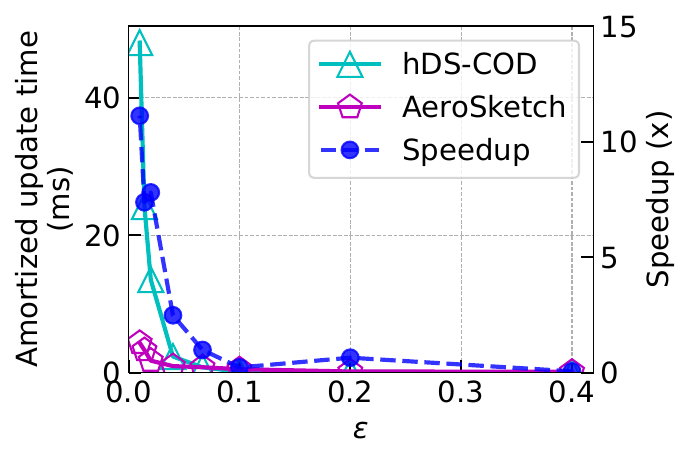}
        \caption{AMM}\label{fig:eps-time-amm}
    \end{subfigure}
\vspace{-1em}
    \caption{Amortized update time vs. parameter $\varepsilon$.}\label{fig:eps-time}
\end{figure*}

\begin{figure*}[htbp]
\vspace{-1em}
    \centering
    \begin{subfigure}[t]{0.19\textwidth}
        \captionsetup{justification=centering,skip=0em}
        \includegraphics[width=\textwidth]{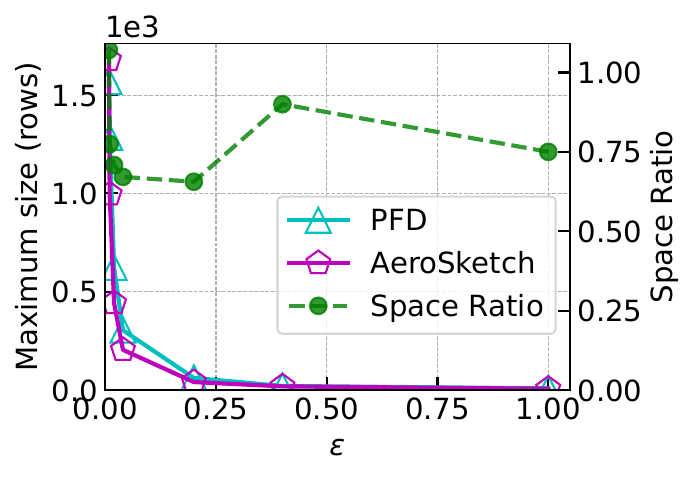}
        \caption{ATTP}\label{fig:eps-size-attp}
    \end{subfigure}
    \begin{subfigure}[t]{0.19\textwidth}
        \captionsetup{justification=centering,skip=0em}
        \includegraphics[width=\textwidth]{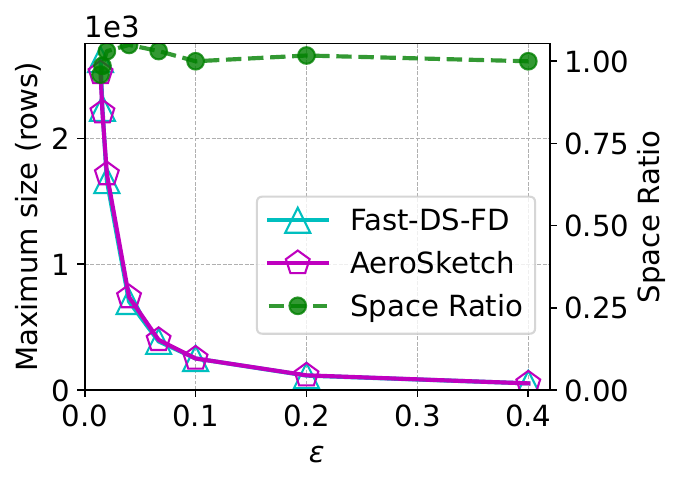}
        \caption{Sliding Window}\label{fig:eps-size-sw}
    \end{subfigure}
    \begin{subfigure}[t]{0.19\textwidth}
        \captionsetup{justification=centering,skip=0em}
        \includegraphics[width=\textwidth]{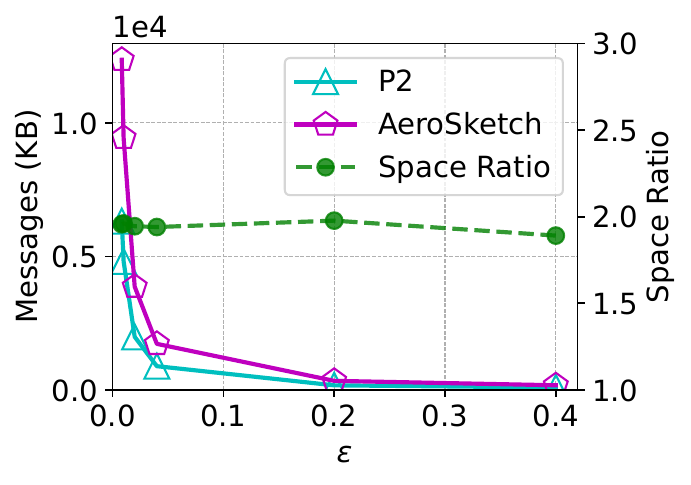}
        \caption{Distributed}\label{fig:eps-size-dsfd}
    \end{subfigure}
    \begin{subfigure}[t]{0.19\textwidth}
        \captionsetup{justification=centering,skip=0em}
        \includegraphics[width=\textwidth]{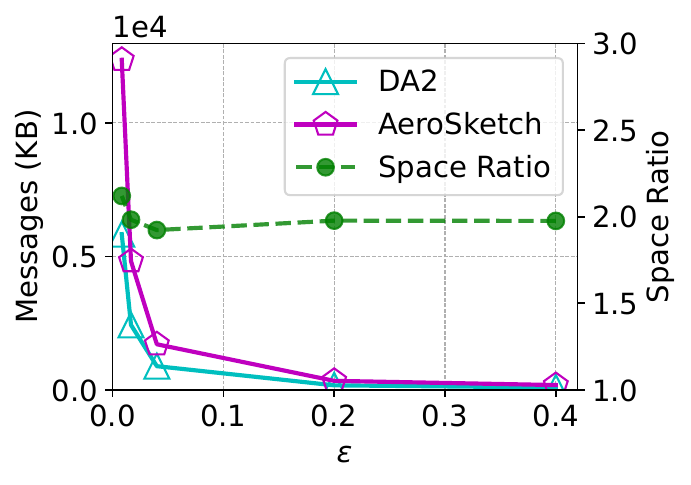}
        \caption{Distributed SW}\label{fig:eps-size-dswfd}
    \end{subfigure}
    \begin{subfigure}[t]{0.19\textwidth}
        \captionsetup{justification=centering,skip=0em}
        \includegraphics[width=\textwidth]{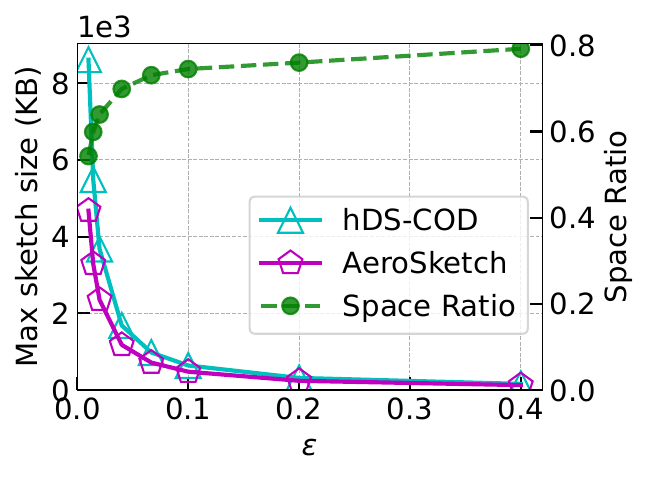}
        \caption{AMM}\label{fig:eps-size-amm}
    \end{subfigure}
\vspace{-1em}
    \caption{Space/communication cost vs. parameter $\varepsilon$.}\label{fig:eps-size}
\end{figure*}

\subsubsection{$\varepsilon$ vs. Time, Space, and Communication Cost}
For both the baseline algorithm and our \sidsfd algorithm, we set a series of parameters $\varepsilon$ (corresponding to the x-axis in Figures~\ref{fig:eps-time} and~\ref{fig:eps-size}) for each of the five problem scenarios, and recorded the amortized update time and maximum space cost (or communication cost under the distributed scenario) of both algorithms under each $\varepsilon$ setting, corresponding to the y-axis in Figures~\ref{fig:eps-time} and~\ref{fig:eps-size}.

Figure~\ref{fig:eps-time} shows the relationship between the parameter $\varepsilon$ and the amortized update time. When $\varepsilon$ is set to a larger value, our \sidsfd algorithms show insignificant acceleration compared to the baseline algorithms. However, as $\varepsilon$ gradually converges to 0, the speedup ratio of our \sidsfd algorithms relative to the baseline algorithms increases significantly, which aligns with the conclusions in Table~\ref{tab:comparison}. Taking Figure~\ref{fig:eps-time-sw} as an example, the time complexity corresponding to Fast-DS-FD is $O\left(\left(\frac{d}{\varepsilon}+\frac{1}{\varepsilon^3}\right)\log R\right)$, while the time complexity of our proposed \sidsfd algorithms in Theorem~\ref{thm:sw} of this paper is $O\left(\frac{d}{\varepsilon}\log d \log R\right)$. When $\varepsilon>1/\sqrt{d}$ (thus $d/\varepsilon>1/\varepsilon^3$), the $1/\varepsilon^3$ term in the baseline algorithm is negligible compared to $d/\varepsilon$. However, as $\varepsilon$ starts from $\varepsilon=O\left(1/d\right)$ and gradually converges to 0, the $1/\varepsilon^3$ term in the baseline algorithm's time complexity begins to dominate, causing the overall time complexity to degrade to a cubic order of $O(d^3)$. In contrast, our \sidsfd achieves a near-quadratic order of $O(d^2 \log d)$. Similar conclusions can be drawn for other scenarios in Table~\ref{tab:comparison} using analogous analysis, indicating that our \sidsfd can prevent performance degradation as $\varepsilon$ converges to 0, ensuring better time performance even under stricter error bounds.

Figure~\ref{fig:eps-size} shows the relationship between the space/communication cost of the algorithms and parameter \(\varepsilon\). We observe that across all values of \(\varepsilon\) in each problem scenario, the space/communication cost ratio between our \sidsfd algorithm and the baseline algorithm stabilizes at a constant between 0.5 and 2. This aligns with the conclusion in Table~\ref{tab:comparison} that both the \sidsfd algorithm and the baseline algorithm achieve the same order and optimal asymptotic space/communication complexity. Theoretically, under the same average error, the space or communication cost of \sidsfd is approximately twice that of the previous state-of-the-art algorithm, as shown in Figures~\ref{fig:eps-size-dsfd} and~\ref{fig:eps-size-dswfd}, as the communication cost ratio is close to 2. The reason, as noted earlier, is that the baseline algorithm computes snapshots exactly, while our approximate algorithm \sidsfd must store an additional subspace-generating matrix \(\bm{Z}_i\) (line 15 in Algorithm~\ref{alg:sidsfd-update}) to recover the original matrix from the residual matrix \(\bm{C}_i\), thus introducing a constant factor of 2. However, we observe that certain space optimization techniques can be employed to reduce the constant factors. For example, in PFD, recording partial checkpoints and full checkpoints can be optimized to record a single type of snapshot; the auxiliary queue in Fast-DS-FD and hDS-COD/SO-COD has been proven to be removable. These techniques can make the actual space/communication cost of \sidsfd essentially equal to or even slightly lower than that of the baseline algorithm under the same relative error upper bound \(\varepsilon\), as reflected in Figures~\ref{fig:eps-size-attp},~\ref{fig:eps-size-sw}, and~\ref{fig:eps-size-amm}, where the space cost ratio is equal to or slightly below 1. Overall, both theoretical analysis and experimental results verify that \sidsfd achieves the same order-optimal asymptotic space complexity as the baseline.

\begin{figure*}[htbp]
    \centering
    \begin{subfigure}[t]{0.19\textwidth}
        \captionsetup{justification=centering,skip=0em}
        \includegraphics[width=\textwidth]{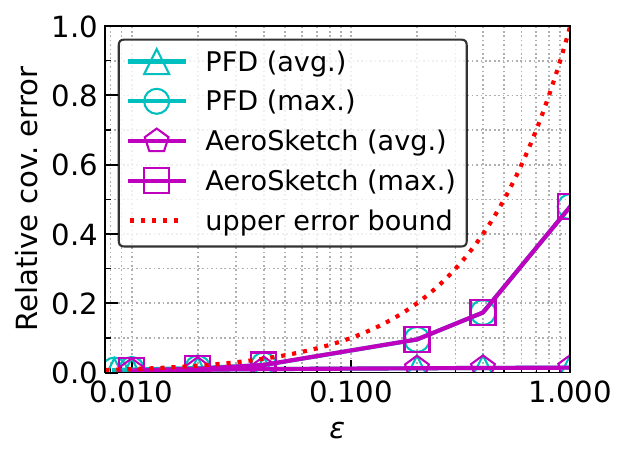}
        \caption{ATTP}\label{fig:eps-err-attp}
    \end{subfigure}
    \begin{subfigure}[t]{0.19\textwidth}
        \captionsetup{justification=centering,skip=0em}
        \includegraphics[width=\textwidth]{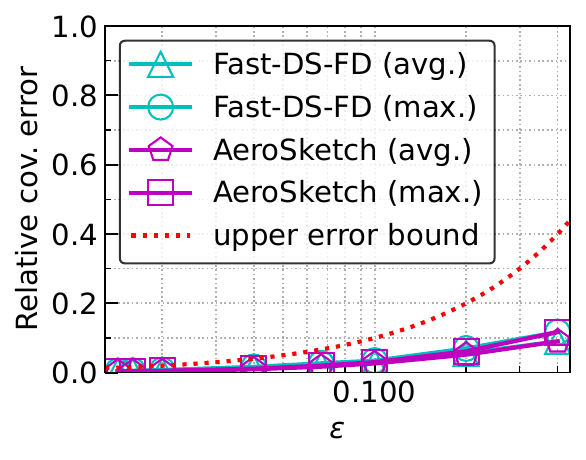}
        \caption{Sliding Window}\label{fig:eps-err-sw}
    \end{subfigure}
    \begin{subfigure}[t]{0.19\textwidth}
        \captionsetup{justification=centering,skip=0em}
        \includegraphics[width=\textwidth]{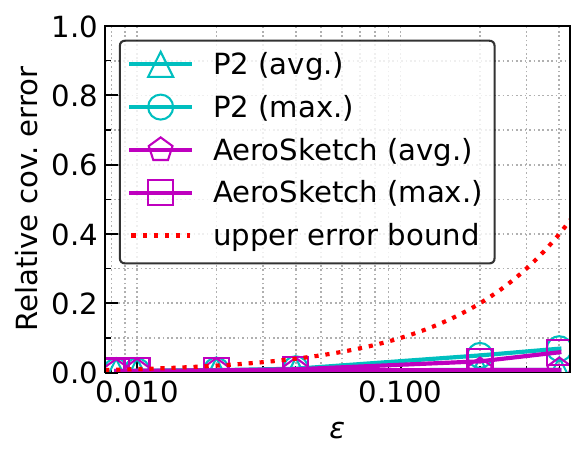}
        \caption{Distributed}\label{fig:eps-err-dsfd}
    \end{subfigure}
    \begin{subfigure}[t]{0.19\textwidth}
        \captionsetup{justification=centering,skip=0em}
        \includegraphics[width=\textwidth]{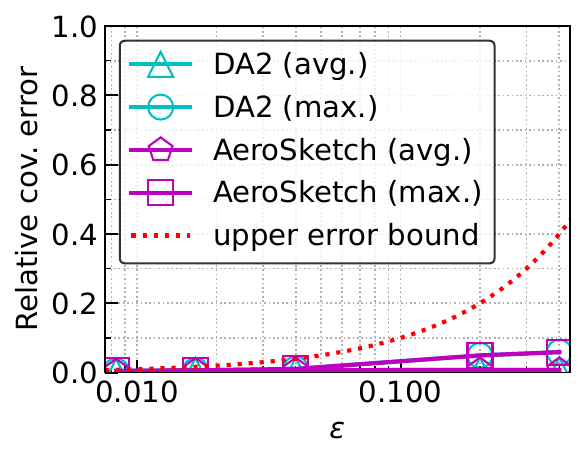}
        \caption{Distributed SW}\label{fig:eps-err-dswfd}
    \end{subfigure}
    \begin{subfigure}[t]{0.19\textwidth}
        \captionsetup{justification=centering,skip=0em}
        \includegraphics[width=\textwidth]{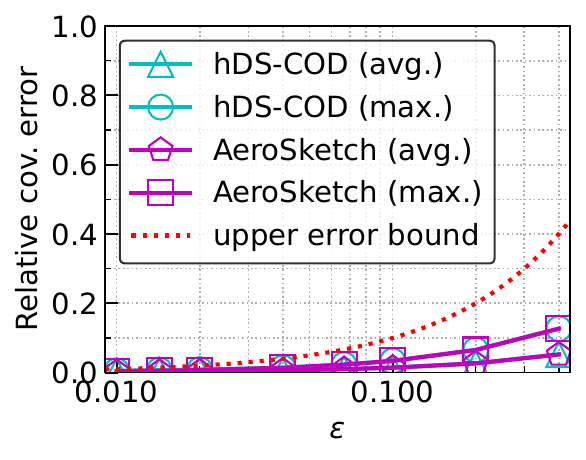}
        \caption{AMM}\label{fig:eps-err-amm}
    \end{subfigure}
    \caption{Empirical relative covariance errors vs. parameter $\varepsilon$.}\label{fig:eps-err}
\end{figure*}

\begin{remark}
    It is important to note that the parameter $\varepsilon$ used in the algorithm is an upper bound on the empirical relative covariance error (i.e., $\varepsilon$ is the worst-case approximation guarantee). As shown in Figure~\ref{fig:eps-err}, the maximum empirical relative covariance error of all algorithms is lower than the upper bound parameter $\varepsilon$, which validates the correctness of our algorithm implementation. Moreover, even under the same $\varepsilon$ setting, different algorithms may exhibit different empirically measured covariance errors. Given this, beyond the analysis of $\varepsilon$'s effect on performance metrics of time and space/communication cost in the previous subsection, we now turn to a comparison of the time and space/communication cost of different algorithms under the same \textbf{empirical errors} in the next subsection.
\end{remark}

\begin{figure*}[htbp]
    \centering
    \begin{subfigure}[t]{0.19\textwidth}
        \captionsetup{justification=centering,skip=0em}
        \includegraphics[width=\textwidth]{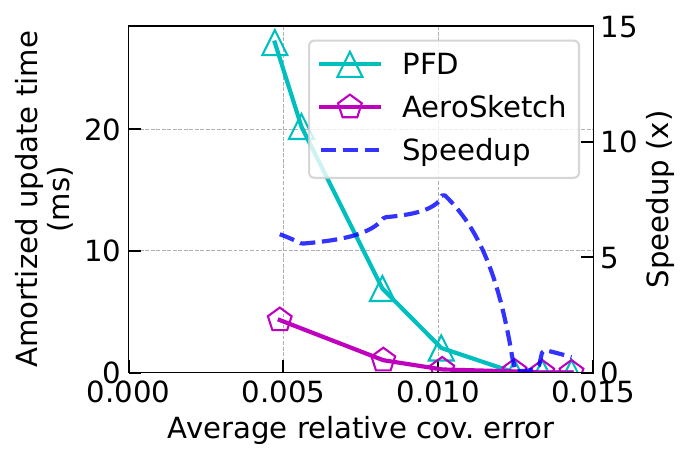}
        \caption{ATTP}\label{fig:time-avg-err-attp}
    \end{subfigure}
    \begin{subfigure}[t]{0.19\textwidth}
        \captionsetup{justification=centering,skip=0em}
        \includegraphics[width=\textwidth]{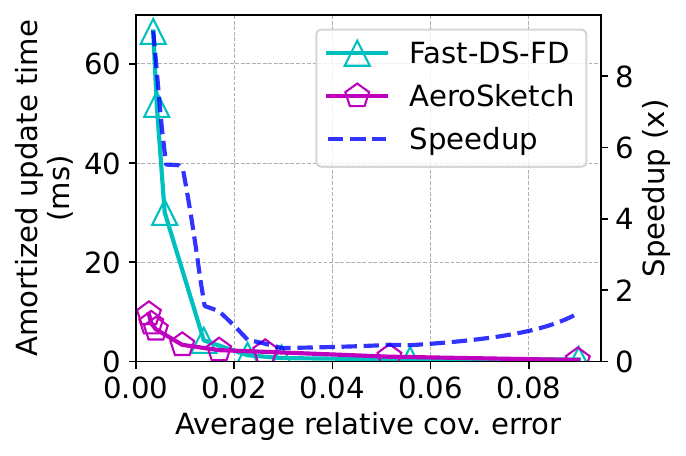}
        \caption{Sliding Window}\label{fig:time-avg-err-sw}
    \end{subfigure}
    \begin{subfigure}[t]{0.19\textwidth}
        \captionsetup{justification=centering,skip=0em}
        \includegraphics[width=\textwidth]{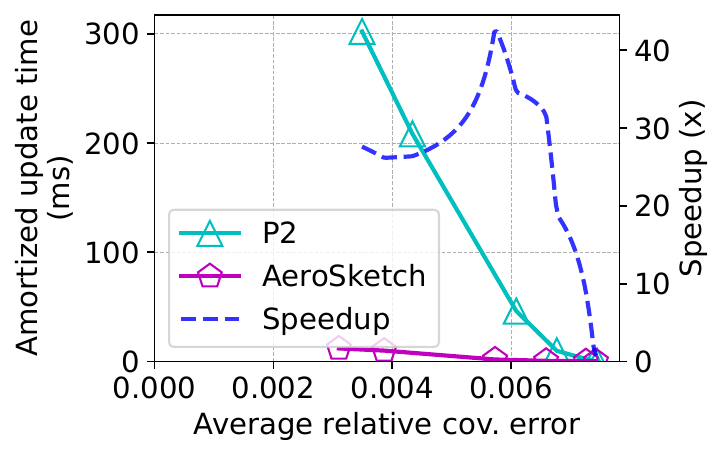}
        \caption{Distributed}\label{fig:time-avg-err-dsfd}
    \end{subfigure}
    \begin{subfigure}[t]{0.19\textwidth}
        \captionsetup{justification=centering,skip=0em}
        \includegraphics[width=\textwidth]{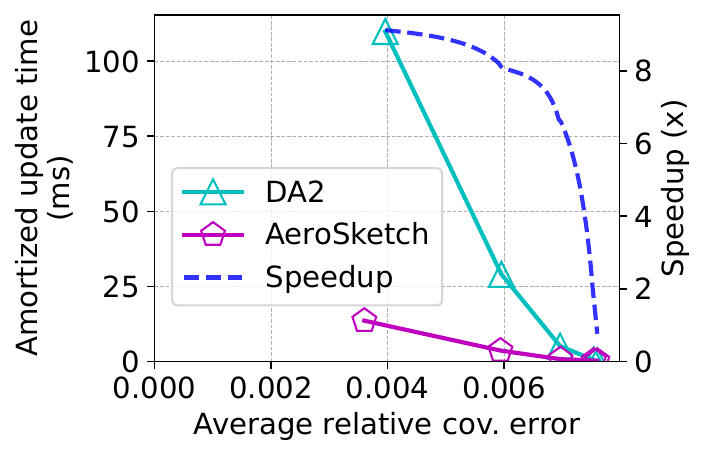}
        \caption{Distributed SW}\label{fig:time-avg-err-dswfd}
    \end{subfigure}
    \begin{subfigure}[t]{0.19\textwidth}
        \captionsetup{justification=centering,skip=0em}
        \includegraphics[width=\textwidth]{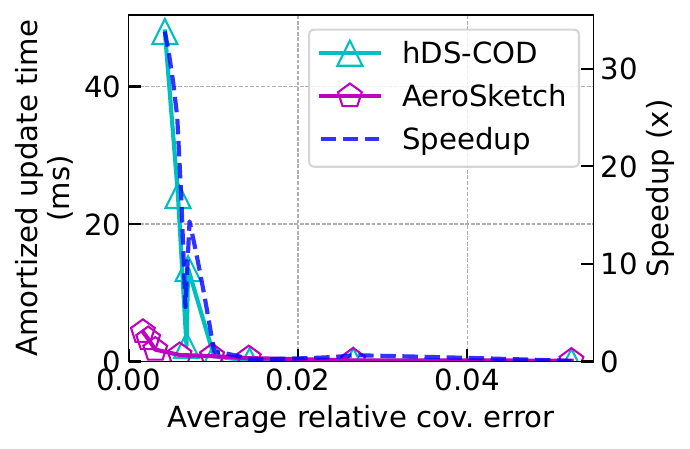}
        \caption{AMM}\label{fig:time-avg-err-amm}
    \end{subfigure}
    \caption{Amortized update time vs. Average error}\label{fig:time-avg-err}
\end{figure*}

\begin{figure*}[htbp]
\vspace{-1em}
    \centering
    \begin{subfigure}[t]{0.19\textwidth}
        \captionsetup{justification=centering,skip=0em}
        \includegraphics[width=\textwidth]{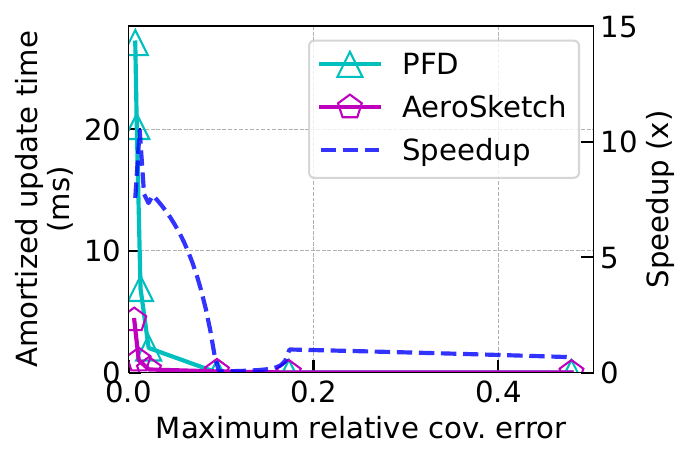}
        \caption{ATTP}\label{fig:time-max-err-attp}
    \end{subfigure}
    \begin{subfigure}[t]{0.19\textwidth}
        \captionsetup{justification=centering,skip=0em}
        \includegraphics[width=\textwidth]{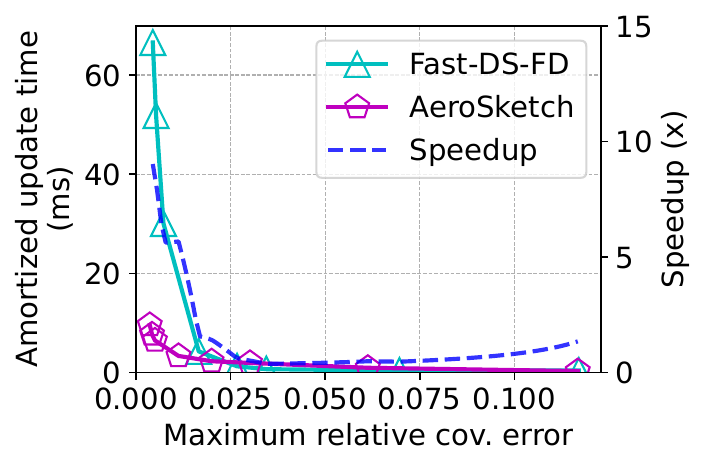}
        \caption{Sliding Window}\label{fig:time-max-err-sw}
    \end{subfigure}
    \begin{subfigure}[t]{0.19\textwidth}
        \captionsetup{justification=centering,skip=0em}
        \includegraphics[width=\textwidth]{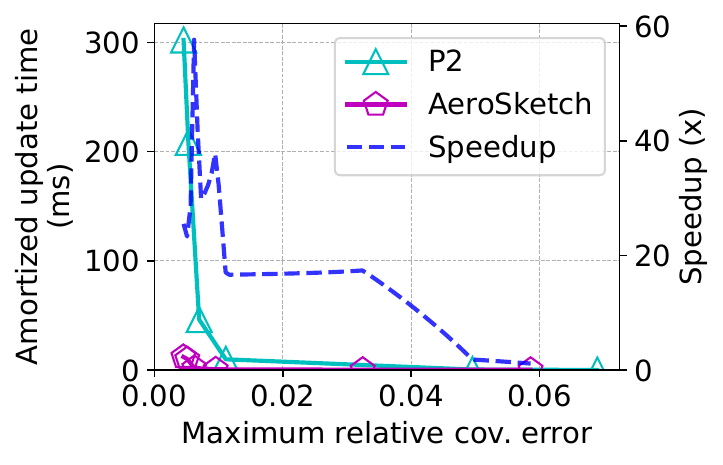}
        \caption{Distributed}\label{fig:time-max-err-dsfd}
    \end{subfigure}
    \begin{subfigure}[t]{0.19\textwidth}
        \captionsetup{justification=centering,skip=0em}
        \includegraphics[width=\textwidth]{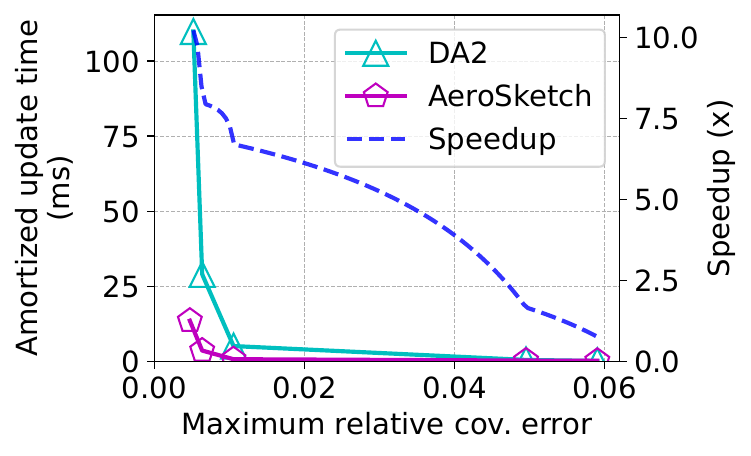}
        \caption{Distributed SW}\label{fig:time-max-err-dswfd}
    \end{subfigure}
    \begin{subfigure}[t]{0.19\textwidth}
        \captionsetup{justification=centering,skip=0em}
        \includegraphics[width=\textwidth]{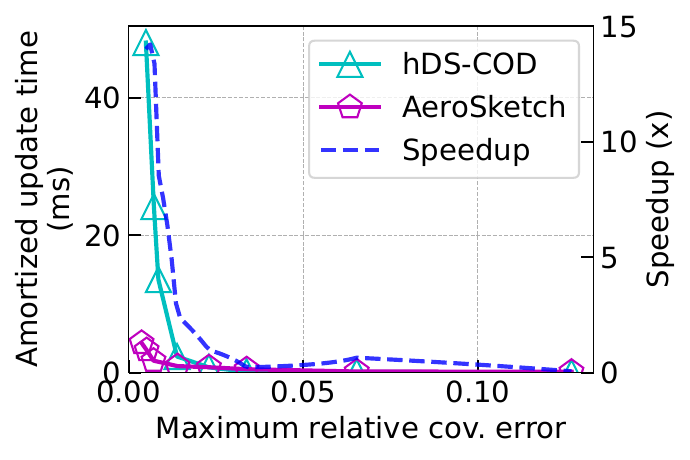}
        \caption{AMM}\label{fig:time-max-err-amm}
    \end{subfigure}
    \caption{Amortized update time vs. Maximum error}\label{fig:time-max-err}
\end{figure*}

\subsubsection{Empirical Error vs. Time vs. Space/Communication Cost}

Figures~\ref{fig:time-avg-err} and~\ref{fig:time-max-err} show the trade-off between amortized update time and avg./max. relative error across the five problem scenarios. We observe that, under the same relative error guarantee (i.e., along the same x-axis), the amortized update time of the \sidsfd algorithm is generally lower than that of the baseline algorithms. When the empirical error is not very small, the time gap between the two algorithms is limited. However, as the error bound tightens, the time advantage of the \sidsfd optimization becomes more pronounced. This aligns with our theoretical analysis: as $\varepsilon$ converges to 0, the update time complexity of baseline algorithms degrades to $O(d^3)$, while \sidsfd maintains $O(d^2 \log d)$.

\begin{figure*}[htbp]
    \centering
    \begin{subfigure}[t]{0.19\textwidth}
        \captionsetup{justification=centering,skip=0em}
        \includegraphics[width=\textwidth]{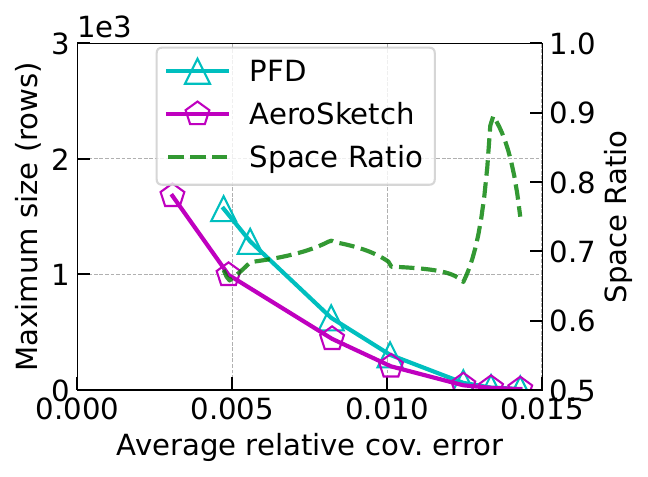}
        \caption{ATTP}\label{fig:size-avg-err-attp}
    \end{subfigure}
    \begin{subfigure}[t]{0.19\textwidth}
        \captionsetup{justification=centering,skip=0em}
        \includegraphics[width=\textwidth]{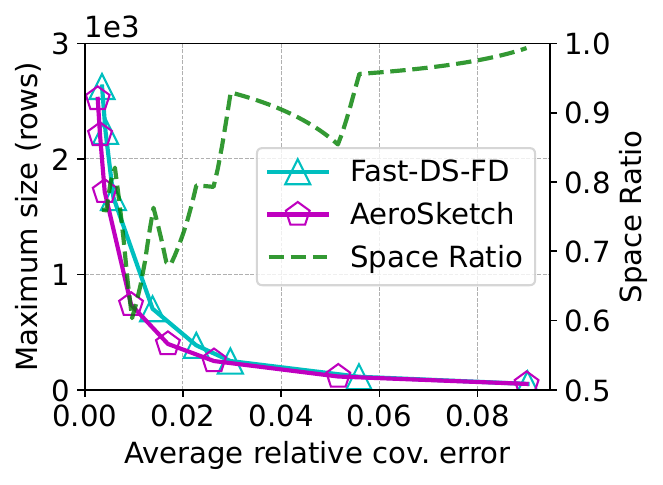}
        \caption{Sliding Window}\label{fig:size-avg-err-sw}
    \end{subfigure}
    \begin{subfigure}[t]{0.19\textwidth}
        \captionsetup{justification=centering,skip=0em}
        \includegraphics[width=\textwidth]{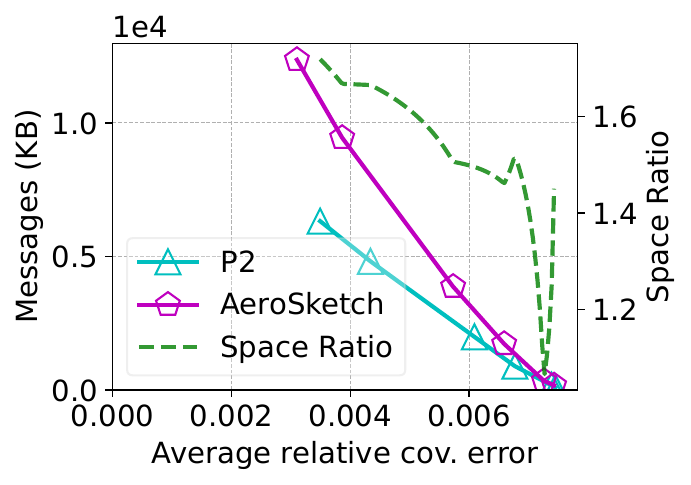}
        \caption{Distributed}\label{fig:size-avg-err-dsfd}
    \end{subfigure}
    \begin{subfigure}[t]{0.19\textwidth}
        \captionsetup{justification=centering,skip=0em}
        \includegraphics[width=\textwidth]{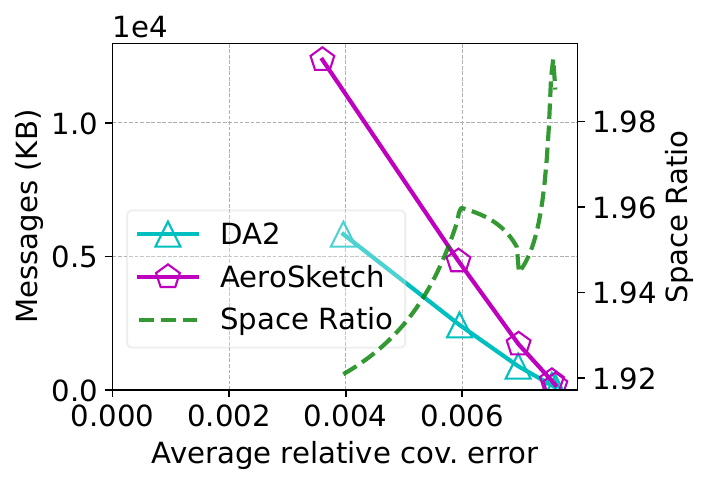}
        \caption{Distributed SW}\label{fig:size-avg-err-dswfd}
    \end{subfigure}
    \begin{subfigure}[t]{0.19\textwidth}
        \captionsetup{justification=centering,skip=0em}
        \includegraphics[width=\textwidth]{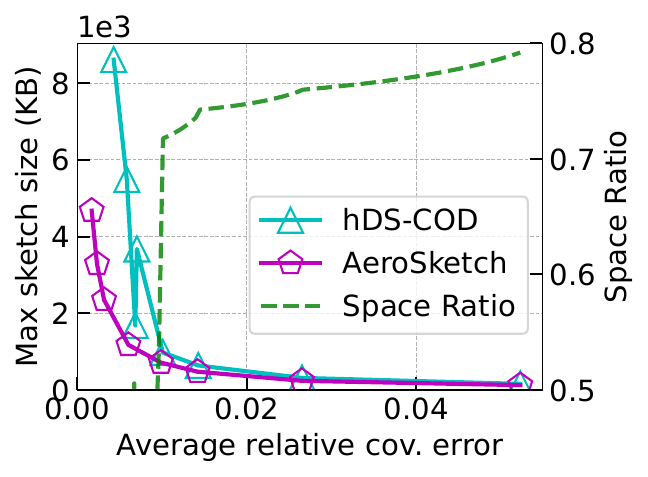}
        \caption{AMM}\label{fig:size-avg-err-amm}
    \end{subfigure}
    \caption{Space/communication cost vs. Average error}\label{fig:size-avg-err}
\end{figure*}

\begin{figure*}[htbp]
    \centering
    \begin{subfigure}[t]{0.19\textwidth}
        \captionsetup{justification=centering,skip=0em}
        \includegraphics[width=\textwidth]{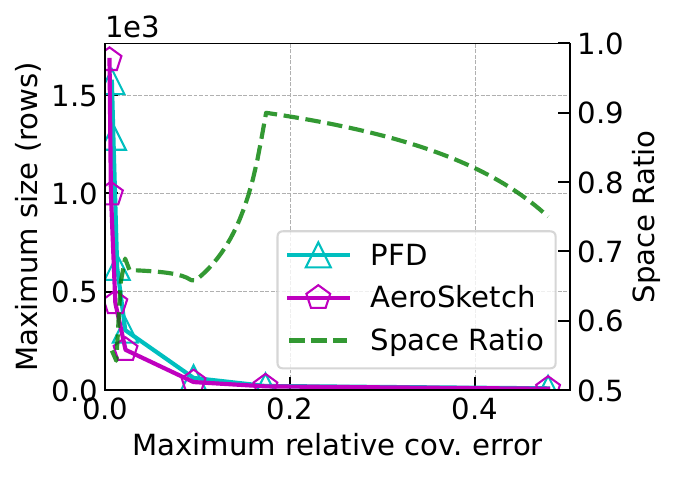}
        \caption{ATTP}\label{fig:size-max-err-attp}
    \end{subfigure}
    \begin{subfigure}[t]{0.19\textwidth}
        \captionsetup{justification=centering,skip=0em}
        \includegraphics[width=\textwidth]{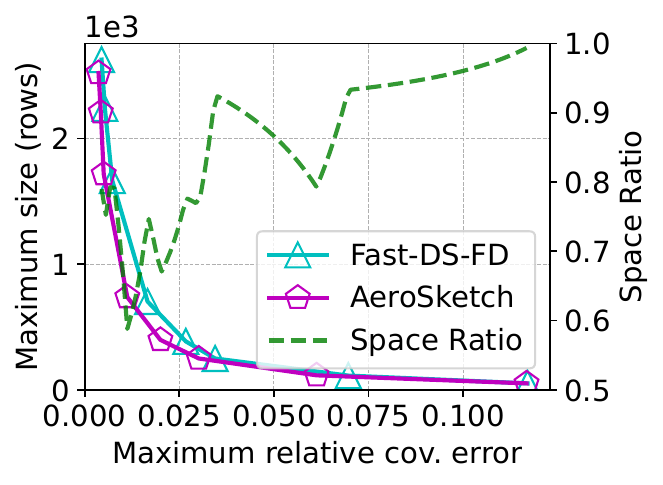}
        \caption{Sliding Window}\label{fig:size-max-err-sw}
    \end{subfigure}
    \begin{subfigure}[t]{0.19\textwidth}
        \captionsetup{justification=centering,skip=0em}
        \includegraphics[width=\textwidth]{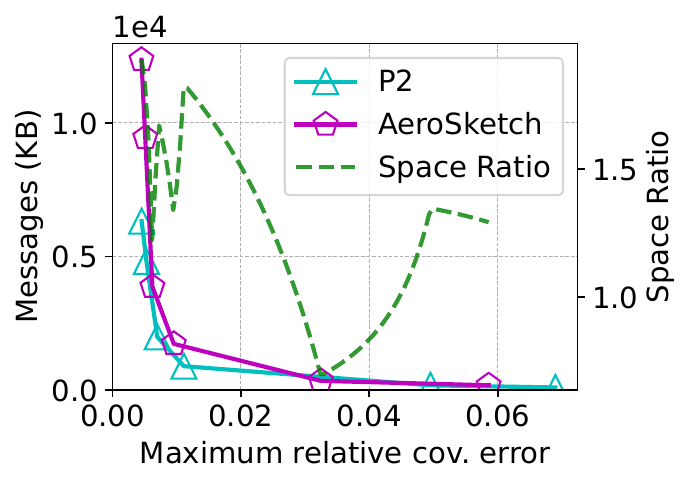}
        \caption{Distributed}\label{fig:size-max-err-dsfd}
    \end{subfigure}
    \begin{subfigure}[t]{0.19\textwidth}
        \captionsetup{justification=centering,skip=0em}
        \includegraphics[width=\textwidth]{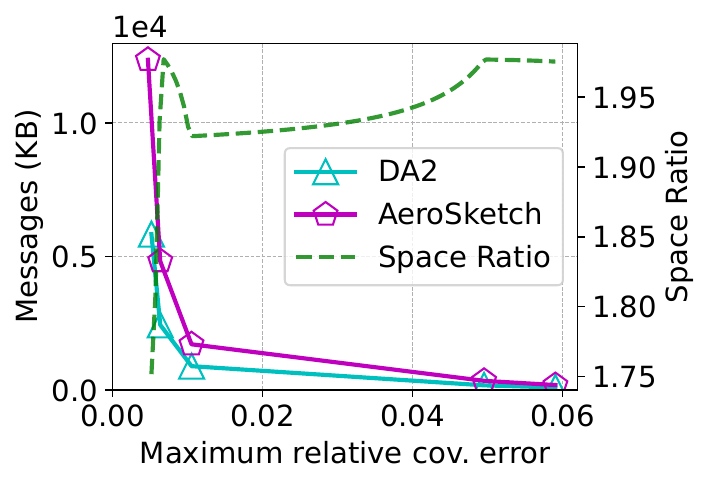}
        \caption{Distributed SW}\label{fig:size-max-err-dswfd}
    \end{subfigure}
    \begin{subfigure}[t]{0.19\textwidth}
        \captionsetup{justification=centering,skip=0em}
        \includegraphics[width=\textwidth]{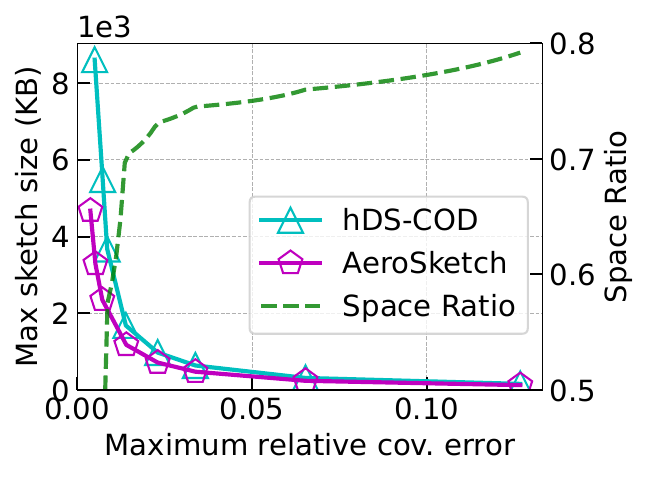}
        \caption{AMM}\label{fig:size-max-err-amm}
    \end{subfigure}
    \caption{Space/communication cost vs. Maximum error}\label{fig:size-max-err}
\end{figure*}

Figures~\ref{fig:size-avg-err} and~\ref{fig:size-max-err} show the space/communication costs vs. error. We observe that, under the same empirical error, the space/communication costs of baseline algorithms and our \sidsfd\ are comparable. In some cases, \sidsfd\ even achieves lower space costs, consistent with our earlier theoretical analysis that its space/communication complexity is of the same order and asymptotically optimal compared to baseline algorithms.

\begin{figure*}[htbp]
    \centering
    \begin{subfigure}[t]{0.19\textwidth}
        \captionsetup{justification=centering,skip=0em}
        \includegraphics[width=\textwidth]{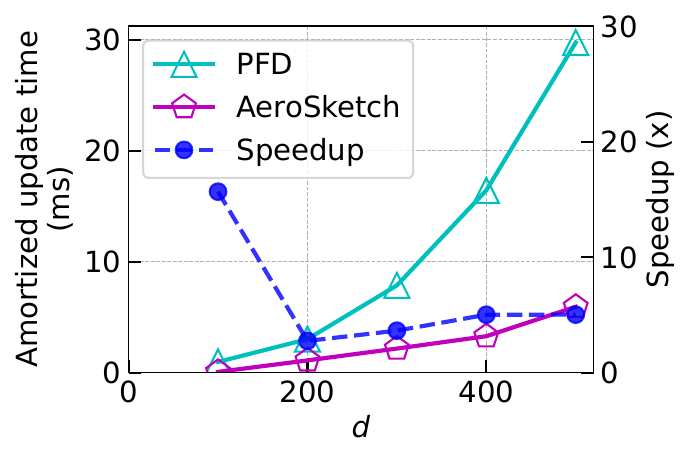}
        \caption{ATTP}\label{fig:d-time-attp}
    \end{subfigure}
    \begin{subfigure}[t]{0.19\textwidth}
        \captionsetup{justification=centering,skip=0em}
        \includegraphics[width=\textwidth]{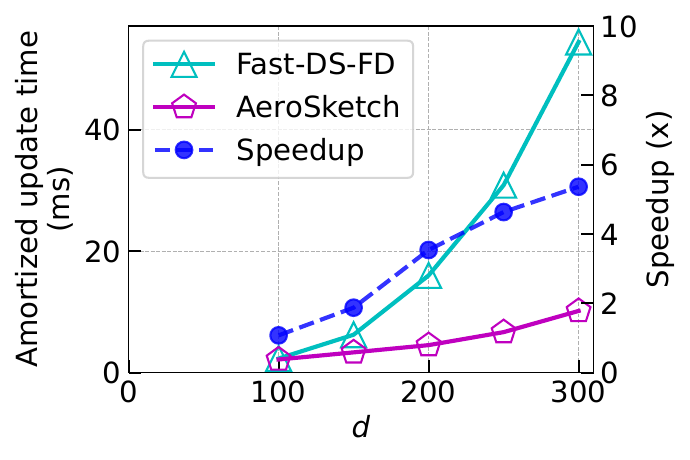}
        \caption{Sliding Window}\label{fig:d-time-sw}
    \end{subfigure}
    \begin{subfigure}[t]{0.19\textwidth}
        \captionsetup{justification=centering,skip=0em}
        \includegraphics[width=\textwidth]{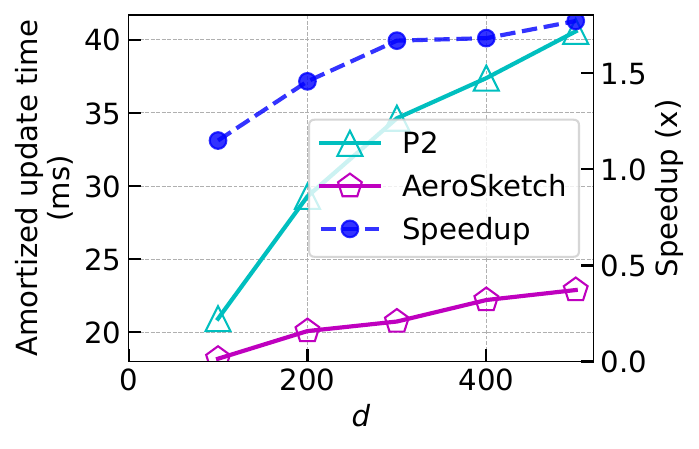}
        \caption{Distributed}\label{fig:d-time-dsfd}
    \end{subfigure}
    \begin{subfigure}[t]{0.19\textwidth}
        \captionsetup{justification=centering,skip=0em}
        \includegraphics[width=\textwidth]{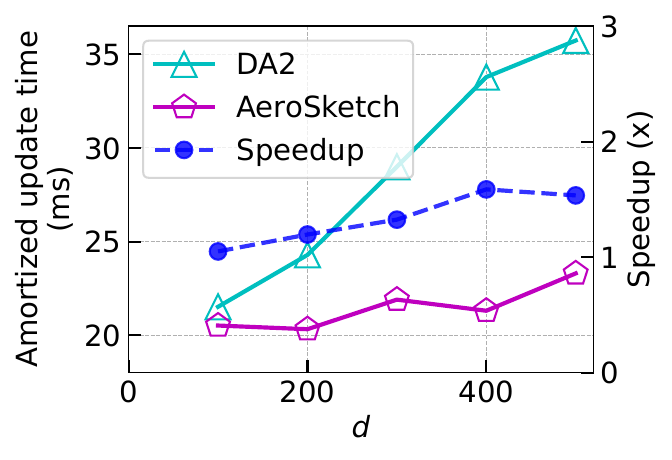}
        \caption{Distributed SW}\label{fig:d-time-dswfd}
    \end{subfigure}
    \begin{subfigure}[t]{0.19\textwidth}
        \captionsetup{justification=centering,skip=0em}
        \includegraphics[width=\textwidth]{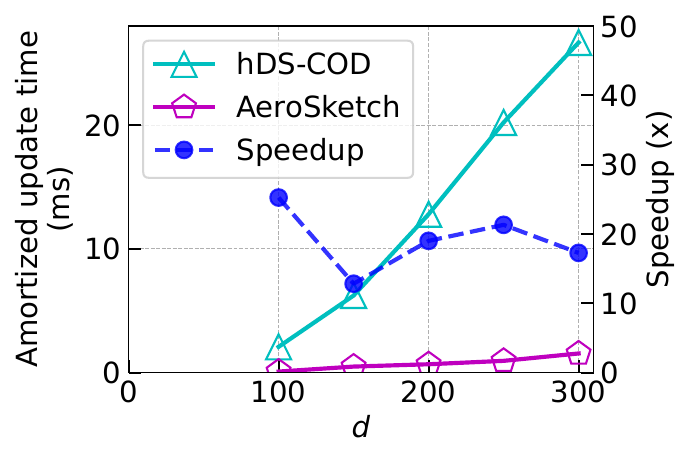}
        \caption{AMM}\label{fig:d-time-amm}
    \end{subfigure}
    \caption{$d$ vs. Amortized update time}\label{fig:d-time}
\end{figure*}

\subsubsection{Asymptotic analysis of amortized update time complexity with respect to dimension $d$}

In addition, the time efficiency of \sidsfd becomes more prominent for higher-dimensional vector streams. As we analyzed earlier, when $\varepsilon$ tightens, the update time complexities of previous baseline algorithms degrade to a cubic order of $O(d^3)$. In contrast, our \sidsfd achieves a near-quadratic order of $O(d^2 \log d)$. The relationship between the empirical amortized update time and the vector dimension \(d\) plotted in Figure~\ref{fig:d-time} clearly validates this point. For each value of \(d\), we set \(\varepsilon = 4/d = O\left(1/d\right)\). We observe that as \(d\) increases, the time costs of baseline algorithms grow significantly faster compared to \sidsfd. As dimension $d$ increases further, the time advantage of \sidsfd\ is expected to become even more substantial.

\section{Conclusion and Future Work}

In this paper, we present a new framework, \sidsfd, for optimizing the update time complexity of matrix sketch algorithms to a near-optimal level, up to a logarithmic factor. We show that \sidsfd can address a wide range of matrix sketch problems while achieving optimal asymptotic space complexity. We provide end-to-end proofs for the error bounds, time complexity, and space complexity to demonstrate the correctness and effectiveness of our approach. We implement the algorithms optimized by our framework and evaluate them on various matrix sketch problems, using large-scale synthetic and real-world streams, in comparison with baseline algorithms. The results show that our framework significantly outperforms prior methods in terms of update time, especially under tight approximation error constraints. Meanwhile, it remains competitive in space and communication cost. The experimental results corroborate our theoretical analysis.

Potential future work encompasses the following direction: 
Is there a (truly deterministic) algorithm that achieves a truly optimal amortized update time complexity $\Omega(d/\varepsilon)$, rather than the near-optimal one with a logarithmic factor—thus outperforming \sidsfd?

\begin{acks}
  \begin{sloppypar}
    This research was supported in part by National Science and Technology Major Project (2022ZD0114802), by National Natural Science Foundation of China (No. 92470128, No. U2241212). We also wish to acknowledge the support provided by the fund for building world-class universities (disciplines) of Renmin University of China, by Engineering Research Center of Next-Generation Intelligent Search and Recommendation, Ministry of Education, by Intelligent Social Governance Interdisciplinary Platform, Major Innovation \& Planning Interdisciplinary Platform for the ``Double-First Class'' Initiative, Public Policy and Decision-making Research Lab, and Public Computing Cloud, Renmin University of China.

    We thank Mingji Yang and Guanyu Cui for their valuable comments that improved the presentation of this article. We also thank Professor Sibo Wang for his many helpful suggestions.
  \end{sloppypar}
\end{acks}

\bibliographystyle{ACM-Reference-Format}
\bibliography{sample-base}

\appendix

\clearpage

\appendix

\section{Proof of Corollary~\ref{thm:power}}
\label{appendix:proof-power}

\begin{repcorollary}{thm:power}[Probabilistic error bound and time complexity of Power Iteration, a simplified corollary of Theorem 4.1(a) in~\cite{kuczynski1992estimating}]
    For Algorithm~\ref{alg:power}, if we set $k = \lceil\log_{2} d\rceil+1$, then the probability that the estimated top squared singular value $\hat{\sigma}_1^2$ exceeds one half of the true value $\sigma_1^2 / 2$ is bounded as:
    \begin{equation*}
        \Pr\left[\hat{\sigma}_1^2 \ge \sigma_1^2/2\right] \ge 1- {\frac{2}{\pi\sqrt{e}}\cdot\frac{1}{\sqrt{d \log_2 d}}},
    \end{equation*}
    where $e$ denotes Euler's number. Power iteration requires only matrix-vector multiplications. The total computational cost of Algorithm~\ref{alg:power} is $O(d \ell k)$. If we set $k = \lceil\log_2 d\rceil+1$, then the computational cost becomes $O(d \ell \log d)$.
\end{repcorollary}

Corollary~\ref{thm:power} is a simplified corollary derived from Theorem 4.1(a) in~\cite{kuczynski1992estimating}, which states that

\begin{theorem}[Theorem 4.1(a) in~\cite{kuczynski1992estimating}]
    \label{thm:4.1a}
    Suppose that $\bm{A}^\top \bm{A}\in\mathbb{R}^{d\times d}$. Let $f^{\text{prob}}(\bm{A}, k, \varepsilon)$ be the probability that after $k$ iterations, the Power Iteration Algorithm fails to approximate the squared largest singular value of $\bm{A}$ with relative error at most $\varepsilon$, i.e., $\left| \frac{\hat{\sigma}_1^2-\sigma_1^2}{\sigma_1^2}\right|>\varepsilon$. For any $k\ge 2$, we have
    \begin{equation}
        \label{eq:apx-1}
        f^{\text{prob}}(\bm{A}, k, \varepsilon) \le \min \left\{0.824, \frac{0.354}{\sqrt{\varepsilon(k-1)}}\right\} \sqrt{d}(1-\varepsilon)^{k-1 / 2}.
    \end{equation}
\end{theorem}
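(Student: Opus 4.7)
The plan is to carry out the standard Kuczy\'nski--Wo\'zniakowski reduction: diagonalize, push the randomness onto the unit sphere, reduce the failure event to a small-coordinate tail event for one Beta random variable, and bound that tail. First, I would write $\bm{A}^\top\bm{A} = \bm{V}\bm{\Lambda}\bm{V}^\top$ with $\lambda_1\ge\cdots\ge\lambda_d\ge 0$ and rescale so that $\sigma_1^2 = \lambda_1 = 1$. Rotational invariance of the standard Gaussian makes $\bm{y} = \bm{V}^\top \bm{x}^{(0)}/\|\bm{x}^{(0)}\|_2$ uniform on $S^{d-1}$, and unrolling Algorithm~\ref{alg:power} directly gives
\[
\hat{\sigma}_1^2 \;=\; \frac{\sum_{i=1}^d \lambda_i^{2k+1}\, y_i^2}{\sum_{i=1}^d \lambda_i^{2k}\, y_i^2},
\]
so the failure event $\hat{\sigma}_1^2 < (1-\varepsilon)\sigma_1^2$ is equivalent to the clean inequality $\sum_{i\ge 2}\lambda_i^{2k}(1-\varepsilon-\lambda_i)\,y_i^2 > \varepsilon y_1^2$.

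Second, I would translate this random event into a deterministic upper bound on $y_1^2$. Since $\sum_{i\ge 2} y_i^2 \le 1$ and only terms with $\lambda_i < 1-\varepsilon$ contribute positively, upper-bounding the left-hand side by $\max_{\mu\in[0,1-\varepsilon]}\mu^{2k}(1-\varepsilon-\mu)$ (a one-variable calculus problem, maximized at $\mu^\ast = 2k(1-\varepsilon)/(2k+1)$) yields the containment
\[
\{\hat{\sigma}_1^2 < (1-\varepsilon)\sigma_1^2\} \;\subseteq\; \bigl\{\, y_1^2 < t_k(\varepsilon)\,\bigr\},\qquad t_k(\varepsilon) = \Theta\!\bigl((1-\varepsilon)^{2k-1}/(\varepsilon k)\bigr).
\]
The worst spectrum attaining this bound collapses to just two values $\{1,\mu^\ast\}$; a short convexity argument on the map $\mu\mapsto \mu^{2k}(1-\varepsilon-\mu)$ shows that concentrating all remaining mass at $\mu^\ast$ is genuinely extremal, which is the origin of the clean exponent $2k-1$ and of the constant $0.354$.

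Third, I would convert the deterministic containment into a probability bound using the fact that $y_1^2\sim \mathrm{Beta}(\tfrac12,\tfrac{d-1}{2})$. The near-zero CDF estimate
\[
\Pr[y_1^2 \le t] \;\le\; \frac{2\sqrt{t}}{B(\tfrac12,\tfrac{d-1}{2})} \;=\; \frac{2\sqrt{t}\,\Gamma(d/2)}{\sqrt{\pi}\,\Gamma((d-1)/2)} \;\le\; \sqrt{\tfrac{2\,d\,t}{\pi}},
\]
via a Gautschi-type bound on $\Gamma(d/2)/\Gamma((d-1)/2)$, turns $t = t_k(\varepsilon)$ into the second entry $0.354\,\sqrt{d}(1-\varepsilon)^{k-1/2}/\sqrt{\varepsilon(k-1)}$. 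The first entry $0.824\sqrt{d}(1-\varepsilon)^{k-1/2}$ comes from the same Beta estimate paired with the cruder bound $\mu^{2k}(1-\varepsilon-\mu)\le (1-\varepsilon)^{2k}$, which avoids the $\Theta(1/k)$ loss of the one-variable maximization and is tighter when $\varepsilon(k-1)$ is small; taking the minimum yields the stated form.

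The main obstacle, in my view, is not the structure of the proof but the accounting needed to land on the sharp constants $0.824$ and $0.354$. Each step above---the one-variable maximization over $\mu$, the Gautschi estimate of the $\Gamma$-ratio, and the collapse to the two-eigenvalue worst case---loses a small constant factor that the crude bounding hides, and these must be tracked exactly (using the sharp identity $B(1/2,(d-1)/2) = \sqrt{\pi}\,\Gamma((d-1)/2)/\Gamma(d/2)$, monotonicity of $\Gamma$-ratios in $d$, and an explicit incomplete-Beta evaluation near $0$ rather than its one-term expansion) to hit the numerical values stated. Two smaller subtleties I would fold into short lemmas preceding the main argument are: handling the degenerate case $\lambda_2 = \lambda_1$ (where the failure event has probability $0$ because any Rayleigh quotient reaching the top eigenspace already succeeds), and verifying that the two-eigenvalue spectrum is genuinely extremal rather than some non-trivial interior distribution---the latter via a convexity/monotonicity argument on the ratio $R_k(\bm{y})$ viewed as a function of the $(\lambda_i)_{i\ge 2}$ at fixed $(y_i)$.
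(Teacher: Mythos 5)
First, note that the paper does not prove this statement at all: Theorem~\ref{thm:4.1a} is quoted verbatim from Kuczy\'nski and Wo\'zniakowski~\cite{kuczynski1992estimating}, and the only thing the paper derives from it is Corollary~\ref{thm:power} (by substituting $k=\lceil\log_2 d\rceil+1$ and $\varepsilon=1/2$). So there is no in-paper proof to compare against; what you have written is a reconstruction of the original Kuczy\'nski--Wo\'zniakowski argument, and your skeleton (diagonalize, use rotational invariance to put $\bm{y}$ uniform on $S^{d-1}$, unroll the Rayleigh quotient, reduce failure to $\{y_1^2 < t\}$, and bound the $\mathrm{Beta}(\tfrac12,\tfrac{d-1}{2})$ tail by $\sqrt{2dt/\pi}$ via a Gautschi/Wendel bound on the $\Gamma$-ratio) is exactly the right approach. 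The one-sidedness $\hat{\sigma}_1^2\le\sigma_1^2$, which turns the two-sided failure event into $\hat{\sigma}_1^2<(1-\varepsilon)\sigma_1^2$, should be stated explicitly, and the ``convexity argument that the two-eigenvalue spectrum is extremal'' is not needed for the upper bound at all --- the pointwise bound $\lambda_i^{2k}(1-\varepsilon-\lambda_i)\le\max_{\mu}\mu^{2k}(1-\varepsilon-\mu)$ together with $\sum_{i\ge2}y_i^2\le1$ already gives the containment; extremality only matters for the matching lower bound, which is not part of statement (a).

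Two details do not hold up as written. (1) Your own unrolling gives $\hat{\sigma}_1^2=\sum_i\lambda_i^{2k+1}y_i^2/\sum_i\lambda_i^{2k}y_i^2$, whose one-variable maximization yields $t_k(\varepsilon)=\Theta\bigl((1-\varepsilon)^{2k+1}/(\varepsilon k)\bigr)$, not the $(1-\varepsilon)^{2k-1}$ you then assert; the stated exponent $(1-\varepsilon)^{k-1/2}$ corresponds to the Kuczy\'nski--Wo\'zniakowski convention in which ``$k$ iterations'' means the Krylov space $\operatorname{span}\{b,\bm{M}b,\dots,\bm{M}^{k-1}b\}$, i.e.\ $k-1$ applications of $\bm{M}=\bm{A}^\top\bm{A}$, whereas Algorithm~\ref{alg:power} performs $k$ applications. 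This off-by-one is harmless for the paper's use (Algorithm~\ref{alg:power} satisfies an even stronger bound), but a faithful proof of the theorem as stated must fix the indexing, or the constants will not land where claimed. (2) Your explanation of the first branch of the $\min$ is wrong: pairing the Beta tail with the crude bound $\mu^{2k}(1-\varepsilon-\mu)\le(1-\varepsilon)^{2k}$ still leaves the division by $\varepsilon y_1^2$, so it produces $\sqrt{2/(\pi\varepsilon)}\,\sqrt{d}\,(1-\varepsilon)^{k}$ --- a bound that blows up as $\varepsilon\to0$ and has the wrong exponent --- and therefore cannot yield the $\varepsilon$-free constant $0.824\sqrt{d}(1-\varepsilon)^{k-1/2}$. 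In the original proof that branch comes from a different estimate of the spherical-cap integral $\int_0^{\sqrt{t}}(1-u^2)^{(d-3)/2}\,du$ in the regime where $\varepsilon(k-1)$ is small, not from weakening the maximization over $\mu$. Since the paper's Corollary~\ref{thm:power} only uses the $0.354/\sqrt{\varepsilon(k-1)}$ branch, this gap does not propagate downstream, but it does leave the theorem as stated unproven.
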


Here, we provide the derivation from Theorem~\ref{thm:4.1a} to Corollary~\ref{thm:power}:

\begin{proof}[Proof of Corollary~\ref{thm:power}]
    We set the number of iterations $k=\lceil\log_2 d\rceil + 1$. Plugging $k$ in \eqref{eq:apx-1}, we have
    \begin{equation*}
        \Pr\left[\left| \frac{\hat{\sigma}_1^2-\sigma_1^2}{\sigma_1^2}\right|>\varepsilon\right]\le \frac{0.354}{\sqrt{\varepsilon \cdot \log_2 d}} \sqrt{d}(1-\varepsilon)^{\log_2 d + 1 / 2}.
    \end{equation*}
    Then we set $\varepsilon = 1/2$, we have
    \begin{align*}
        & \phantom{{}={}} \Pr\left[\hat{\sigma}_1^2<\sigma_1^2/2\right]
        =\Pr\left[ \frac{\hat{\sigma}_1^2-\sigma_1^2}{\sigma_1^2}<-\frac{1}{2}\right]
        \le\Pr\left[\left| \frac{\hat{\sigma}_1^2-\sigma_1^2}{\sigma_1^2}\right|>\frac{1}{2}\right] \\
        & = \Pr\left[\left| \frac{\hat{\sigma}_1^2-\sigma_1^2}{\sigma_1^2}\right|>\varepsilon\right]
        \le \frac{0.354}{\sqrt{\varepsilon\cdot\log_2 d}} \sqrt{d}(1-\varepsilon)^{\log_2 d + 1 / 2} \\
        & = \frac{0.354}{\sqrt{(\log_2 d)/2}} \sqrt{d}\cdot\left(\frac{1}{2}\right)^{\log_2 d + 1 / 2}
        = \frac{0.354}{\sqrt{d \log_2 d}},
    \end{align*}
    where the numerical constant corresponds to the numerical evaluation of $\frac{192}{105} \cdot \frac{1}{\pi \sqrt{e}} \approx 0.354$, and it holds that $\frac{192}{105} \cdot \frac{1}{\pi \sqrt{e}} \le \frac{2}{\pi \sqrt{e}}$. Therefore, we have
    \begin{equation*}
        \Pr\left[\hat{\sigma}_1^2<\sigma_1^2/2\right]\le \frac{2}{\pi\sqrt{e}}\cdot\frac{1}{\sqrt{d \log_2 d}}.
    \end{equation*}
    Hence, we have
    \begin{equation*}
        \Pr\left[\hat{\sigma}_1^2\ge \sigma_1^2/2\right]\ge 1- \frac{2}{\pi\sqrt{e}}\cdot\frac{1}{\sqrt{d \log_2 d}},
    \end{equation*}
    which is the desired result in Corollary~\ref{thm:power}. 
\end{proof}

\section{Implicit Restoring Norm Error in Fast-DS-FD}
\label{app:norm-error}

The implicit restoring norm error in Fast-DS-FD is introduced in the dumping snapshot procedure of lines~15-21 in the update operation (Algorithm~3 in~\cite{yin2024optimal}) and the restoring of the snapshots during the query operation (Algorithm~4 in~\cite{yin2024optimal}). The for-loop across lines~16-21 of Algorithm~3 performs multiple dumping snapshot procedures $\bm{D}\gets\bm{D}-\bm{D}\bm{v}_j\bm{v}_j^\top$. We denote by $\bm{C}_i^\prime$ the value of $\bm{D}$ before the for-loop, and by $\bm{C}_i$ its value after the for-loop. The $\bm{v}_j$s are the singular vectors of ${\bm{C}_i^\prime}^\top\bm{C}_i^\prime$. After the for-loop, we get
\begin{equation*}
    \bm{C}_i^\top \bm{C}_i = (\bm{I}-\bm{V}_i\bm{V}_i^\top){\bm{C}_i^\prime}^\top\bm{C}_i^\prime(\bm{I}-\bm{V}_i\bm{V}_i^\top),
\end{equation*}
where $\bm{V}_i$ is the matrix formed by some columns of singular vectors of ${\bm{C}_i^\prime}^\top\bm{C}_i^\prime$. During the query operation, Fast-DS-FD attempts to restore the ${\bm{C}_i^\prime}^\top \bm{C}_i^\prime$ by the snapshot $\sqrt{\bm{\Sigma}_i} \, \bm{V}_i^\top$, where $\bm{\Sigma}_i$ is the diagonal matrix of the singular values corresponding to the singular vectors $\bm{V}_i$:
\begin{equation*}
    \hat{\bm{C}_i^\prime}^\top\hat{\bm{C}_i^\prime} = \bm{C}_i^\top \bm{C}_i + \bm{V}_i\bm{\Sigma}_i\bm{V}_i^\top.
\end{equation*}
Since $\bm{V}_i$ consists of (orthonormal) eigenvectors of ${\bm{C}_i^\prime}^\top\bm{C}_i^\prime$, we have
\begin{equation*}
    \bm{\Sigma}_i = \bm{V}_i^\top{\bm{C}_i^\prime}^\top \bm{C}_i^\prime\bm{V}_i.
\end{equation*}
Then we get
\begin{equation*}
    \begin{split}
        \hat{\bm{C}_i^\prime}^\top\hat{\bm{C}_i^\prime} = &\ \bm{C}_i^\top \bm{C}_i + \bm{V}_i \bm{V}_i^\top{\bm{C}_i^\prime}^\top \bm{C}_i^\prime\bm{V}_i\bm{V}_i^\top \\
        = &\ (\bm{I}-\bm{V}_i\bm{V}_i^\top){\bm{C}_i^\prime}^\top\bm{C}_i^\prime(\bm{I}-\bm{V}_i\bm{V}_i^\top)+ \bm{V}_i \bm{V}_i^\top{\bm{C}_i^\prime}^\top \bm{C}_i^\prime\bm{V}_i\bm{V}_i^\top \\
        = &\ {\bm{C}_i^\prime}^\top\bm{C}_i^\prime - \bm{V}_i \bm{V}_i^\top{\bm{C}_i^\prime}^\top \bm{C}_i^\prime - {\bm{C}_i^\prime}^\top \bm{C}_i^\prime\bm{V}_i\bm{V}_i^\top + 2\bm{V}_i \bm{V}_i^\top{\bm{C}_i^\prime}^\top \bm{C}_i^\prime\bm{V}_i\bm{V}_i^\top. \\
    \end{split}
\end{equation*}
The error between the approximate $\hat{\bm{C}_i^\prime}^\top\hat{\bm{C}_i^\prime}$ and the original ${\bm{C}_i^\prime}^\top\bm{C}_i^\prime$ is 
\begin{equation}
    \label{eq:restore-4}
        {\bm{C}_i^\prime}^\top\bm{C}_i^\prime - \hat{\bm{C}_i^\prime}^\top\hat{\bm{C}_i^\prime}= \bm{V}_i \bm{V}_i^\top{\bm{C}_i^\prime}^\top \bm{C}_i^\prime + {\bm{C}_i^\prime}^\top \bm{C}_i^\prime\bm{V}_i\bm{V}_i^\top - 2\bm{V}_i \bm{V}_i^\top{\bm{C}_i^\prime}^\top \bm{C}_i^\prime\bm{V}_i\bm{V}_i^\top. \\
\end{equation}

If $\bm{V}_i$ consists of exact singular vectors computed by a full SVD, then the error is zero. However, if $\bm{V}_i$ are substituted by approximate singular vectors, e.g.,  $\bm{Z}_i \approx \bm{V}_i$ computed via randomized numerical linear algebra (RandNLA), then Eq.~\eqref{eq:restore-4} becomes
\begin{equation*}
        {\bm{C}_i^\prime}^\top\bm{C}_i^\prime - \hat{\bm{C}_i^\prime}^\top\hat{\bm{C}_i^\prime}= \bm{Z}_i \bm{Z}_i^\top{\bm{C}_i^\prime}^\top \bm{C}_i^\prime + {\bm{C}_i^\prime}^\top \bm{C}_i^\prime\bm{Z}_i\bm{Z}_i^\top - 2\bm{Z}_i \bm{Z}_i^\top{\bm{C}_i^\prime}^\top \bm{C}_i^\prime\bm{Z}_i\bm{Z}_i^\top, \\
\end{equation*}
and whether the error is zero is not guaranteed. Moreover, the error may accumulate over updates, and the norm of the restoration error over the sliding window is
\begin{equation*}
\left\|\sum_{i\in W} \left(\bm{Z}_i \bm{Z}_i^\top {\bm{C}_i^\prime}^\top \bm{C}_i^\prime + {\bm{C}_i^\prime}^\top \bm{C}_i^\prime \bm{Z}_i \bm{Z}_i^\top - 2\bm{Z}_i \bm{Z}_i^\top {\bm{C}_i^\prime}^\top \bm{C}_i^\prime \bm{Z}_i \bm{Z}_i^\top\right) \right\|_2.
\end{equation*}

\section{Proof of Theorem~\ref{thm:sw}}
\label{sec:proof-sw}

\begin{reptheorem}{thm:sw}
    ML-\sidsfd (Algorithm~\ref{alg:mlsidsfd}) solves the Tracking Approximate Matrix Sketch over Sliding Window (Problem~\ref{prob:sw-fd}) with optimal space complexity $O\left(\frac{d}{\varepsilon}\log R\right)$ and near-optimal time complexity $O\left(\frac{d}{\varepsilon}\log d \log R\right)$, with success probability (the probability that the output sketch satisfies the covariance error bound) at least 
    \begin{equation*}
        \frac{99}{100}\left(1- \frac{2}{\pi\sqrt{e}}\cdot\frac{1}{\sqrt{d \log_2 d}}\right),
    \end{equation*}
    when $d$ is not very small, the success probability lower bound Eq.~\eqref{eq:prob} is close to $\frac{99}{100}$, which is a high probability.
\end{reptheorem}

The proof of Theorem~\ref{thm:sw} is divided into Proposition~\ref{prop:error-bound}, Proposition~\ref{prop:time-complexity}, Proposition~\ref{prop:sidsfd-space}, and Corollary~\ref{coro:time-space}, presented in the following subsections.

\subsection{Error Analysis}
\label{subsec:error}

First, we prove the error bound of \sidsfd, as stated in the following proposition.

\begin{proposition}
    \label{prop:error-bound}
    Let $\hat{\bm{B}}$ be the sketch of the current sliding window $(T - N,T]$ returned at line 6 of Algorithm~\ref{alg:pidsfd-query} (i.e., from $\textsc{Query}(lb = T - N,\ ub = T)$). This is our target sketch. We show that, with success probability at least \(\frac{99}{100}\left(1- \frac{2}{\pi\sqrt{e}}\cdot\frac{1}{\sqrt{d \log_2 d}}\right)\), the following error bound holds:
    \begin{equation}
        \label{eq:analy}
        \left\|\bm{A}_{T - N, T}^\top \bm{A}_{T - N, T} - \hat{\bm{B}}^\top \hat{\bm{B}}\right\|_2 \le O(1)\, \varepsilon\, \left\|\bm{A}_{T - N, T}\right\|_F^2,
    \end{equation}
    where $\bm{A}_{T - N, T}$ is the matrix formed by stacking the vectors $\bm{a}_i$ for $i = T - N+1,\ldots,T$.
\end{proposition}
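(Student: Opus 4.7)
The plan is to follow the three-part proof sketch laid out in the paper: (i) show that the snapshot save/restore step is an exact algebraic identity independent of the quality of the approximate $\bm{Z}_i$, (ii) unroll this identity across the $N$ updates of the current window to obtain a clean decomposition of the covariance error into an FD-reduction term and a residual term, and (iii) bound the two terms separately, with only the residual bound requiring the probabilistic guarantees of the RandNLA operators.

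First I will verify the algebraic identity at the heart of the construction. Because line~15 of Algorithm~\ref{alg:sidsfd-update} stores both $\bm{Z}_i$ and the cross term $\bm{Z}_i^\top \bm{C}_i^{\prime\top}\bm{C}_i^\prime$, the calculation displayed in the proof sketch gives $\bm{C}_i^{\prime\top}\bm{C}_i^\prime \equiv \bm{C}_i^\top\bm{C}_i + \bm{Z}_i\bm{Z}_i^\top \bm{C}_i^{\prime\top}\bm{C}_i^\prime + \bm{C}_i^{\prime\top}\bm{C}_i^\prime \bm{Z}_i\bm{Z}_i^\top - \bm{Z}_i\bm{Z}_i^\top \bm{C}_i^{\prime\top}\bm{C}_i^\prime \bm{Z}_i\bm{Z}_i^\top$ as a matrix identity, regardless of whether $\bm{Z}_i$ consists of exact or approximate singular vectors. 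This is the critical departure from Fast-DS-FD, whose snapshot format $\sqrt{\bm{\Sigma}}\bm{V}^\top$ recovers $\bm{C}_i^{\prime\top}\bm{C}_i^\prime$ only when $\bm{V}$ is exact and otherwise produces the cumulative restoring norm error analyzed in Appendix~\ref{app:norm-error}.

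Next I will unroll this identity across the $N$ timesteps of the current window. Combining it with $\tilde{\bm{C}_i}^\top \tilde{\bm{C}_i} = \bm{C}_{i-1}^\top \bm{C}_{i-1} + \bm{a}_i^\top \bm{a}_i$ and $\bm{C}_i^{\prime\top}\bm{C}_i^\prime = \tilde{\bm{C}_i}^\top \tilde{\bm{C}_i} - \bm{\Delta}_i$ (with PSD $\bm{\Delta}_i$ from the FD reduction, or $\bm{\Delta}_i = \bm{0}$ when no reduction occurs), the restoring computation in Algorithm~\ref{alg:pidsfd-query} telescopes into $\bm{B} = \bm{A}_{T-N,T}^\top \bm{A}_{T-N,T} + \bm{C}_{T-N}^\top \bm{C}_{T-N} - \sum_{i=T-N+1}^T \bm{\Delta}_i$. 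Accounting for the final FD reduction of $\bm{B}$ in lines~5--7 of Algorithm~\ref{alg:pidsfd-query} (which contributes an extra $\lambda_\ell \le \mathrm{tr}(\bm{B})/\ell$ to the spectral error) and applying the triangle inequality, the target norm splits into $\|\sum_i \bm{\Delta}_i\|_2$ (Term 1), $\|\bm{C}_{T-N}^\top \bm{C}_{T-N}\|_2$ (Term 2), and an additional $O(\varepsilon)\|\bm{A}_{T-N,T}\|_F^2$ from the $\lambda_\ell$ term. For Term 1 I will mimic the standard Frequent Directions argument: each $\bm{\Delta}_i$ is PSD with $\|\bm{\Delta}_i\|_2 = \sigma_{i,\ell}^2$ and $\mathrm{tr}(\bm{\Delta}_i) \ge \ell\sigma_{i,\ell}^2$, so $\|\sum_i \bm{\Delta}_i\|_2 \le \sum_i \|\bm{\Delta}_i\|_2 \le \mathrm{tr}(\sum_i \bm{\Delta}_i)/\ell$; taking traces in the decomposition above gives $\mathrm{tr}(\sum_i \bm{\Delta}_i) \le \|\bm{A}_{T-N,T}\|_F^2 + \|\bm{C}_{T-N}\|_F^2$, and since $\bm{C}_{T-N}$ has rank at most $2\ell$ the second summand is at most $2\ell\cdot\text{Term 2}$. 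Plugging in $\ell = \lceil 2/\varepsilon\rceil$ yields $\text{Term 1} \le O(\varepsilon)\|\bm{A}_{T-N,T}\|_F^2 + O(1)\cdot\text{Term 2}$, matching the relation stated in the sketch.

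The main obstacle, and the only probabilistic step, is bounding Term 2. I will argue that whenever the Power Iteration call at line~7 and the final Simultaneous Iteration call of the doubling loop at line~11 of Algorithm~\ref{alg:sidsfd-update} both succeed at time $T-N$, two mutually exclusive cases arise. Either the algorithm skips the dump (line~20), in which case $\bm{C}_{T-N} = \bm{C}_{T-N}^\prime$ and Corollary~\ref{thm:power} ensures $\sigma_1^2(\bm{C}_{T-N}^\prime) \le 2\hat{\sigma}_1^2 < \theta$; or the algorithm dumps, in which case part~(2) of Theorem~\ref{thm:simul} applied to $\bm{C}_{T-N}^{\prime\top}$ at target rank $\xi$ gives $\|\bm{C}_{T-N}^\prime - \bm{C}_{T-N}^\prime\bm{Z}_{T-N}\bm{Z}_{T-N}^\top\|_2 \le (1+\varepsilon_{\mathrm{SI}})\sigma_{\xi+1}$, while part~(3) combined with the loop-termination condition $\hat{\sigma}_{\xi+1}^2 < \theta$ bounds $\sigma_{\xi+1}^2 \le \theta/(1-\varepsilon_{\mathrm{SI}})$. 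Either way $\|\bm{C}_{T-N}^\top \bm{C}_{T-N}\|_2 = O(\theta)$. Multiplying the Power Iteration success probability $1 - \frac{2}{\pi\sqrt{e}\sqrt{d\log_2 d}}$ from Corollary~\ref{thm:power} by the Simultaneous Iteration success probability $99/100$ from Theorem~\ref{thm:simul} delivers the lower bound of Eq.~\eqref{eq:prob}. Finally, invoking Fact~\ref{prop:ml} to fix the level $j$ at which $\theta \le \varepsilon\|\bm{A}_{T-N,T}\|_F^2 \le 2\theta$ converts the $O(\theta)$ estimate into $O(\varepsilon)\|\bm{A}_{T-N,T}\|_F^2$ and, summed with the Term~1 and $\lambda_\ell$ bounds, establishes Eq.~\eqref{eq:analy}.
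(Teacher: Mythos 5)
Your proposal is correct in substance and follows essentially the same route as the paper's proof in Appendix~C: the exact save/restore identity, the telescoping decomposition into Term~1 ($\|\sum_i\bm{\Delta}_i\|_2$) and Term~2 ($\|\bm{C}_{T-N}^\top\bm{C}_{T-N}\|_2$), the FD-style trace bound for Term~1 (which, as you implicitly assume, requires checking $\tr(\bm{B})\ge 0$; the paper verifies this explicitly), and the case analysis on Power Iteration / Simultaneous Iteration success for Term~2, exactly as in Lemma~\ref{lem:power-iter}. The one step that needs repair is your bound for the final shrink in Algorithm~\ref{alg:pidsfd-query}: the inequality $\lambda_\ell \le \tr(\bm{B})/\ell$ fails when $\bm{B}$ is not PSD (and $\bm{B}$ need not be, since the $\bm{\Delta}_i$'s are subtracted), and the reduction error at line~6 is $\max(|\lambda_\ell|,|\lambda_d|)$, so the most negative eigenvalue must be controlled as well; the paper handles both via Weyl's inequality applied to $\bm{B}-\bm{A}_{T-N,T}^\top\bm{A}_{T-N,T}$, which your already-established bound on $\|\bm{A}_{T-N,T}^\top\bm{A}_{T-N,T}-\bm{B}\|_2$ makes immediate.
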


\begin{proof}
    We denote the matrix $\bm{B}$ before the execution of eigen decomposition at line~5 of Algorithm~\ref{alg:pidsfd-query} over the window $(T-N, i]$ as $\bm{B}_i$. Formally,
    \begin{equation}
        \label{eq:fd-svd-2k-update-1}
        \bm{B}_i=\bm{C}_i^\top \bm{C}_i +\sum_{j=T-N+1}^{i} \left(\boldsymbol{Z}_j\boldsymbol{Z}_j^{\top}{\boldsymbol{C}_j^\prime}^{\top}\boldsymbol{C}_j^\prime+{\boldsymbol{C}_j^\prime}^{\top}\boldsymbol{C}_j^\prime\boldsymbol{Z}_j\bm{Z}_j^{\top}-\boldsymbol{Z}_j\bm{Z}_j^{\top}{\boldsymbol{C}_j^\prime}^{\top}\boldsymbol{C}_j^\prime\boldsymbol{Z}_j\bm{Z}_j^{\top}\right).\footnote{In fact, not every index \( j \) explicitly corresponds to the matrices \( \boldsymbol{Z}_j \) and \( \boldsymbol{Z}_j^{\top}{\boldsymbol{C}_j^\prime}^{\top}\boldsymbol{C}_j^\prime \) stored in the queue \( \mathcal{S} \); in such cases, we set \( \boldsymbol{Z}_j = \bm{0} \) here.}
    \end{equation}

    We introduce the per-update error as:
    \begin{equation}
        \label{eq:fd-svd-2k-update-8}
        \begin{split}
            \bm{\Delta}_{i} = \boldsymbol{a}_{i}^\top\boldsymbol{a}_{i}+\boldsymbol{B}_{i-1}-\boldsymbol{B}_{i}. \\
        \end{split}
    \end{equation}
    Here $\bm{a}_i\in \mathbb{R}^{1\times d}$ is a row vector. Then summing over the window yields
    \begin{equation}
        \label{eq:fd-svd-2k-update-12}
        \begin{split}
            \left\|\sum_{i=T-N+1}^{T} \bm{\Delta}_i\right\|_2 & \overset{Eq.~\eqref{eq:fd-svd-2k-update-8}}{=} \left\|\left(\sum_{i=T-N+1}^{T} \boldsymbol{a}_{i}^\top\boldsymbol{a}_{i}\right)-\bm{B}_{T} + \bm{B}_{T-N}\right\|_2 \\
                                                              & = \left\|\bm{A}_{T-N,T}^\top \bm{A}_{T-N,T}-\bm{B}_T+\bm{B}_{T-N}\right\|_2                                                                                                  \\
                                                              & \geq \left\|\bm{A}_{T-N, T}^\top \bm{A}_{T-N,T}-\bm{B}_T\right\|_2 - \left\|\bm{B}_{T-N}\right\|_2                                                                           \\
                                                              & \overset{Eq.~\eqref{eq:fd-svd-2k-update-1}}{=} \left\|\bm{A}_{T-N, T}^\top \bm{A}_{T-N,T}-\bm{B}_T\right\|_2 - \left\|\bm{C}_{T-N}^\top \bm{C}_{T-N}\right\|_2   .                              \\
        \end{split}
    \end{equation}
    Rearranging terms yield
    \begin{equation}
        \label{eq:fd-svd-2k-update-15}
            \left\|\bm{A}_{T-N,T}^\top \bm{A}_{T-N,T}-\bm{B}_T\right\|_2\overset{Eq.~\eqref{eq:fd-svd-2k-update-12}}{\leq}  \underbrace{\left\|\sum_{i=T-N+1}^{T} \bm{\Delta}_i\right\|_2 }_{\text{Term 1}} + \underbrace{\left\|\bm{C}_{T-N}^\top \bm{C}_{T-N}\right\|_2}_{\text{Term 2}}. 
    \end{equation}

    \paragraph{\textbf{Bounding Term 1}}

    Substituting Eq.~\eqref{eq:fd-svd-2k-update-1} into Eq.~\eqref{eq:fd-svd-2k-update-8}, we obtain:
    \begin{equation*}
            \bm{\Delta}_{i}=  \boldsymbol{a}_{i}^\top\boldsymbol{a}_{i}+\boldsymbol{C}_{i-1}^{\top}\boldsymbol{C}_{i-1}-{\boldsymbol{C}_i}^{\top}\boldsymbol{C}_{i}  - ( \boldsymbol{Z}_i \bm{Z}_i^{\top}{\boldsymbol{C}^{\prime}_i}^{\top}{\boldsymbol{C}^{\prime}_i}+{\boldsymbol{C}^{\prime}_i}^{\top}\boldsymbol{C}^{\prime}_i\boldsymbol{Z}_i \bm{Z}_i^{\top}-\boldsymbol{Z}_i \bm{Z}_i^{\top}{\boldsymbol{C}^\prime_i}^{\top}\boldsymbol{C}^\prime_i\boldsymbol{Z}_i \bm{Z}_i^{\top}). \\
    \end{equation*}

    We utilize two update relationships: $\bm{C}_i^\top \bm{C}_i = (\bm{C}_i^\prime - \bm{C}_i^\prime \bm{Z}_i \bm{Z}_i^\top )^\top (\bm{C}_i^\prime - \bm{C}_i^\prime \bm{Z}_i \bm{Z}_i^\top )$ (from line 16 of Algorithm~\ref{alg:sidsfd-update}, with $\bm{Z}_i=\bm{0}$ if line 19 is triggered), and $\tilde{\bm{C}}_i^\top \tilde{\bm{C}}_i = \bm{C}_{i-1}^\top \bm{C}_{i-1} + \bm{a}_i^\top \bm{a}_i$ (line 1). Substituting these into the preceding expression yields the following property:
    \begin{equation*}
        \begin{split}
            \bm{\Delta}_{i}  {{}={}} & \tilde{\bm{C}_{i}}^\top \tilde{\bm{C}_{i}} - (\bm{C}_{i}^\prime - \bm{C}_{i}^\prime\bm{Z}_i \bm{Z}_i^\top )^\top (\bm{C}_{i}^\prime - \bm{C}_{i}^\prime\bm{Z}_i \bm{Z}_i^\top )                                                                                                                                        \\
                               & - ( \boldsymbol{Z}_i \bm{Z}_i^{\top}{\boldsymbol{C}^{\prime}_i}^{\top}{\boldsymbol{C}^{\prime}_i}+{\boldsymbol{C}^{\prime}_i}^{\top}\boldsymbol{C}^{\prime}_i\boldsymbol{Z}_i \bm{Z}_i^{\top}-\boldsymbol{Z}_i \bm{Z}_i^{\top}{\boldsymbol{C}^\prime_i}^{\top}\boldsymbol{C}^\prime_i\boldsymbol{Z}_i \bm{Z}_i^{\top}) \\
            {{}={}}                  & \tilde{\bm{C}_{i}}^\top \tilde{\bm{C}_{i}} - {\bm{C}^\prime_i}^\top {\bm{C}^\prime_i}.                                                                                                                                                                                                                                  \\
        \end{split}
    \end{equation*}

    Lines 2-4 of Algorithm~\ref{alg:sidsfd-update} yield the following expression:
    \begin{equation}
        \label{eq:fd-svd-2k-update-10}
        \bm{\Delta}_{i}=
        \begin{cases}
            \bm{V}_i\min\left(\bm{\Sigma}_i^2, \bm{I}(\sigma_\ell^2)_i\right)\bm{V}_i^\top & \text{rows}(\tilde{\bm{C}}_i) \geq 2\ell, \\
            \bm{0}                                                                         & \text{else},                     \\
        \end{cases}
    \end{equation}
    where $\bm{\Sigma}_i^2$ is the squared singular value matrix from line 3 of the $i$-th update of Algorithm~\ref{alg:sidsfd-update}, and $(\sigma_\ell^2)_i$ denotes the $\ell$-th largest squared singular value of $\bm{\Sigma}_i^2$. (If $\text{rows}(\tilde{\bm{C}}_i) < 2\ell$, we set $(\sigma_\ell^2)_i = 0$.)

    Next, summing the matrices $\bm{\Delta}_i$ from Eq.~\eqref{eq:fd-svd-2k-update-10} over the window $(T - N,T]$ and analyzing its $\ell_2$-norm, we derive:
    \begin{equation}
        \label{eq:fd-svd-2k-update-11}
        \left\|\sum_{i=T-N+1}^{T} \bm{\Delta}_i\right\|_2  \leq \sum_{i=T-N+1}^{T} \left\|\bm{\Delta}_i\right\|_2  = \sum_{i=T-N+1}^T (\sigma_\ell^2)_i.     \\
    \end{equation}

    We now aim to bound $\sum_{i=T-N+1}^T (\sigma_\ell^2)_i$. We start from:
    \begin{equation}
        \label{eq:fd-svd-2k-update-13}
        \begin{split}
            \tr\left(\bm{B}_T\right) & =\left(\sum_{i=T-N+1}^{T} \left(\tr\left(\bm{B}_{i}\right) -\tr\left(\bm{B}_{i-1}\right) \right) \right) + \tr\left(\bm{B}_{T-N}\right)                                               \\
                                          & \overset{Eq.~\eqref{eq:fd-svd-2k-update-1}}{=} \left(\sum_{i=T-N+1}^{T} \left(\tr\left(\bm{B}_{i}\right) -\tr\left(\bm{B}_{i-1}\right) \right) \right) + \tr\left(\bm{C}_{T-N}^\top \bm{C}_{T-N}\right) \\
                                          & = \left(\sum_{i=T-N+1}^{T} \tr\left(\bm{B}_i-\bm{B}_{i-1}\right)\right) + \left\|\bm{C}_{T-N}\right\|_F^2.                                         \\
        \end{split}
    \end{equation}
    Substituting Eq.~\eqref{eq:fd-svd-2k-update-8} into Eq.~\eqref{eq:fd-svd-2k-update-13} yields
    \begin{equation*}
        \begin{split}
            \tr\left(\bm{B}_{T}\right) & = \left(\sum_{i=T-N+1}^{T} \left(\tr(\bm{a}_i^\top \bm{a}_i)-\tr(\bm{\Delta}_i)\right)\right)+\|\bm{C}_{T-N}\|_F^2 \\
                                          & \leq \|\bm{A}_{T-N, T}\|_F^2 - \ell\sum_{i=T-N+1}^T (\sigma_\ell^2)_i + \|\bm{C}_{T-N}\|_F^2,                       \\
        \end{split}
    \end{equation*}
    where the last inequality holds since $\tr\left(\bm{\Delta}_i\right) \ge \ell \cdot (\sigma_\ell^2)_i$, as inferred from Eq.~\eqref{eq:fd-svd-2k-update-10}. Rearranging terms gives:
    \begin{equation}
        \label{eq:fd-svd-2k-update-14}
        \sum_{i=T-N+1}^T (\sigma_\ell^2)_i \leq \frac{1}{\ell}\left(\|\bm{A}_{T-N,T}\|_F^2 - \tr\left(\bm{B}_{T}\right) + \|\bm{C}_{T-N}\|_F^2\right).
    \end{equation}
    Notice that from Eq.~\eqref{eq:fd-svd-2k-update-1} we have
    \begin{equation}
        \label{eq:fd-svd-2k-update-14-1}
        \begin{split}
        \tr\left(\bm{B}_{T}\right) = &\tr\left(\bm{C}_T^\top \bm{C}_T\right) +\sum_{j=T-N+1}^{i} \tr\left(\boldsymbol{Z}_j\boldsymbol{Z}_j^{\top}{\boldsymbol{C}_j^\prime}^{\top}\boldsymbol{C}_j^\prime+{\boldsymbol{C}_j^\prime}^{\top}\boldsymbol{C}_j^\prime\boldsymbol{Z}_j\bm{Z}_j^{\top}-\boldsymbol{Z}_j\bm{Z}_j^{\top}{\boldsymbol{C}_j^\prime}^{\top}\boldsymbol{C}_j^\prime\boldsymbol{Z}_j\bm{Z}_j^{\top}\right) \\
        =  &\tr\left(\bm{C}_T^\top \bm{C}_T\right) +\sum_{j=T-N+1}^{i}\left( \tr\left(\boldsymbol{Z}_j\boldsymbol{Z}_j^{\top}{\boldsymbol{C}_j^\prime}^{\top}\boldsymbol{C}_j^\prime\right)+\tr\left(\left(\bm{I}-\boldsymbol{Z}_j\bm{Z}_j^{\top}\right){\boldsymbol{C}_j^\prime}^{\top}\boldsymbol{C}_j^\prime\boldsymbol{Z}_j\bm{Z}_j^{\top}\right)\right) \\
        =  &\tr\left(\bm{C}_T^\top \bm{C}_T\right) +\sum_{j=T-N+1}^{i}\left( \tr\left(\boldsymbol{Z}_j^{\top}{\boldsymbol{C}_j^\prime}^{\top}\boldsymbol{C}_j^\prime\boldsymbol{Z}_j\right)+\tr\left(\boldsymbol{Z}_j\bm{Z}_j^{\top}\left(\bm{I}-\boldsymbol{Z}_j\bm{Z}_j^{\top}\right){\boldsymbol{C}_j^\prime}^{\top}\boldsymbol{C}_j^\prime\right)\right) \\
        =  &\left\|\bm{C}_T\right\|_F^2 +\sum_{j=T-N+1}^{i} \left\|\boldsymbol{C}_j^\prime\boldsymbol{Z}_j\right\|_F^2 \ge 0.\\
        \end{split}
    \end{equation}
    Therefore, Eq.~\eqref{eq:fd-svd-2k-update-14} can be simplified as:
    \begin{equation}
        \label{eq:fd-svd-2k-update-14-2}
        \sum_{i=T-N+1}^T (\sigma_\ell^2)_i \leq \frac{1}{\ell}\left(\|\bm{A}_{T-N,T}\|_F^2 + \|\bm{C}_{T-N}\|_F^2\right).
    \end{equation}
    
    Combining Eq.~\eqref{eq:fd-svd-2k-update-11} and Eq.~\eqref{eq:fd-svd-2k-update-14-2}, we obtain:
    \begin{equation}
        \label{eq:fd-svd-2k-update-23}
        \left\|\sum_{i=T-N+1}^{T} \bm{\Delta}_i\right\|_2 \le \frac{1}{\ell}\left(\left\|\bm{A}_{T-N,T}\right\|_F^2  + \left\|\bm{C}_{T-N}\right\|_F^2\right).
    \end{equation}

    \paragraph{\textbf{Bounding Term 2.}}
    We state the following lemma:
    \begin{lemma}
        \label{lem:power-iter}
        We have
        \begin{equation*}
            \begin{split}
                \left\|\bm{C}_{T-N}^\top \bm{C}_{T-N}\right\|_2 \leq 4\theta \leq 4 \varepsilon \left\|\bm{A}_{T-N,T}\right\|_F^2,
            \end{split}
        \end{equation*}
        with success probability at least \(\frac{99}{100}\left(1- \frac{2}{\pi\sqrt{e}}\cdot\frac{1}{\sqrt{d \log_2 d}}\right)\). 
    \end{lemma}

    The proof of Lemma~\ref{lem:power-iter} is deferred to Appendix~\ref{sec:proof-lem-power-iter}. Leveraging Eq.~\eqref{eq:fd-svd-2k-update-23} in conjunction with Lemma~\ref{lem:power-iter}, we extend the analysis initiated at Eq.~\eqref{eq:fd-svd-2k-update-15}:
    \begin{equation}
        \label{eq:fd-svd-2k-update-24}
        \begin{split}
        \text{Term 1} + \text{Term 2} =                    & \left\|\sum_{i=T-N+1}^{T} \bm{\Delta}_i\right\|_2 + \left\|\bm{C}_{T-N}^\top \bm{C}_{T-N}\right\|_2         \\
        \overset{Eq.~\eqref{eq:fd-svd-2k-update-23}}{\leq} & \frac{1}{\ell}\left\|\bm{A}_{T-N,T}\right\|_F^2 + \frac{1}{\ell}\left\|\bm{C}_{T-N}\right\|_F^2 + \left\|\bm{C}_{T-N}^\top \bm{C}_{T-N}\right\|_2     \\
        \leq                                               & \frac{1}{\ell}\left\|\bm{A}_{T-N,T}\right\|_F^2 + 3\left\|\bm{C}_{T-N}^\top \bm{C}_{T-N}\right\|_2                                                    \\
        \overset{Lemma~\ref{lem:power-iter}}{\leq}         & \frac{1}{\ell}\left\|\bm{A}_{T-N,T}\right\|_F^2 + 12\varepsilon \left\|\bm{A}_{T-N,T}\right\|_F^2,                                                      \\
        \end{split}
    \end{equation}
    where the second-to-last inequality holds because $\left\|\bm{C}_{T-N}\right\|_F^2 = \sum_{i=1}^{2\ell} \sigma_i^2 \le 2\ell \sigma_1^2 = 2\ell \left\|\bm{C}_{T-N}^\top \bm{C}_{T-N}\right\|_2$ since $\bm{C}_{T-N}$ has at most $2\ell$ rows.

    In the initialization of \sidsfd\ (Algorithm~\ref{alg:sidsfd-init}), the parameter is set to $\ell=\lceil \frac{2}{\varepsilon} \rceil$. Plugging it into Eq.~\eqref{eq:fd-svd-2k-update-24}, we obtain:
    \begin{equation*}
        \text{Term 1} + \text{Term 2} \le \frac{25}{2}\varepsilon \left\|\bm{A}_{T-N,T}\right\|_F^2 = O(1)\varepsilon \left\|\bm{A}_{T-N,T}\right\|_F^2.
    \end{equation*}

    Returning to Eq.~\eqref{eq:fd-svd-2k-update-15}, we have now established the error bound of \sidsfd\ over the sliding window, which holds with probability at least \(\frac{99}{100}\left(1- \frac{2}{\pi\sqrt{e}}\cdot\frac{1}{\sqrt{d \log_2 d}}\right)\):
    \begin{equation}\label{eq:foo-1}\left\|\bm{A}_{T-N, T}^\top \bm{A}_{T-N, T} -\bm{B}_T\right\|_2\le O(1) \varepsilon \left\|\bm{A}_{T-N, T}\right\|_F^2.\end{equation}

    Notice that $\bm{B}_T$ represents the matrix $\bm{B}$ in line~5 of Algorithm~\ref{alg:pidsfd-query}, thus we have:
    \begin{equation}
        \label{eq:fd-svd-2k-update-26}
            \bm{B}_T = \bm{V}\bm{\Lambda}\bm{V}^\top. 
   \end{equation}
   According to line~6 of Algorithm~\ref{alg:pidsfd-query}, we have:
   \begin{equation}
        \label{eq:fd-svd-2k-update-27}
        \hat{\bm{B}}^\top \hat{\bm{B}} = \bm{V}\max\left(\bm{\Lambda} - \bm{I}\lambda_\ell, \bm{0}\right)\bm{V}^\top.
   \end{equation}
   Combining Eq.~\eqref{eq:fd-svd-2k-update-26} and Eq.~\eqref{eq:fd-svd-2k-update-27}, we derive:
    \begin{equation*}
        \begin{split}
            \bm{B}_T- \hat{\bm{B}}^\top \hat{\bm{B}} = \bm{V}\min\left(\bm{\Lambda} , \bm{I}\lambda_\ell\right)\bm{V}^\top.
        \end{split}
    \end{equation*}
    Therefore, we have:
    \begin{equation}
        \label{eq:fd-svd-2k-update-28}
            \left\|\bm{B}_T - \hat{\bm{B}}^\top \hat{\bm{B}}\right\|_2 = \left\| \bm{V}\min\left(\bm{\Lambda} , \bm{I}\lambda_\ell\right)\bm{V}^\top\right\|_2=\max\left(\left|\lambda_\ell\right|, \left|\lambda_d\right|\right),
    \end{equation}
    as $\lambda_1\ge \lambda_2 \ge \dots \ge \lambda_d$ are the eigenvalues of the Hermitian matrix $\bm{B}_T$.

    We will use Weyl's inequality, stated as follows:
    \begin{theorem}[Weyl's inequality~\cite{franklin2000matrix}]
    \label{thm:wyle}
    Let $\bm{X}, \bm{Y}$ be Hermitian on inner product space $\bm{V}$ with dimension $n$, with spectrum ordered in descending order $\lambda_1 \geq \dots \geq \lambda_n$. Note that these eigenvalues can be ordered, because they are real (as eigenvalues of Hermitian matrices). Then we have the following inequality:
    \begin{equation*}
    \lambda_{i+j-1}(\bm{X}+\bm{Y}) \leq \lambda_i(\bm{X})+\lambda_j(\bm{Y}) \leq \lambda_{i+j-n}(\bm{X}+\bm{Y}).
    \end{equation*}
    \end{theorem}

    We take $\bm{X}=\bm{B}_T$ and $\bm{Y}=-\bm{A}_{T-N, T}^\top \bm{A}_{T-N, T}$ in Theorem~\ref{thm:wyle} and obtain:
    \begin{equation}
        \label{eq:fd-svd-2k-update-29}
        \begin{split}
        \lambda_{i+j-1}\left(\bm{B}_T-\bm{A}_{T-N, T}^\top \bm{A}_{T-N, T}\right) & \leq \lambda_i\left(\bm{B}_T\right)+\lambda_j\left(-\bm{A}_{T-N, T}^\top \bm{A}_{T-N, T}\right)\\ & \leq \lambda_{i+j-d}\left(\bm{B}_T-\bm{A}_{T-N, T}^\top \bm{A}_{T-N, T}\right).
        \end{split}
    \end{equation}
    Notice that from Eq.~\eqref{eq:foo-1}, we have:
    \begin{equation*}
    \forall i,\quad -O(1)\frac{1}{\ell}\left\|\bm{A}_{T-N, T}\right\|_F^2\le \lambda_{i}\left(\bm{A}_{T-N, T}^\top \bm{A}_{T-N, T}-\bm{B}_T\right)\le O(1)\frac{1}{\ell}\left\|\bm{A}_{T-N, T}\right\|_F^2.
    \end{equation*}
    And notice that 
    \begin{equation*}
    \forall i, \quad -\frac{1}{d-i+1}\left\|\bm{A}_{T-N, T}\right\|_F^2\le\lambda_i\left(-\bm{A}_{T-N, T}^\top \bm{A}_{T-N, T}\right)\le 0.
    \end{equation*}
    Setting $i=d$ and $j=1$, we have from Eq.~\eqref{eq:fd-svd-2k-update-29} that
    \begin{equation*}
        -O(1)\frac{1}{\ell}\left\|\bm{A}_{T-N, T}\right\|_F^2\le\lambda_{d}\left(\bm{B}_T\right)\le \left(O(1)\frac{1}{\ell}+\frac{1}{d}\right)\left\|\bm{A}_{T-N, T}\right\|_F^2\le O(1)\frac{1}{\ell}\left\|\bm{A}_{T-N, T}\right\|_F^2,
    \end{equation*}
    where the last inequality holds since $\ell \le d$, otherwise, there will be no $\ell$-th largest eigenvalue.
    Setting $i=\ell$ and $j=d-\ell+1$, we have from Eq.~\eqref{eq:fd-svd-2k-update-29} that
    \begin{equation*}
        -O(1)\frac{1}{\ell}\left\|\bm{A}_{T-N, T}\right\|_F^2\le\lambda_{\ell}\left(\bm{B}_T\right)\le O(1)\frac{1}{\ell}\left\|\bm{A}_{T-N, T}\right\|_F^2.
    \end{equation*}
    Plugging the above two inequalities into Eq.~\eqref{eq:fd-svd-2k-update-28}, we obtain:
    \begin{equation}
        \label{eq:fd-svd-2k-update-32}
            \left\|\bm{B}_T - \hat{\bm{B}}^\top \hat{\bm{B}}\right\|_2 \le O(1)\frac{1}{\ell}\left\|\bm{A}_{T-N, T}\right\|_F^2 = O(1) \varepsilon \left\|\bm{A}_{T-N, T}\right\|_F^2.
    \end{equation}
    Finally, combining Eq.~\eqref{eq:foo-1} and Eq.~\eqref{eq:fd-svd-2k-update-32}, we conclude the proof:
    \begin{equation*}
        \left\|\bm{A}_{T-N, T}^\top \bm{A}_{T-N, T} -\hat{\bm{B}}^\top \hat{\bm{B}}\right\|_2\le \left\|\bm{A}_{T-N, T}^\top \bm{A}_{T-N, T} -\bm{B}_T\right\|_2+\left\|\bm{B}_T - \hat{\bm{B}}^\top \hat{\bm{B}}\right\|_2\le O(1)\varepsilon \left\|\bm{A}_{T-N, T}\right\|_F^2.\qedhere
    \end{equation*}
\end{proof}

\subsection{Time Complexity}

We now analyze the time complexity of each update step in \sidsfd\ (Algorithm~\ref{alg:sidsfd-update}).

\begin{proposition}
    \label{prop:time-complexity}
    The amortized time cost of each update procedure of \sidsfd (Algorithm~\ref{alg:sidsfd-update}) is $O\left(\frac{d}{\varepsilon}\log d\right)$.
\end{proposition}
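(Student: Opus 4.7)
The plan is to split the cost of a single call to Algorithm~\ref{alg:sidsfd-update} into three blocks, bound each in isolation, and then amortize over the $N$ updates of a sliding window. The blocks are: (a) the FD reduction in lines~2--6; (b) the single Power Iteration in line~7; and (c) the doubling chain of Simultaneous Iterations in lines~8--17. For (a), the exact SVD of a $2\ell\times d$ matrix costs $\Theta(d\ell^{2})$, but the reduction zeroes out at least $\ell+1$ rows, so the branch fires at most once every $\ell$ updates and the amortized contribution is $O(d\ell)=O(d/\varepsilon)$. For (b), Corollary~\ref{thm:power} with $k=\lceil\log_{2}d\rceil+1$ charges $O(d\ell\log d)=O((d/\varepsilon)\log d)$ per update, already within the claimed bound.

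The delicate step is block (c). Let $k_i$ denote the terminal value of $k$ when the for-loop of lines~9--17 exits on update $i$, and set $k_i=0$ if line~8 is not entered. Because $\varepsilon_{\text{SI}}=0.4$ is a constant, Theorem~\ref{thm:simul} bounds each call to \texttt{simul\_iter} at size $k$ by $\Theta(d\ell k\log d)$, and the geometric schedule $k=2,4,\dots,2^{\lceil\log_{2}k_i\rceil}$ sums to $O(d\ell k_i\log d)$ for update $i$. Summing over one window and invoking the combinatorial identity
\begin{equation*}
\sum_{i=T-N+1}^{T} k_i \;=\; O\!\left(\frac{\|\bm{A}_{T-N,T}\|_F^{2}}{\theta}\right)\;=\;O\!\left(\tfrac{1}{\varepsilon}\right),
\end{equation*}
which is valid at the level $j$ supplied by Fact~\ref{prop:ml}, where $\theta=\Theta(\varepsilon\|\bm{A}_{T-N,T}\|_F^{2})$, yields a total of $O((d\ell/\varepsilon)\log d)$ for block (c) over the window. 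Under the harmless assumption $N\ge\ell$ (otherwise every row can be stored verbatim), amortizing over $N$ updates produces $O((d/\varepsilon)\log d)$, matching the claim.

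The main obstacle is justifying $\sum_{i}k_i=O(1/\varepsilon)$. The strategy is to link $k_i$ to the squared Frobenius mass that lines~13--16 actually extract from $\bm{C}_i^{\prime}$. Let $\xi_i$ be the number of columns retained in line~14; the doubling schedule enforces $k_i/2\le \xi_i\le k_i$ whenever a snapshot is committed, so $k_i=\Theta(\xi_i)$ in those rounds. For such updates,
\begin{equation*}
\|\bm{C}_i^{\prime}\bm{Z}_i\|_F^{2}\;=\;\sum_{j=1}^{\xi_i}\hat{\sigma}_{ij}^{2}\;\ge\;\xi_i\,\theta,
\end{equation*}
and telescoping the trace identity already appearing in Eq.~\eqref{eq:fd-svd-2k-update-14-1}, together with $\tr(\bm{B}_T)\le \|\bm{A}_{T-N,T}\|_F^{2}+\|\bm{C}_{T-N}\|_F^{2}$ and the bound $\|\bm{C}_{T-N}\|_F^{2}\le 2\ell\,\|\bm{C}_{T-N}^{\top}\bm{C}_{T-N}\|_2=O(\|\bm{A}_{T-N,T}\|_F^{2})$ supplied by Lemma~\ref{lem:power-iter}, gives $\theta\sum_i\xi_i=O(\|\bm{A}_{T-N,T}\|_F^{2})$ and hence $\sum_i k_i=O(1/\varepsilon)$. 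Idle rounds (line~8 entered but $\xi_i=0$) incur at most one \texttt{simul\_iter} call at $k=2$, contributing $O(d\ell\log d)=O((d/\varepsilon)\log d)$ per update, which is already within the claimed amortized bound and therefore needs no further accounting. Combining the three blocks closes the argument.
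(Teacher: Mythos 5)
Your proposal is correct and follows essentially the same route as the paper's proof: the same three-block decomposition (FD reduction amortized over $\ell$ updates, one Power Iteration per update, and the doubling chain of Simultaneous Iterations), the same geometric-sum bound $O(d\ell k_i\log d)$ per update, and the same key estimate $\sum_i k_i=O(1/\varepsilon)$ obtained by telescoping the extracted Frobenius mass $\|\bm{C}_i^\prime\bm{Z}_i\|_F^2\ge \xi_i\theta$ against $\|\bm{A}_{T-N,T}\|_F^2+\|\bm{C}_{T-N}\|_F^2$ via Lemma~\ref{lem:power-iter} and Fact~\ref{prop:ml}. Your detour through the trace identity of Eq.~\eqref{eq:fd-svd-2k-update-14-1} is just a rearrangement of the paper's direct recurrence on $\|\bm{C}_i\|_F^2$, and your explicit accounting of the idle rounds with $\xi_i\in\{0,1\}$ is a small but welcome refinement that the paper leaves implicit.
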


\begin{proof}
    If the number of rows in $\tilde{\bm{C}}_i$ reaches $2\ell$, we perform SVD at line 3, which takes $O(d\ell^2)$ time. After this step, the number of rows in $\bm{C}_i^\prime$ is reduced to $\ell$. Thus, SVD is executed once every $\ell$ updates, and the amortized time cost of lines 2-6 is $O(d\ell^2)/\ell = O(d\ell)$.

    According to the time complexity result in Corollary~\ref{thm:power}, the power iteration at line 7 takes $O(d\ell k) = O(d\ell \log d)$ time, where $k=\lceil\log_2 d \rceil+ 1$ as defined in Algorithm~\ref{alg:sidsfd-update}.

    The analysis of the time cost for the for-loop (lines~9-17) of Simultaneous Iterations is more subtle. During this loop, we compute approximate singular values for increasing $k=2^j$ ($j$ is the loop index) by executing Simultaneous Iteration (line~11) on ${\bm{C}_i^\prime}^\top \bm{C}_i^\prime$ -- specifically, the top-$k$ approximate singular values for $k=2,4,8,\dots,2^j,\dots$. Suppose that in the $j$-th iteration, the if-statement (line~12) finds that the (approximate) $k = 2^j$-th singular value of ${\bm{C}_i^\prime}^\top \bm{C}_i^\prime$ is smaller than $\theta$. In that case, we execute lines 13-17 and terminate the for-loop.

    The following property holds for the approximate singular values obtained from Simultaneous Iteration on $\bm{C}_i^\prime$ as we do not trigger the if-statement in line~12 for $k/2$ but trigger it for $k$:
    \begin{equation*}
        \hat{\sigma}_{k/2}^2(\bm{C}_i^\prime) = \hat{\sigma}_{2^{j-1}}^2(\bm{C}_i^\prime)\geq \hat{\sigma}_{\xi}^2(\bm{C}_i^\prime) \geq \theta .
    \end{equation*}
    We further get the following bound:
    \begin{equation}
        \label{eq:fd-svd-2k-bound}
        \begin{split}
               \left\|\bm{C}_i^\prime\right\|_F^2-\left\|\bm{C}_i^\prime-\bm{C}_i^\prime\bm{Z}_i\bm{Z}_i^\top\right\|_F^2 = & \tr \left({\bm{C}_i^\prime}^\top\bm{C}_i^\prime\right) -\tr\left(\left(\bm{I}-\bm{Z}_i\bm{Z}_i^\top\right){\bm{C}_i^\prime}^\top\bm{C}_i^\prime\left(\bm{I}-\bm{Z}_i\bm{Z}_i^\top\right)\right)                                                                                                                              \\
            = & \tr \left({\bm{C}_i^\prime}^\top\bm{C}_i^\prime\right) -\tr\left({\bm{C}_i^\prime}^\top\bm{C}_i^\prime\left(\bm{I}-\bm{Z}_i\bm{Z}_i^\top\right)\left(\bm{I}-\bm{Z}_i\bm{Z}_i^\top\right)\right)                                                                                                                              \\
            = & \tr \left({\bm{C}_i^\prime}^\top\bm{C}_i^\prime\right) -\tr\left({\bm{C}_i^\prime}^\top\bm{C}_i^\prime\left(\bm{I}-\bm{Z}_i\bm{Z}_i^\top\right)\right)                                                                                                                                                                       \\
            = & \tr\left({\bm{C}_i^\prime}^\top\bm{C}_i^\prime\bm{Z}_i\bm{Z}_i^\top\right)                                                                                                                       =     \tr\left({\bm{C}_i^\prime}^\top\bm{C}_i^\prime\bm{Z}_i\bm{Z}_i^\top\bm{Z}_i\bm{Z}_i^\top\right)                       \\
            = & \tr\left(\bm{Z}_i\bm{Z}_i^\top{\bm{C}_i^\prime}^\top\bm{C}_i^\prime\bm{Z}_i\bm{Z}_i^\top\right)                                                                                                  =     \sum_{i=1}^{\xi} \hat{\sigma}_i^2(\bm{C}_i^\prime)                                               \geq  2^{j-1} \theta. \\
        \end{split}
    \end{equation}

    Substituting $\bm{C}_i = \bm{C}_i^\prime - \bm{C}_i^\prime \bm{Z}_i \bm{Z}_i^\top$ into Eq.~\eqref{eq:fd-svd-2k-bound}, we obtain the following relationship between the two updates:
    \begin{equation}
        \label{eq:fd-svd-2k-update}
        \begin{split}
            \|\bm{C}^{\prime}_{i}\|_F^2 - \|\bm{C}_i\|_F^2 & \ge 2^{j-1}\theta,                                    \\
            \|\bm{C}^{\prime}_{i}\|_F^2                    & \le \|\bm{C}_{i-1}\|_F^2 + \bm{a}_{i}\bm{a}_{i}^\top,
        \end{split}
    \end{equation}
    where the “$<$” relation in the last inequality holds when the if-branch at lines 2-4 is executed. Eq.~\eqref{eq:fd-svd-2k-update} can then be rewritten as:
    \begin{equation}
        \label{eq:fd-svd-2k-update-2}
        \|\bm{C}_{i-1}\|_F^2 + \bm{a}_i\bm{a}_i^\top - \|\bm{C}_{i}\|_F^2 \ge 2^{j-1}\theta = \frac{k_i \theta}{2},
    \end{equation}
    where $k_i$ substitutes $2^j$ at line 12 of Algorithm~\ref{alg:sidsfd-update}, representing the upper bound on the $k$ at line~10 of the $i$-th update.

    Summing Eq.~\eqref{eq:fd-svd-2k-update-2} from $i = T - N + 1$ to $i = T$ (over the active sliding window; snapshots before this window have already expired via lines 7-8 of Algorithm~\ref{alg:mlsidsfd}), we obtain:
    \begin{equation*}
               \left(\sum_{i=T-N+1}^{T} \bm{a}_i\bm{a}_i^\top\right) - \|\bm{C}_{T}\|_F^2 + \|\bm{C}_{T-N}\|_F^2 =  \|\bm{A}_{T-N,T}\|_F^2-\|\bm{C}_{T}\|_F^2 + \|\bm{C}_{T-N}\|_F^2                                   \ge  \frac{\theta}{2} \sum_{i=T-N+1}^{T} k_i, \\
    \end{equation*}
    which can be reformulated as:
    \begin{equation}
        \label{eq:fd-svd-2k-update-4}
            \sum_{i=T-N+1}^{T} k_i \leq  \frac{2}{\theta} \left(\left\|\bm{A}_{T-N,T}\right\|_F^2 + \|\bm{C}_{T-N}\|_F^2 \right) \leq \frac{2}{\theta}\left(\left\|\bm{A}_{T-N,T}\right\|_F^2+8\ell\theta\right),
    \end{equation}
    where the last inequality holds with high probability, according to Lemma~\ref{lem:power-iter}, and that $\|\bm{C}_{T-N}\|_F^2 \le 2\ell \|\bm{C}_{T-N}^\top \bm{C}_{T-N}\|_2 \overset{w.h.p.}{\leq} 8\ell\theta$.

    By Fact~\ref{prop:ml}, plugging $\theta \le \varepsilon \|\bm{A}_{T-N,T}\|_F^2 < 2\theta$ into Eq.~\eqref{eq:fd-svd-2k-update-4}, we derive the following bound:
    \begin{equation}
        \label{eq:fd-svd-2k-update-25}
        \begin{split}
            \sum_{i=T-N+1}^{T} k_i \leq \frac{2}{\theta} \left(\frac{2\theta}{\varepsilon} + 8\ell\theta\right)=\frac{4}{\varepsilon} +16\ell = O\left(\frac{1}{\varepsilon}\right),
        \end{split}
    \end{equation}
    where the last equality holds because \sidsfd\ is initialized with $\ell = \lceil \frac{2}{\varepsilon} \rceil$ in Algorithm~\ref{alg:sidsfd-init}.

    For the $i$-th update step, the time cost of the Simultaneous Iteration procedure (line 11) is $O(d\ell (k_i + k_i/2 + k_i/4 + \cdots) \log \min\left(2\ell, d\right)) = O(d\ell k_i \log d)$. The matrix operations of lines 15-16 take $O(d\ell k_i)$ time. Therefore, the time within the for-loop is dominated by Simultaneous Iteration, yielding a total cost of $O(d\ell k_i \log d)$.

    Summing the time cost of the for-loop (lines 9-17) over all $N$ updates, we get the total time cost of $N$ updates within a sliding window:
    \begin{equation*}
        \sum_{i=T-N+1}^{T} O(d\ell k_i \log d) = O(d\ell \log d) \sum_{i=T-N+1}^{T} k_i = O\left(\frac{d\ell}{\varepsilon}  \log d\right),
    \end{equation*}
    where the last derivation holds according to Eq.~\eqref{eq:fd-svd-2k-update-25}.

    $O\left(\frac{d\ell}{\varepsilon} \log d\right)$ is the total time cost of $N$ updates within a sliding window. Next, we analyze the amortized time cost per update step. Note that we trivially assume $N > \ell$ (otherwise, if $N \le \ell$, we can directly store all $\ell$ rows exactly, so the assumption $N > \ell\Rightarrow \frac{\ell}{N}<1$ is without loss of generality). Amortizing over $N$ updates, the per-update time cost of Algorithm~\ref{alg:sidsfd-update} becomes:
    \begin{equation*}
        \frac{1}{N} O\left(\frac{d\ell}{\varepsilon} \log d\right) = O\left(\frac{d}{\varepsilon} \log d\right).
    \end{equation*}
    Therefore, the amortized time complexity of Algorithm~\ref{alg:sidsfd-update} is $O\left(\frac{d}{\varepsilon} \log d\right)$. %
\end{proof}

\subsection{Space Complexity}

\begin{proposition}
\label{prop:sidsfd-space}
The space complexity of \sidsfd\ is $O\left(\frac{d}{\varepsilon}\right)$.
\end{proposition}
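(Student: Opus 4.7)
The plan is to split the storage footprint of a single \sidsfd\ instance (as used at each level of Algorithm~\ref{alg:mlsidsfd}) into two buckets, bound each, and add them. The two buckets are the working sketch matrix $\bm{C}_i$ maintained by Algorithm~\ref{alg:sidsfd-update}, and the snapshot queue $\mathcal{S}$ populated at line~15 of that algorithm.

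For $\bm{C}_i$, the bound is immediate from the invariant maintained by the FD reduction: $\bm{C}_i$ has at most $2\ell$ rows of dimension $d$ at all times, since whenever a new row would push its row count beyond $2\ell$ the branch at lines~2--4 of Algorithm~\ref{alg:sidsfd-update} triggers and zeroes out at least $\ell+1$ rows. Because Algorithm~\ref{alg:sidsfd-init} sets $\ell=\lceil 2/\varepsilon\rceil$, this contributes $O(d\ell)=O(d/\varepsilon)$ space.

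For the queue $\mathcal{S}$, each snapshot appended at timestamp $i$ stores the pair $(\bm{Z}_i,\bm{Z}_i^\top{\bm{C}_i^\prime}^\top\bm{C}_i^\prime)$ with $\bm{Z}_i\in\mathbb{R}^{d\times \xi_i}$ and $\bm{Z}_i^\top{\bm{C}_i^\prime}^\top\bm{C}_i^\prime\in\mathbb{R}^{\xi_i\times d}$, so its cost is $O(d\xi_i)$, and by the choice of $\xi_i$ at line~13 we have $\xi_i\le k_i$, where $k_i$ is the doubling counter analyzed in Proposition~\ref{prop:time-complexity}. Summing, the total live snapshot storage is $O\!\bigl(d\sum_i k_i\bigr)$, the sum being over currently resident snapshots. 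The crucial reusable input is the bound $\sum_{i=T-N+1}^{T}k_i=O(1/\varepsilon)$ already derived as Eq.~\eqref{eq:fd-svd-2k-update-25} inside the proof of Proposition~\ref{prop:time-complexity}; combined with the fact that the ML wrapper in Algorithm~\ref{alg:mlsidsfd} evicts any snapshot with $t\le i-N$ (lines~7--8) and additionally caps the queue length by $8/\varepsilon$ (line~6), every live snapshot has its timestamp inside the current window, so the same bound applies verbatim and yields $O(d/\varepsilon)$ for the queue.

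Adding the two contributions gives the claimed $O(d/\varepsilon)$ total. The only subtle point, and the one I would be careful with, is that Eq.~\eqref{eq:fd-svd-2k-update-25} itself uses the high-probability bound $\|\bm{C}_{T-N}\|_F^2=O(\ell\theta)$ from Lemma~\ref{lem:power-iter}, so strictly speaking the space bound inherits the success probability of Proposition~\ref{prop:error-bound}; once that probabilistic ingredient is invoked, the remaining counting is routine and requires no further new argument.
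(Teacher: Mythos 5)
Your proof is correct and follows essentially the same route as the paper's: bound the sketch matrix $\bm{C}$ by $O(d\ell)$, bound the queue $\mathcal{S}$ by $O\bigl(d\sum_i k_i\bigr)$ via Eq.~\eqref{eq:fd-svd-2k-update-25}, and add. Your closing remark that the bound inherits the success probability of Lemma~\ref{lem:power-iter} through Eq.~\eqref{eq:fd-svd-2k-update-25} is a fair caveat the paper leaves implicit, but it does not change the argument.
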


\begin{proof}
The space cost of the algorithm is dominated by the sketch matrix $\bm{C}$ and the snapshots stored in the queue $\mathcal{S}$. The sketch matrix $\bm{C}$ requires $O(d\ell)$ space, as its maximum size is $d \times 2\ell$. The queue $\mathcal{S}$ uses $O\left(d\sum_{i=T-N+1}^{T}k_i\right)$ space, since it stores matrices $\bm{Z}_i \in \mathbb{R}^{d \times k_i}$ and $\bm{Z}_i^\top {\bm{C}_i^\prime}^\top \bm{C}_i^\prime \in \mathbb{R}^{k_i \times d}$ at line 15.

Summing over all $N$ updates, and using the bound from Eq.~\eqref{eq:fd-svd-2k-update-25}, the space cost of the queue $\mathcal{S}$ is:
\begin{equation*}
    O\left(d \sum_{i=T-N+1}^{T} k_i\right) = O\left(\frac{d}{\varepsilon}\right).
\end{equation*}

Thus, the total space cost of \sidsfd is $O\left(d\ell + \frac{d}{\varepsilon}\right) = O\left(\frac{d}{\varepsilon}\right)$.
\end{proof}

\begin{corollary}
\label{coro:time-space}
As Algorithm~\ref{alg:mlsidsfd} uses $\lceil\log_2 R\rceil$ levels of the \sidsfd\ framework, its overall space complexity is $O\left(\frac{d}{\varepsilon} \log R\right)$, and its amortized time complexity is 
$O\left(\frac{d}{\varepsilon} \log d \log R\right)$, according to Propositions~\ref{prop:time-complexity} and~\ref{prop:sidsfd-space}.
\end{corollary}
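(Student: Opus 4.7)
The plan is to obtain the corollary by summing the per-level bounds of Propositions~\ref{prop:time-complexity} and~\ref{prop:sidsfd-space} across the $L=\lceil\log_2 R\rceil$ parallel \sidsfd\ instances that Algorithm~\ref{alg:mlsidsfd} maintains in the array $M$, and by observing that these bounds hold uniformly across levels despite the different values of $\theta=2^i\varepsilon N$.

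For space, Proposition~\ref{prop:sidsfd-space} shows that a single \sidsfd\ state -- its sketch matrix $\bm{C}$ of size at most $2\ell\times d$ together with the snapshot queue $\mathcal{S}$ -- occupies $O(d/\varepsilon)$ cells. Since lines~2--4 of Algorithm~\ref{alg:mlsidsfd} instantiate $L$ independent such states and the outer wrapper stores no additional data beyond level pointers and the query-level index $j$ in line~14, the total memory is simply $L\cdot O(d/\varepsilon)=O((d/\varepsilon)\log R)$. For time, every incoming row $\bm{a}_i$ enters the for-loop at line~6 and touches each of the $L$ levels exactly once. Per level the work splits into at most a constant number of expirations/pops in lines~7--8 (each of cost $O(d/\varepsilon)$, chargeable to the update that inserted the snapshot) followed by one of two branches: the cheap append in line~10, costing $O(d)$, or the full \sidsfd.\textsc{Update} call in line~12. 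By Proposition~\ref{prop:time-complexity} the latter has amortized cost $O((d/\varepsilon)\log d)$ at each level. Summing over $L$ levels gives the claimed amortized per-update cost $O((d/\varepsilon)\log d\log R)$.

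The main obstacle to verify is that Proposition~\ref{prop:time-complexity}'s amortized bound, whose derivation goes through Eq.~\eqref{eq:fd-svd-2k-update-25} and relies on Fact~\ref{prop:ml} at the single matched level $j^\star=\lfloor\log_2(\|\bm{A}_{T-N,T}\|_F^2/N)\rfloor$, must remain valid at every other level as well. For levels with $\theta>2\varepsilon\|\bm{A}_{T-N,T}\|_F^2/N$, the threshold comparisons in lines~9 and~12 of Algorithm~\ref{alg:sidsfd-update} are harder to trigger, so the Simultaneous-Iteration loop fires less frequently and $\sum_i k_i$ shrinks geometrically; consequently the per-level amortized cost is bounded by that of the matched level. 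For levels with $\theta$ smaller than this value, the explicit guard $\mathrm{len}(M[j].\mathcal{S})\le 8/\varepsilon$ in line~7 of Algorithm~\ref{alg:mlsidsfd} caps the snapshot queue, so the sum $\sum_{i} k_i$ -- and hence the amortized work for lines~9--17 of Algorithm~\ref{alg:sidsfd-update} -- stays within the matched-level estimate. In both cases the amortized per-level cost is $O((d/\varepsilon)\log d)$, and the total update cost over all $L$ levels, together with the expiration bookkeeping in lines~7--8 whose amortized cost is dominated by the cost of creating each snapshot, telescopes to the claimed $O((d/\varepsilon)\log d\log R)$. Combined with the space analysis, this completes the proof.
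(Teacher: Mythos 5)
Your overall route---take the per-level bounds of Propositions~\ref{prop:time-complexity} and~\ref{prop:sidsfd-space} and multiply by $L=\lceil\log_2 R\rceil$---is exactly the paper's, and you are right to flag cross-level uniformity as the real obstacle, since both propositions are proved via Eq.~\eqref{eq:fd-svd-2k-update-25}, which invokes Fact~\ref{prop:ml} and hence holds only at the matched level $j^\star$. Your handling of the levels above $j^\star$ (larger $\theta$, hence geometrically smaller $\sum_i k_i$) agrees with the paper's sketch and is fine.

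The gap is in your argument for the levels below $j^\star$. The guard $\mathrm{len}(M[j].\mathcal{S})\le 8/\varepsilon$ in line~7 of Algorithm~\ref{alg:mlsidsfd} controls what is \emph{stored}, not what is \emph{computed}: popping a snapshot after the fact does not undo the Simultaneous-Iteration work that produced it, so the cap gives no bound on $\sum_i k_i$ and therefore none on the cost of lines~9--17 of Algorithm~\ref{alg:sidsfd-update}. (It is the right mechanism for the \emph{space} bound at those levels, but not for time.) The correct mechanism is the norm filter at lines~9--10 of Algorithm~\ref{alg:mlsidsfd}: at a level with threshold $\theta$, every row that actually reaches \sidsfd.\textsc{Update} satisfies $\lVert\bm{a}_i\rVert_2^2<\theta$, so the Frobenius mass fed into that instance over the window is at most $N\theta$. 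Plugging this into the telescoping bound of Eq.~\eqref{eq:fd-svd-2k-update-2} yields $\sum_i k_i = O(N+\ell)$, and the amortized Simultaneous-Iteration cost becomes $O\bigl(d\ell\log d\cdot(N+\ell)/N\bigr)=O\bigl(\tfrac{d}{\varepsilon}\log d\bigr)$ per update, matching the matched-level bound (using $N\ge\ell$ as in Proposition~\ref{prop:time-complexity}). With that substitution your proof closes; as written, the lower-level case is unsupported.
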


\subsection{Proof for Lemma~\ref{lem:power-iter}}
\label{sec:proof-lem-power-iter}

    \begin{replem}{lem:power-iter}
        We have
        \begin{equation*}
            \begin{split}
                \left\|\bm{C}_{T-N}^\top \bm{C}_{T-N}\right\|_2 \leq 4\theta \leq 4 \varepsilon \left\|\bm{A}_{T-N,T}\right\|_F^2,
            \end{split}
        \end{equation*}
        with success probability at least \(\frac{99}{100}\left(1- \frac{2}{\pi\sqrt{e}}\cdot\frac{1}{\sqrt{d \log_2 d}}\right)\). 
    \end{replem}
\begin{proof}
    To prove the lower bound of $\Pr\left[ \left\|\bm{C}_{T-N}^\top \bm{C}_{T-N}\right\|_2 \le 4\theta\right]$, we consider the upper bound of $\Pr\left[ \left\|\bm{C}_{T-N}^\top \bm{C}_{T-N}\right\|_2 \geq 4\theta\right]$. We only consider the scenario that the squared largest singular value of $\bm{C}^\prime_i$ exceeds $4\theta$, but after the update the squared largest singular value of $\bm{C}_i$ is still larger than $4\theta$, i.e., $$\Pr\left[\left\|\bm{C}_{T-N}^\top \bm{C}_{T-N}\right\|_2 \geq 4\theta \mid \|{(\bm{C}^\prime_{T-N}})^\top \bm{C}^\prime_{T-N}\|_2 \geq 4\theta\right].$$

    This event can be decomposed into either (1) the power iteration at line 9 of Algorithm~\ref{alg:sidsfd-update} fails to converge (thus the if-statement is not triggered), more specifically, the probability that the largest singular value of $({\bm{C}_{T-N}^\prime})^\top \bm{C}_{T-N}^\prime$ is larger than $4\theta$ but the estimated largest singular value $\hat{\sigma}_1^2$ returned by power iteration method is smaller than $\theta/2$ (i.e., $\hat{\sigma}_1^2\le\theta/2$). The probability of this event is bounded by the following bound according to Corollary~\ref{thm:power}:
    \begin{equation}
        \label{eq:fd-svd-2k-update-17}
        \begin{split}
        &\Pr\left[ \left\|\bm{C}_{T-N}^\top \bm{C}_{T-N}\right\|_2 \geq 4\theta\mid \hat{\sigma}_1^2 \leq \theta/2\  \cap\  \left\|{\bm{C}^\prime_{T-N}}^\top \bm{C}^\prime_{T-N}\right\|_2 \geq 4\theta\right] =  1            ,          \\
        &\Pr\left[ \hat{\sigma}_1^2 \leq \theta/2 \mid \left\|({\bm{C}^\prime_{T-N}})^\top \bm{C}^\prime_{T-N}\right\|_2 \geq 4\theta\right] \leq                                                        {\frac{2}{\pi\sqrt{e}}\cdot\frac{1}{\sqrt{d \log_2 d}}},
        \end{split}
    \end{equation}
    or (2) the power iteration succeeds in converging ($\hat{\sigma}_1^2>\theta/2$) but the simultaneous iteration at line 11 of Algorithm~\ref{alg:sidsfd-update} fails to converge, more specifically, the returned $\bm{Z}^\prime$ of simultaneous iteration method is not a ``good'' approximation of the larger singular vectors. That is to analyze the probability:
    \begin{equation}
        \label{eq:fd-svd-2k-update-18}
        \begin{split}
                & \Pr\left[ \left\|\bm{C}_{T-N}^\top \bm{C}_{T-N}\right\|_2 \geq 4\theta\mid \hat{\sigma}_1^2 > \theta/2 \ \cap \ \left\|({\bm{C}^\prime_{T-N}})^\top \bm{C}^\prime_{T-N}\right\|_2 \geq 4\theta \right]          .                                \\
        \end{split}
    \end{equation}

    By the Spectral norm error of the simultaneous iteration (Theorem~\ref{thm:simul} (2) and (3)), we have the following bound, where $\sigma_i$ and $\hat{\sigma}_i$ denote the $i$-th largest exact and approximate singular values (given by Simultaneous Iteration algorithm) of $\bm{C}^\prime_{T-N}$, respectively:
    \begin{equation*}
        \Pr\left[\left\|\boldsymbol{C}_{T-N}^\prime (\boldsymbol{I}-\boldsymbol{Z Z}^{\top})\right\| _{2}\le(1+\varepsilon_{\text{SI}})\sigma_{\xi+1}  \cap \ \left|\sigma_{\xi+1}^2-\hat{\sigma}_{\xi+1}^2     \right|\le\varepsilon_{\text{SI}}\cdot\sigma_{\xi+1}^2\right] \ge 99/100,
    \end{equation*}
    which can be rewritten as
    \begin{equation*}
        \Pr\left[\left\|\boldsymbol{C}_{T-N}^\prime (\boldsymbol{I}-\boldsymbol{Z Z}^{\top})\right\| _{2}\le\frac{1+\varepsilon_{\text{SI}}}{\sqrt{1-\varepsilon_{\text{SI}}}}\hat{\sigma}_{\xi+1}\right] \ge 99/100.
    \end{equation*}
    Because line 13 of Algorithm~\ref{alg:sidsfd-update} guarantees $\hat{\sigma}_{\xi+1}^2< \theta$, we have
    \begin{equation}
        \label{eq:fd-svd-2k-update-19-2}
        \begin{split}
            & \phantom{{}={}} \Pr\left[ \left\|\bm{C}_{T-N}^\top \bm{C}_{T-N}\right\|_2 \leq \frac{(1+\varepsilon_{\text{SI}})^2}{1-\varepsilon_{\text{SI}}}\theta\right]                                                                                                                          \\
        & = \Pr\left[\left\| (\boldsymbol{I}-\boldsymbol{Z Z}^{\top}) \left(\boldsymbol{C}_{T-N}^\prime\right)^{\top}\boldsymbol{C}_{T-N}^\prime (\boldsymbol{I}-\boldsymbol{Z Z}^{\top})\right\| _{2}\le \frac{(1+\varepsilon_{\text{SI}})^2}{1-\varepsilon_{\text{SI}}}\theta\right]\\
        & = \Pr\left[\left\|\boldsymbol{C}_{T-N}^\prime (\boldsymbol{I}-\boldsymbol{Z Z}^{\top})\right\| _{2}\le\frac{1+\varepsilon_{\text{SI}}}{\sqrt{1-\varepsilon_{\text{SI}}}}\sqrt{\theta}\right]\\
        & \ge \Pr\left[\left\|\boldsymbol{C}_{T-N}^\prime (\boldsymbol{I}-\boldsymbol{Z Z}^{\top})\right\| _{2}\le\frac{1+\varepsilon_{\text{SI}}}{\sqrt{1-\varepsilon_{\text{SI}}}}\hat{\sigma}_{\xi+1}\right]\\
        & \ge 99/100.
        \end{split}
    \end{equation}

    Take $\varepsilon_{\text{SI}}=0.4$, then we have $4\ge \frac{(1+\varepsilon_{\text{SI}})^2}{1-\varepsilon_{\text{SI}}}$ and Eq.~\eqref{eq:fd-svd-2k-update-19-2} can be rewritten as
    \begin{equation*}
            \Pr\left[ \left\|\bm{C}_{T-N}^\top \bm{C}_{T-N}\right\|_2 \leq 4\theta\right] \ge  \Pr\left[ \left\|\bm{C}_{T-N}^\top \bm{C}_{T-N}\right\|_2 \leq \frac{(1+\varepsilon_{\text{SI}})^2}{1-\varepsilon_{\text{SI}}}\theta\right] \ge 99/100.
    \end{equation*}
    Finally the probability in Eq.~\eqref{eq:fd-svd-2k-update-18} is
    \begin{equation}
        \label{eq:fd-svd-2k-update-18-1}
                 \Pr\left[ \left\|\bm{C}_{T-N}^\top \bm{C}_{T-N}\right\|_2 \geq 4\theta\mid \hat{\sigma}_1^2 > \theta/2 \ \cap \ \|{\bm{C}^\prime_{T-N}}^\top \bm{C}^\prime_{T-N}\|_2 \geq 4\theta \right] < 1/100.
    \end{equation}

    Combining Eq.~\eqref{eq:fd-svd-2k-update-17} and Eq.~\eqref{eq:fd-svd-2k-update-18-1} and using the law of total probability, we have the following bound:
    \begin{equation*}
        \begin{split}
              & \Pr\left[ \left\|\bm{C}_{T-N}^\top \bm{C}_{T-N}\|_2 \geq 4\theta\mid \|{\bm{C}^\prime_{T-N}}^\top \bm{C}^\prime_{T-N}\right\|_2 \geq 4\theta \right]                                            \\
            {{}={}} & \Pr\left[ \left\|\bm{C}_{T-N}^\top \bm{C}_{T-N}\|_2 \geq 4\theta \ \cap \ \hat{\sigma}_1^2 \leq \sigma_1/2 \mid \|{\bm{C}^\prime_{T-N}}^\top \bm{C}^\prime_{T-N}\right\|_2 \geq 4\theta \right] \\
              & + \Pr\left[ \left\|\bm{C}_{T-N}^\top \bm{C}_{T-N}\|_2 \geq 4\theta \ \cap \ \hat{\sigma}_1^2 > \sigma_1/2 \mid \|{\bm{C}^\prime_{T-N}}^\top \bm{C}^\prime_{T-N}\right\|_2 \geq 4\theta \right]  \\
            {{}={}} & \Pr\left[ \left\|\bm{C}_{T-N}^\top \bm{C}_{T-N}\|_2 \geq 4\theta\mid \hat{\sigma}_1^2 \leq \sigma_1/2 \cap \|{\bm{C}^\prime_{T-N}}^\top \bm{C}^\prime_{T-N}\right\|_2 \geq 4\theta \right]      \\
              & \times \Pr\left[\hat{\sigma}_1^2 \leq 4\sigma_1/2\mid \left\|{\bm{C}^\prime_{T-N}}^\top \bm{C}^\prime_{T-N}\right\|_2 \geq 4\theta \right]                                                      \\
              & + \Pr\left[ \left\|\bm{C}_{T-N}^\top \bm{C}_{T-N}\|_2 \geq 4\theta\mid \hat{\sigma}_1^2 > \sigma_1/2\ \cap \ \|{\bm{C}^\prime_{T-N}}^\top \bm{C}^\prime_{T-N}\right\|_2 \geq 4\theta \right]    \\
              & \times \Pr\left[\hat{\sigma}_1^2 > \sigma_1/2\mid \left\|{\bm{C}^\prime_{T-N}}^\top \bm{C}^\prime_{T-N}\right\|_2 \geq 4\theta \right]                                                          \\
               {{}\le{}} &{\frac{2}{\pi\sqrt{e}}\cdot\frac{1}{\sqrt{d \log_2 d}}} + \frac{1}{100}\left(1- {\frac{2}{\pi\sqrt{e}}\cdot\frac{1}{\sqrt{d \log_2 d}}}\right)                                                                                          \\
               {{}={}} &1-\frac{99}{100}\left(1- {\frac{2}{\pi\sqrt{e}}\cdot\frac{1}{\sqrt{d \log_2 d}}}\right).
        \end{split}
    \end{equation*}
    Then we have
    \begin{equation*}
        \begin{split}
            \Pr\left[ \left\|\bm{C}_{T-N}^\top \bm{C}_{T-N}\right\|_2 \geq 4\theta\right] & \le  \Pr\left[ \left\|\bm{C}_{T-N}^\top \bm{C}_{T-N}\|_2 \geq  4\theta\mid \|{\bm{C}^\prime_{T-N}}^\top \bm{C}^\prime_{T-N}\right\|_2 \geq 4\theta \right] \\
            & \leq 1-\frac{99}{100}\left(1- {\frac{2}{\pi\sqrt{e}}\cdot\frac{1}{\sqrt{d \log_2 d}}}\right).                                            \\
        \end{split}
    \end{equation*}
    
    Finally, we conclude that with probability at least \begin{equation*}\frac{99}{100}\left(1- {\frac{2}{\pi\sqrt{e}}\cdot\frac{1}{\sqrt{d \log_2 d}}}\right),\end{equation*} we have
    \begin{equation*}
        \left\|\bm{C}_{T-N}^\top \bm{C}_{T-N}\right\|_2 \leq 4\theta \leq  4\varepsilon \left\|\bm{A}_{T-N,T}\right\|_F^2.\qedhere
    \end{equation*}
\end{proof}

\section{Analysis of the Amplified Estimation Procedure (Algorithm~\ref{alg:amplify})}
\label{app:amplified-estimation}

Consider the $c_{hg}$ at line 8 in Algorithm~\ref{alg:amplify}. From Corollary~\ref{thm:power}, if $d\ge 2$, we have
\begin{equation*}
    \Pr\left[\frac{1}{\sqrt{2}}\left\|\bm{C}_i^\prime-\bm{C}_i^\prime\bm{Z}_h\bm{Z}_h^\top\right\|_2\le c_{hg}\le\left\|\bm{C}_i^\prime-\bm{C}_i^\prime\bm{Z}_h\bm{Z}_h^\top\right\|_2\right] \ge 1- \frac{2}{\pi\sqrt{e}}\cdot\frac{1}{\sqrt{d \log_2 d}}\ge \frac{2}{3}.
\end{equation*}
Since we take $c_h = \max_{g \in [s]} c_{hg}$, repeating the estimator $s$ times amplifies its success probability, yielding
\begin{equation*}
    \Pr\left[\frac{1}{\sqrt{2}}\left\|\bm{C}_i^\prime-\bm{C}_i^\prime\bm{Z}_h\bm{Z}_h^\top\right\|_2\le c_{h}\le\left\|\bm{C}_i^\prime-\bm{C}_i^\prime\bm{Z}_h\bm{Z}_h^\top\right\|_2\right] \ge 1-\left(\frac{1}{3}\right)^s.
\end{equation*}

We now ensure that all $r$ outer rounds simultaneously produce accurate estimates. 
Let $M$ denote this event, i.e.,
\begin{equation*}
    \bigcap_{h=1}^{r} \left(\frac{1}{\sqrt{2}}\left\|\bm{C}_i^\prime-\bm{C}_i^\prime\bm{Z}_h\bm{Z}_h^\top\right\|_2\le c_{h}\le\left\|\bm{C}_i^\prime-\bm{C}_i^\prime\bm{Z}_h\bm{Z}_h^\top\right\|_2\right).
\end{equation*}
By a union bound, we have
\begin{equation*}
    \begin{split}
    \Pr\left[M\right] = 1-\Pr\left[\overline{M}\right] =& 1-\Pr\left[\bigcup_{h=1}^{r} \left(c_h<\frac{1}{\sqrt{2}}\left\|\bm{C}_i^\prime-\bm{C}_i^\prime\bm{Z}_h\bm{Z}_h^\top\right\|_2 \cup c_h>\left\|\bm{C}_i^\prime-\bm{C}_i^\prime\bm{Z}_h\bm{Z}_h^\top\right\|_2\right)\right]\\
    \ge & 1-\sum_{h=1}^{r}\Pr\left[c_h<\frac{1}{\sqrt{2}}\left\|\bm{C}_i^\prime-\bm{C}_i^\prime\bm{Z}_h\bm{Z}_h^\top\right\|_2\cup c_h>\left\|\bm{C}_i^\prime-\bm{C}_i^\prime\bm{Z}_h\bm{Z}_h^\top\right\|_2\right] \\
    = & 1- \sum_{h=1}^{r} \left(1-\Pr\left[\frac{1}{\sqrt{2}}\left\|\bm{C}_i^\prime-\bm{C}_i^\prime\bm{Z}_h\bm{Z}_h^\top\right\|_2\le c_{h}\le\left\|\bm{C}_i^\prime-\bm{C}_i^\prime\bm{Z}_h\bm{Z}_h^\top\right\|_2\right]\right) \\
    \ge & 1-r\cdot \left(\frac{1}{3}\right)^s.
    \end{split}
\end{equation*}

Conditioned on event $M$, the ordering between the true residual norms and their estimates is preserved up to a factor $\sqrt{2}$. 
Let $r^\prime=\arg\min_{h\in [r]} c_h$ and 
$r^*=\arg\min_{h\in [r]}\left\|\bm{C}_i^\prime-\bm{C}_i^\prime\bm{Z}_{h}\bm{Z}_{h}^\top\right\|_2$. 
Then
\begin{equation*}
    \min_{h\in [r]} \left\|\bm{C}_i^\prime-\bm{C}_i^\prime\bm{Z}_{h}\bm{Z}_{h}^\top\right\|_2 \ge c_{r^*} \ge c_{r^\prime} \ge \frac{1}{\sqrt{2}}\left\|\bm{C}_i^\prime-\bm{C}_i^\prime\bm{Z}_{r^\prime}\bm{Z}_{r^\prime}^\top\right\|_2.
\end{equation*}
We now substitute the chosen parameters $r=\log_{100}\frac{2}{\delta}$ and $s=2\log_3\frac{2}{\delta}$ (line~1 of Algorithm~\ref{alg:amplify}). 
Combining the above inequality with the failure probability of simultaneous iteration gives
\begin{equation*}
    \begin{split}
        \Pr\left[\left\|\bm{C}_i^\prime-\bm{C}_i^\prime\bm{Z}_{r^\prime}\bm{Z}_{r^\prime}^\top\right\|_2 \ge \sqrt{2} \frac{1+\varepsilon_{\text{SI}}}{\sqrt{1-\varepsilon_{\text{SI}}}}\sqrt{\theta}\right] \le&\Pr\left[\left\|\bm{C}_i^\prime-\bm{C}_i^\prime\bm{Z}_{r^\prime}\bm{Z}_{r^\prime}^\top\right\|_2 \ge \sqrt{2} \frac{1+\varepsilon_{\text{SI}}}{\sqrt{1-\varepsilon_{\text{SI}}}}\sqrt{\theta}\mid M\right]+\Pr\left[\overline{M}\right]\\
        \le & \Pr\left[\min_{h\in[r]}\left\|\bm{C}_i^\prime-\bm{C}_i^\prime\bm{Z}_{r}\bm{Z}_{r}^\top\right\|_2 \ge \frac{1+\varepsilon_{\text{SI}}}{\sqrt{1-\varepsilon_{\text{SI}}}}\sqrt{\theta}\right] + r\cdot \left(\frac{1}{3}\right)^s \\
        \le & \left(\frac{1}{100}\right)^r + r\cdot \left(\frac{1}{3}\right)^s\le \delta.
    \end{split}
\end{equation*}
Finally, setting $\varepsilon_{\text{SI}}=0.2$ (line~3 in Algorithm~\ref{alg:amplify}) yields
\begin{equation}
    \label{eq:amplify}
    \Pr\left[\left\|\bm{C}_i^\top\bm{C}_i\right\|_2 \le 4\theta\right]=\Pr\left[\left\|\bm{C}_i^\prime-\bm{C}_i^\prime\bm{Z}_{r^\prime}\bm{Z}_{r^\prime}^\top\right\|_2^2 \le 4\theta\right] \ge 1-\delta.
\end{equation}
Subsequently, by replacing Lemma~\ref{lem:power-iter} used in the proof of Section~\ref{subsec:error} with Eq.~\eqref{eq:amplify}, while keeping the rest of the argument unchanged, we obtain the error and probability guarantees stated in the Corollary~\ref{coro:amplify}.

Algorithm~\ref{alg:amplify} does not incur any additional space overhead. 
In terms of time complexity, for each update, the total cost of the Simultaneous Iteration in the $r$ rounds (line~3) is $O\left(r\cdot d\ell k_i \log d\right)=O\left(d\ell k_i \log d \log\frac{1}{\delta}\right)$. 

For each update, the $r \times s$ rounds of Power Iteration (line~7) incur a cost of $$O\left(\log k_i \cdot r\cdot s \cdot d\ell \log d \right)=O\left(d\ell k_i \log d \log^2 \frac{1}{\delta} \right).$$
Therefore, the amortized time complexity per update is dominated by the cost of the Power Iteration and, according to Eq.~\eqref{eq:fd-svd-2k-update-25}, is equal to
$$\frac{1}{N}\sum_{i=T-N+1}^{T} O\left(d\ell k_i \log d \log^2 \frac{1}{\delta} \right) =\frac{1}{N} O\left(\frac{d\ell}{\varepsilon} \log d \log^2 \frac{1}{\delta} \right)=O\left(\frac{d}{\varepsilon} \log d \log^2 \frac{1}{\delta} \right). $$

ML-\sidsfd (Algorithm~\ref{alg:mlsidsfd}) has $L=O(\log R)$ levels, so the total amortized time complexity per update of ML-\sidsfd is
$$O\left(L \cdot \frac{d}{\varepsilon} \log d \log^2 \frac{1}{\delta} \right)=O\left(\frac{d}{\varepsilon} \log R \log d \log^2 \frac{1}{\delta} \right). $$

\section{Proof of Theorem~\ref{thm:attp}}
\label{app:attp}

\begin{reptheorem}{thm:attp}
        Algorithm~\ref{alg:attp} solves the ATTP Matrix Sketch (Problem~\ref{prob:attp}) with space complexity $O\!\left(\frac{d}{\varepsilon}\log \left\|\bm{A}\right\|_F^2\right)$ and amortized time complexity $O\!\left(\frac{d}{\varepsilon}\log d \right)$ per update, with success probability at least that given in Eq.~\eqref{eq:prob}. Alternatively, using the probability amplification of Algorithm~\ref{alg:amplify}, it achieves amortized time complexity $O\!\left(\frac{d}{\varepsilon}\log d \log^2 \frac{1}{\delta}\right)$ per update while guarantee success probability at least $1-\delta$, where $\delta$ is a tunable parameter that can be chosen arbitrarily in the interval $(0, 1/100)$.
\end{reptheorem}

\begin{proof}
    \textbf{Error bound analysis:}

    The proof of the error bound is roughly the same as that of Proposition~\ref{prop:error-bound}.

    Eq.~\eqref{eq:fd-svd-2k-update-1} in the proof of Proposition~\ref{prop:error-bound} becomes:
    \begin{equation*}
        \bm{B}_i=\sum_{j=0}^{i} \left(\boldsymbol{Z}_j\boldsymbol{Z}_j^{\top}{\boldsymbol{C}_j^\prime}^{\top}\boldsymbol{C}_j^\prime+{\boldsymbol{C}_j^\prime}^{\top}\boldsymbol{C}_j^\prime\boldsymbol{Z}_j\bm{Z}_j^{\top}-\boldsymbol{Z}_j\bm{Z}_j^{\top}{\boldsymbol{C}_j^\prime}^{\top}\boldsymbol{C}_j^\prime\boldsymbol{Z}_j\bm{Z}_j^{\top}\right).
    \end{equation*}

    We first define the single-update error as:
    \begin{equation}
        \label{eq:attp-2k-update-8}
            \bm{\Delta}_{i}  = \boldsymbol{a}_{i}^\top\boldsymbol{a}_{i}+\boldsymbol{C}_{i-1}^{\top}\boldsymbol{C}_{i-1}+\boldsymbol{B}_{i-1}-\boldsymbol{C}_{i}^{\top}\boldsymbol{C}_{i}-\boldsymbol{B}_{i}.
    \end{equation}
    Then we have
    \begin{equation}
        \label{eq:attp-2k-update-12}
        \begin{split}
            \left\|\sum_{i=1}^{T} \bm{\Delta}_i\right\|_2 & \overset{Eq.~\eqref{eq:attp-2k-update-8}}{=} \left\|\left(\sum_{i=1}^{T} \boldsymbol{a}_{i}^\top\boldsymbol{a}_{i}\right)-\bm{C}_{T}^\top \bm{C}_{T}-\bm{B}_{T}\right\|_2 \\
                                                          & = \left\|\bm{A}_{T}^\top \bm{A}_{T}-\bm{C}_T^\top \bm{C}_T-\bm{B}_T\right\|_2                                                                                                        \\
                                                          & \geq \left\|\bm{A}_{T}^\top \bm{A}_{T}-\bm{B}_T\right\|_2 - \left\|\bm{C}_{T}^\top \bm{C}_{T}\right\|_2 .                                                                             \\
        \end{split}
    \end{equation}
    Rearranging the terms yields
    \begin{equation*}
             \left\|\bm{A}_{T}^\top \bm{A}_{T}-\bm{B}_T\right\|_2 \overset{Eq.~\eqref{eq:attp-2k-update-12}}{\leq} \underbrace{\left\|\sum_{i=1}^{T} \bm{\Delta}_i\right\|_2}_{\text{Term 1}} + \underbrace{\left\|\bm{C}_{T}^\top \bm{C}_{T}\right\|_2}_{\text{Term 2}}. \\
    \end{equation*}
    Bounding Term 1 is similar to the proof of Proposition~\ref{prop:error-bound}, except that from Eq.~\eqref{eq:fd-svd-2k-update-13}:
    \begin{equation*}
        \begin{split}
            \tr\left(\bm{B}_{T}+\bm{C}_T^\top\bm{C}_T\right) & =\sum_{i=1}^{T} \left[\tr\left(\bm{B}_i+\bm{C}_i^\top \bm{C}_i\right) - \tr\left(\bm{B}_{i-1}+\bm{C}_{i-1}^\top \bm{C}_{i-1}\right) \right]  \\
                                          & = \sum_{i=1}^{T} \left[\tr(\bm{a}_i^\top \bm{a}_i)-\tr(\bm{\Delta}_i)\right] \leq \left\|\bm{A}_{T}\right\|_F^2 - \ell\sum_{i=1}^T (\sigma_\ell^2)_i.                                       \\
        \end{split}
    \end{equation*}

    Therefore, we derive:
    \begin{equation*}
        \sum_{i=1}^T (\sigma_\ell^2)_i \leq \frac{1}{\ell}\left(\left\|\bm{A}_{T}\right\|_F^2 - \tr\left(\bm{B}_T+\bm{C}_T^\top \bm{C}_T\right)\right)\le \frac{1}{\ell}\left\|\bm{A}_{T}\right\|_F^2.
    \end{equation*}

    Finally, Term 1 is bounded by
    \begin{equation*}
        \begin{split}
            \text{Term 1} \le \frac{1}{\ell}\left\|\bm{A}_T\right\|_F^2.
        \end{split}
    \end{equation*}

    Bounding Term 2 is similar to the proof of Proposition~\ref{prop:error-bound}. Considering line 6 of Algorithm~\ref{alg:attp} and Lemma~\ref{lem:power-iter} yields (w.h.p.):
    \begin{equation*}
        \text{Term 2} = \left\|\bm{C}_T^\top \bm{C}_T\right\| \leq 4\theta \le 4\varepsilon \left\|\bm{A}_T\right\|_F^2.
    \end{equation*}

    In conclusion, we have (w.h.p.)
    \begin{equation*}
        \text{Term 1} + \text{Term 2} \le O(1)\varepsilon \left\|\bm{A}_{T}\right\|_F^2.
    \end{equation*}

    After that, the remainder of the proof is similar to that of Theorem~\ref{prop:error-bound}.

    \textbf{Space Complexity Analysis:}
    The proof of the space complexity is roughly the same as the proof of Proposition~\ref{prop:time-complexity}.

    Eq.~\eqref{eq:fd-svd-2k-bound} in the proof of Proposition 3.4 becomes:
    \begin{equation*}
        \left\|\bm{C}_i^\prime\right\|_F^2-\left\|\bm{C}_i^\prime-\bm{C}_i^\prime\bm{Z}_i\bm{Z}_i^\top\right\|_F^2 \ge 2^{j-1} \theta_i                                                                                          .                                                                                                                         \\
    \end{equation*}

    Eq.~\eqref{eq:fd-svd-2k-update} becomes:
    \begin{equation*}
        \begin{split}
            \left\|\bm{C}^{\prime}_{i}\right\|_F^2 - \left\|\bm{C}_i\right\|_F^2 & \ge 2^{j-1}\theta_i,                                  \\
            \left\|\bm{C}^{\prime}_{i}\right\|_F^2                    & \le \left\|\bm{C}_{i-1}\right\|_F^2 + \bm{a}_{i}\bm{a}_{i}^\top,
        \end{split}
    \end{equation*}

    Eq.~\eqref{eq:fd-svd-2k-update-2} becomes:
    \begin{equation}
        \label{eq:attp-2k-update-2}
        \left\|\bm{C}_{i-1}\right\|_F^2 + \bm{a}_i\bm{a}_i^\top - \left\|\bm{C}_{i}\right\|_F^2 \ge 2^{j-1}\theta_i = \frac{k_i \theta_i}{2},
    \end{equation}

    Plugging $\theta_i = \varepsilon \|\bm{A}_i\|_F^2$ into Eq.~\eqref{eq:attp-2k-update-2} and summing from the beginning of the epoch to time $i$, we obtain
    \begin{equation}
        \label{eq:attp-2k-update-3}
        \left\|\bm{A}_i\right\|_F^2 - \left\|\bm{C}_i\right\|_F^2 \ge \sum_{j=1}^i \frac{k_j \varepsilon \left\|\bm{A}_j\right\|_F^2}{2}.
    \end{equation}

    Before proceeding, we need the following lemma.
    
    \begin{lemma}
        \label{lem:fd-svd-2k-update-4}
        Given Eq.~\eqref{eq:attp-2k-update-3}, we have the following inequality:
        \begin{equation}
            \label{eq:attp-2k-update-4}
            \left\|\bm{A}_i\right\|_F^2\ge \frac{\frac{k_1}{2}\varepsilon}{\prod_{j=2}^{i}\left(1-\frac{k_j}{2}\varepsilon\right)}\left\|\bm{A}_1\right\|_F^2,
        \end{equation}
        where $0<\frac{k_j}{2}\varepsilon<1$ for $j=1,\ldots,i$ and $k_1=1$.
    \end{lemma}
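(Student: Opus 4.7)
The plan is to reduce the claim to a clean one-step recurrence on the partial sums of $\frac{k_j\varepsilon}{2}\|\bm{A}_j\|_F^2$. To ease the bookkeeping, I would abbreviate $a_i := \|\bm{A}_i\|_F^2$, $c_i := \|\bm{C}_i\|_F^2 \ge 0$, and $\alpha_j := \frac{k_j \varepsilon}{2}\in(0,1)$. With this notation the hypothesis Eq.~\eqref{eq:attp-2k-update-3} becomes $a_i - c_i \ge \sum_{j=1}^{i}\alpha_j a_j$, and discarding the nonnegative term $c_i$ yields the cleaner inequality
\[
    a_i \;\ge\; \sum_{j=1}^{i} \alpha_j\, a_j,
\]
which is what I will actually iterate; Eq.~\eqref{eq:attp-2k-update-4} itself then reduces to showing $a_i \ge \alpha_1 a_1 / \prod_{j=2}^{i}(1-\alpha_j)$.

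Next I would introduce the partial sum $S_i := \sum_{j=1}^{i}\alpha_j a_j$, so that the simplified hypothesis reads $a_i \ge S_i$. The key manipulation is then a one-line recurrence: using $S_i = S_{i-1} + \alpha_i a_i$ and substituting $a_i \ge S_i$ on the right yields
\[
    S_i \;=\; S_{i-1} + \alpha_i a_i \;\ge\; S_{i-1} + \alpha_i S_i,
\]
and since $\alpha_i < 1$ we may rearrange to obtain $S_i \ge S_{i-1}/(1-\alpha_i)$. Iterating this bound down to $S_1$ produces the telescoping estimate $S_i \ge S_1 / \prod_{j=2}^{i}(1-\alpha_j)$, and because $k_1 = 1$ we have $S_1 = \alpha_1 a_1 = \tfrac{\varepsilon}{2}\|\bm{A}_1\|_F^2$. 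Chaining this with $a_i \ge S_i$ delivers Eq.~\eqref{eq:attp-2k-update-4} exactly.

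A trivial induction on $i$ would formalize the iteration step if one prefers; the base case $i=1$ is immediate because the empty product is $1$ and the hypothesis reduces to $a_1 \ge \alpha_1 a_1$, which holds whenever $\alpha_1 \in (0,1)$. There is no serious obstacle here: the only place one must be careful is to invoke the assumption $\alpha_j < 1$ for each $j$ (which is exactly the standing hypothesis $0 < \frac{k_j\varepsilon}{2} < 1$), so that every factor $1-\alpha_j$ in the telescoping product is strictly positive and the direction of the inequality is preserved when dividing through.
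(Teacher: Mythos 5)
Your proof is correct and takes essentially the same route as the paper's: both drop the nonnegative $\|\bm{C}_i\|_F^2$ term, rearrange the hypothesis $a_k \ge \sum_{j\le k}\alpha_j a_j$ into a factor of $(1-\alpha_k)$, and telescope the weighted partial sums down to $S_1=\alpha_1 a_1$. Your formulation as a direct one-step recurrence $S_i \ge S_{i-1}/(1-\alpha_i)$ is just a slightly streamlined packaging of the paper's induction.
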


    The proof of Lemma~\ref{lem:fd-svd-2k-update-4} is deferred to Section~\ref{sec:proof-fd-svd-2k-update-4}. Taking the logarithm of both sides of Eq.~\eqref{eq:attp-2k-update-4} gives:
    \begin{equation*}
        \log \frac{2\left\|\bm{A}_i\right\|_F^2}{\varepsilon\left\|\bm{A}_1\right\|_F^2} \ge \sum_{j=1}^i -\log \left(1-\frac{k_j}{2}\varepsilon\right)\ge \sum_{j=1}^{i} \frac{k_j}{2}\varepsilon,
    \end{equation*}
    where the last inequality holds because $-\log (1-x) \ge x$ for $x\in (0,1)$. Then we have
    \begin{equation*}
        \sum_{j=1}^{T} k_j \le \frac{2}{\varepsilon} \log \frac{2\|\bm{A}_T\|_F^2}{\varepsilon},
    \end{equation*}
    as $\left\|\bm{A}_1\right\|_F^2\ge 1$.

    Then, as in the proof of Proposition~\ref{prop:sidsfd-space}, the total space complexity of Algorithm~\ref{alg:attp} is
    \begin{equation*}
        O\left(d \sum_{i=1}^{T} k_i\right) = O\left(\frac{d}{\varepsilon} \log \frac{\left\|\bm{A}_T\right\|_F^2}{\varepsilon}\right).
    \end{equation*}

    \textbf{Time Complexity Analysis:}

    The total time complexity of Simultaneous Iterations in Algorithm~\ref{alg:attp} is
    \begin{equation*}
        \sum_{i=1}^{T} O(d\ell k_i \log d) = O(d\ell \log d) \sum_{i=1}^{T} k_i = O\left(\frac{d\ell}{\varepsilon}  \log d \log \frac{\left\|\bm{A}_T\right\|_F^2}{\varepsilon}\right)= O\left(\frac{d\ell}{\varepsilon}  \log d \log \frac{T R}{\varepsilon}\right).
    \end{equation*}
    If $T>\ell \log \frac{TR}{\varepsilon}$, the amortized time complexity of Simultaneous Iterations in Algorithm~\ref{alg:attp} is $O\left(\frac{d}{\varepsilon} \log d \right)$. As the Power Iteration in Algorithm~\ref{alg:attp} consumes $O\left(\frac{d}{\varepsilon} \log d\right)$ time per update, the amortized time complexity of Algorithm~\ref{alg:attp} per update is $O\left(\frac{d}{\varepsilon} \log d \right)$.
\end{proof}

\subsection{Proof of Lemma~\ref{lem:fd-svd-2k-update-4}}
\label{sec:proof-fd-svd-2k-update-4}

\begin{replem}{lem:fd-svd-2k-update-4}
Given
\begin{equation*}
    \|\bm{A}_i\|_F^2 - \|\bm{C}_i\|_F^2 \ge \sum_{j=1}^i \frac{k_j \varepsilon \|\bm{A}_j\|_F^2}{2},
\end{equation*}
we have the following inequality:
\begin{equation*}
    \|\bm{A}_i\|_F^2\ge \frac{\frac{k_1}{2}\varepsilon}{\prod_{j=2}^{i}\left(1-\frac{k_j}{2}\varepsilon\right)}\|\bm{A}_1\|_F^2,
\end{equation*}
where $0<\frac{k_j}{2}\varepsilon<1$ for $j=1,\ldots,i$ and $k_1=1$.
\end{replem}

\begin{proof}
    We are given: for each \(i\),
    \begin{equation*}
        \sum_{j=1}^i \|\bm{A}_j\|_F^2 \ge \sum_{j=1}^{i} \left( \frac{k_j}{2} \varepsilon \sum_{t=1}^{j} \|\bm{A}_t\|_F^2 \right),
    \end{equation*}
    where \(\|\bm{A}_j\|_F^2 > 0\) and \(0 \le \frac{k_j}{2} \varepsilon < 1\) (i.e., \(0 \le k_j \varepsilon < 2\)). Define \(S_i = \sum_{j=1}^i \|\bm{A}_j\|_F^2\), then the inequality becomes:
    \begin{equation*}
        S_i \ge \sum_{j=1}^{i} \left( \frac{k_j}{2} \varepsilon \right) S_j.
    \end{equation*}
    We need to prove:
    \begin{equation*}
        S_i \ge \frac{\frac{k_1}{2} \varepsilon}{\prod_{j=2}^{i} \left(1 - \frac{k_j}{2} \varepsilon\right)} \|\bm{A}_1\|_F^2.
    \end{equation*}
    Since \(\|\bm{A}_1\|_F^2 = S_1\), this is equivalent to:
    \begin{equation*}
        S_i \ge \frac{\frac{k_1}{2} \varepsilon}{\prod_{j=2}^{i} \left(1 - \frac{k_j}{2} \varepsilon\right)} S_1.
    \end{equation*}

    Define \(\alpha_j = \frac{k_j}{2} \varepsilon\), so \(0 \le \alpha_j < 1\). Define \(B_m = \prod_{j=2}^{m} (1 - \alpha_j)\) for \(m \ge 2\), and \(B_1 = 1\) (empty product). We will prove by induction on \(m \ge 1\),
    \begin{equation}
        \label{eq:app-i-1}
        \sum_{j=1}^{m} \alpha_j S_j \ge \frac{\alpha_1}{B_m} S_1. 
    \end{equation}

    \subsubsection*{Base case (\(m = 1\)):}
    \begin{equation*}
        \sum_{j=1}^{1} \alpha_j S_j = \alpha_1 S_1 = \frac{\alpha_1}{B_1} S_1
    \end{equation*}
    holds.

    \subsubsection*{Inductive step:}
    Assume that for \(m = k-1\), inequality~\eqref{eq:app-i-1} holds, i.e.,
    \begin{equation*}
        \sum_{j=1}^{k-1} \alpha_j S_j \ge \frac{\alpha_1}{B_{k-1}} S_1.
    \end{equation*}
    Consider \(m = k\). From the given inequality for \(i = k\):
    \begin{equation*}
        S_k \ge \sum_{j=1}^{k} \alpha_j S_j,
    \end{equation*}
    so
    \begin{equation*}
        S_k (1 - \alpha_k) \ge \sum_{j=1}^{k-1} \alpha_j S_j \ge \frac{\alpha_1}{B_{k-1}} S_1.
    \end{equation*}
    Therefore,
    \begin{equation*}
        S_k \ge \frac{\alpha_1}{B_{k-1} (1 - \alpha_k)} S_1 = \frac{\alpha_1}{B_k} S_1,
    \end{equation*}
    where \(B_k = B_{k-1} (1 - \alpha_k)\). Hence,
    \begin{equation*}
        \alpha_k S_k \ge \alpha_k \cdot \frac{\alpha_1}{B_k} S_1.
    \end{equation*}
    Now,
    \begin{equation*}
        \sum_{j=1}^{k} \alpha_j S_j = \sum_{j=1}^{k-1} \alpha_j S_j + \alpha_k S_k \ge \frac{\alpha_1}{B_{k-1}} S_1 + \alpha_k \cdot \frac{\alpha_1}{B_k} S_1.
    \end{equation*}
    Since \(B_{k-1} = B_k / (1 - \alpha_k)\), we have
    \begin{equation*}
        \frac{\alpha_1}{B_{k-1}} S_1 = \frac{\alpha_1 (1 - \alpha_k)}{B_k} S_1.
    \end{equation*}
    Thus,
    \begin{equation*}
        \sum_{j=1}^{k} \alpha_j S_j \ge \frac{\alpha_1 (1 - \alpha_k)}{B_k} S_1 + \alpha_k \cdot \frac{\alpha_1}{B_k} S_1 = \frac{\alpha_1}{B_k} S_1 (1 - \alpha_k + \alpha_k) = \frac{\alpha_1}{B_k} S_1.
    \end{equation*}
    Therefore, inequality (1) holds for \(m = k\). By induction, (1) holds for all \(m \ge 1\).

    Now, for any \(i\), from the given inequality:
    \begin{equation*}
        S_i \ge \sum_{j=1}^{i} \alpha_j S_j \ge \frac{\alpha_1}{B_i} S_1 = \frac{\alpha_1}{\prod_{j=2}^{i} (1 - \alpha_j)} S_1.
    \end{equation*}
    Substituting \(\alpha_j = \frac{k_j}{2} \varepsilon\), we obtain:
    \begin{equation*}
        S_i \ge \frac{\frac{k_1}{2} \varepsilon}{\prod_{j=2}^{i} \left(1 - \frac{k_j}{2} \varepsilon\right)} S_1 = \frac{\frac{k_1}{2} \varepsilon}{\prod_{j=2}^{i} \left(1 - \frac{k_j}{2} \varepsilon\right)} \|\bm{A}_1\|_F^2.
    \end{equation*}
    Hence,
    \begin{equation*}
        \sum_{j=1}^i \|\bm{A}_j\|_F^2 \ge \frac{\frac{k_1}{2} \varepsilon}{\prod_{j=2}^{i} \left(1 - \frac{k_j}{2} \varepsilon\right)} \|\bm{A}_1\|_F^2,
    \end{equation*}
    as required.
\end{proof}

\section{Proof of Theorem~\ref{thm:sw-amm}}
\label{app:sw-amm}

\begin{reptheorem}{thm:sw-amm}
    Algorithm~\ref{alg:sw-amm} solves the Tracking Approximate Matrix Multiplication over Sliding Window (Problem~\ref{prob:sw-amm}) with space complexity $O\left(\frac{d_x+d_y}{\varepsilon}\log R\right)$ and amortized time complexity $O\left(\frac{d_x+d_y}{\varepsilon}\log d \log R \right)$ per update, with success probability at least that given in Eq.~\eqref{eq:prob}. Alternatively, using the similar like probability amplification of Algorithm~\ref{alg:amplify}, it achieves amortized time complexity $O\!\left(\frac{d_x+d_y}{\varepsilon}\log d \log^2 \frac{1}{\delta}\right)$ per update while guarantee success probability at least $1-\delta$, where $\delta$ is a tunable parameter that can be chosen arbitrarily in the interval $(0, 1/100)$.
\end{reptheorem}

\begin{proof}
    The proof of Theorem~\ref{thm:sw-amm} is similar to the proof in Appendix~\ref{sec:proof-sw}.

    \textbf{Error Analysis:}
    We denote the matrix $\bm{A}^\top\bm{B}$ at line 31 of Algorithm~\ref{alg:sw-amm} over the window $(T-N, i]$ as $\hat{\bm{A}}_i^\top \hat{\bm{B}}_i$. That is,
    \begin{equation}
        \label{eq:amm-2k-update-1}
        \hat{\bm{A}}_i \hat{\bm{B}}_i^\top=\bm{A}_i \bm{B}_i^\top +\sum_{j=T-N+1}^{i} \left(\boldsymbol{Z}_j\boldsymbol{Z}_j^{\top}{\boldsymbol{A}_j^\prime}{\boldsymbol{B}_j^\prime}^\top+{\boldsymbol{A}_j^\prime}{\boldsymbol{B}_j^\prime}^\top\boldsymbol{H}_j\bm{H}_j^{\top}-\boldsymbol{Z}_j\bm{Z}_j^{\top}{\boldsymbol{A}_j^\prime}{\boldsymbol{B}_j^\prime}^\top\boldsymbol{H}_j\bm{H}_j^{\top}\right).
    \end{equation}

    We first define the single-update error as:
    \begin{equation}
        \label{eq:amm-2k-update-8}
            \bm{\Delta}_{i}  = \boldsymbol{x}_{i}\boldsymbol{y}_{i}^\top+\hat{\boldsymbol{A}}_{i-1}\hat{\boldsymbol{B}}_{i-1}^\top-\hat{\boldsymbol{A}}_{i}\hat{\boldsymbol{B}}_{i}^\top.
    \end{equation}

    Then we have
    \begin{equation}
        \label{eq:amm-2k-update-12}
        \begin{split}
            \left\|\sum_{i=T-N+1}^{T} \bm{\Delta}_i\right\|_2 & \overset{Eq.~\eqref{eq:amm-2k-update-8}}{=} \left\|\left(\sum_{i=T-N+1}^{T} \boldsymbol{x}_{i}\boldsymbol{y}_{i}^\top\right)-\hat{\bm{A}}_{T} \hat{\bm{B}}_{T}^\top + \hat{\bm{A}}_{T-N} \hat{\bm{B}}_{T-N}^\top\right\|_2 \\
                                                              & = \left\|\bm{X}_{T-N,T} \bm{Y}_{T-N,T}^\top-\hat{\bm{A}}_{T} \hat{\bm{B}}_{T}^\top + \hat{\bm{A}}_{T-N} \hat{\bm{B}}_{T-N}^\top\right\|_2                                                                                         \\
                                                              & \geq \left\|\bm{X}_{T-N, T} \bm{Y}_{T-N,T}^\top-\hat{\bm{A}}_{T} \hat{\bm{B}}_{T}^\top\right\|_2  - \left\|\hat{\bm{A}}_{T-N} \hat{\bm{B}}_{T-N}^\top\right\|_2                                                                   \\
                                                              & \overset{Eq.~\eqref{eq:amm-2k-update-1}}{=} \left\|\bm{X}_{T-N, T} \bm{Y}_{T-N,T}^\top-\hat{\bm{A}}_{T} \hat{\bm{B}}_{T}^\top\right\|_2 - \left\|\bm{A}_{T-N} \bm{B}_{T-N}^\top\right\|_2 ,
        \end{split}
    \end{equation}
    which can be rewritten as
    \begin{equation*}
        \left\|\bm{X}_{T-N,T} \bm{Y}_{T-N,T}^\top-\hat{\bm{A}}_T \hat{\bm{B}}_T^\top\right\|_2\overset{Eq.~\eqref{eq:amm-2k-update-12}}{\leq} \underbrace{\left\|\sum_{i=T-N+1}^{T} \bm{\Delta}_i\right\|_2}_{\text{Term 1}} + \underbrace{\left\|\bm{A}_{T-N} \bm{B}_{T-N}^\top\right\|_2}_{\text{Term 2}}. \\
    \end{equation*}

    \paragraph{\textbf{Bounding Term 1}}

    Substituting Eq.~\eqref{eq:amm-2k-update-1} into Eq.~\eqref{eq:amm-2k-update-8}, we have:
    \begin{equation*}
        \begin{split}
            \bm{\Delta}_{i} {{}={}} & \boldsymbol{x}_{i}\boldsymbol{y}_{i}^\top+\boldsymbol{A}_{i-1}\boldsymbol{B}_{i-1}^{\top}-{\boldsymbol{A}_i}\boldsymbol{B}_{i}^{\top}                                                                                                                                                                                   \\
                             & - ( \boldsymbol{Z}_i \bm{Z}_i^{\top}{\boldsymbol{A}^{\prime}_i}^{\top}{\boldsymbol{B}^{\prime}_i}+{\boldsymbol{A}^{\prime}_i}{\boldsymbol{B}^{\prime}_i}^\top\boldsymbol{H}_i \bm{H}_i^{\top}-\boldsymbol{Z}_i \bm{Z}_i^{\top}{\boldsymbol{A}^\prime_i}{\boldsymbol{B}^\prime_i}^\top\boldsymbol{H}_i \bm{H}_i^{\top}). \\
        \end{split}
    \end{equation*}

    We take the relationship between the two updates: $\bm{A}_i \bm{B}_i^\top= (\bm{I}-\bm{Z}_i \bm{Z}_i^\top ) {\bm{A}_i^\prime} {\bm{B}_i^\prime}^\top (\bm{I}- \bm{H}_i \bm{H}_i^\top )$ (from line 23 of Algorithm~\ref{alg:sw-amm}; if line 25 is triggered, then $\bm{Z}_i$ and $\bm{H}_i$ are considered to be zero), and $\tilde{\bm{A}}_i \tilde{\bm{B}}_i^\top = \bm{A}_{i-1} \bm{B}_{i-1} ^\top+ \bm{x}_i \bm{y}_i^\top$ (line 7), and substitute these into the previous expression to derive the following property:
    \begin{equation*}
        \begin{split}
            \bm{\Delta}_{i}  {{}={}} & \tilde{\bm{A}_{i}} \tilde{\bm{B}_{i}} ^\top- (\bm{I}-\bm{Z}_i \bm{Z}_i^\top ) {\bm{A}_i^\prime} {\bm{B}_i^\prime}^\top (\bm{I}- \bm{H}_i \bm{H}_i^\top )                                                                                                                                                                \\
                               & - (  \boldsymbol{Z}_i \bm{Z}_i^{\top}{\boldsymbol{A}^{\prime}_i}^{\top}{\boldsymbol{B}^{\prime}_i}+{\boldsymbol{A}^{\prime}_i}{\boldsymbol{B}^{\prime}_i}^\top\boldsymbol{H}_i \bm{H}_i^{\top}-\boldsymbol{Z}_i \bm{Z}_i^{\top}{\boldsymbol{A}^\prime_i}{\boldsymbol{B}^\prime_i}^\top\boldsymbol{H}_i \bm{H}_i^{\top}) \\
            {{}={}}                  & \tilde{\bm{A}_{i}} \tilde{\bm{B}_{i}}^\top - {\bm{A}^\prime_i} {\bm{B}^\prime_i}^\top.                                                                                                                                                                                                                                  \\
        \end{split}
    \end{equation*}

    Consider lines~8-11 of Algorithm~\ref{alg:sw-amm}; from these steps, we have:
    \begin{equation}
        \label{eq:attp-2k-update-10}
        \bm{\Delta}_{i}=
        \begin{cases}
            \bm{U}_i\min\left(\bm{\Sigma}_i^2, \bm{I}(\sigma_\ell^2)_i\right)\bm{V}_i^\top & \text{rows}(\bm{A}_i^\prime) \geq 2\ell, \\
            \bm{0}                                                                         & \text{else},                     \\
        \end{cases}
    \end{equation}
    where $\bm{\Sigma}_i^2$ denotes the squared singular value matrix computed at line 3 of the $i$-th update, and $(\sigma_\ell^2)_i$ represents the $\ell$-th largest squared singular value in $\bm{\Sigma}_i^2$. (If $\text{rows}(\bm{A}_i^\prime) < 2\ell$, we set $(\sigma_\ell^2)_i = 0$.)

    Next, summing the matrices $\bm{\Delta}_i$ from Eq.~\eqref{eq:attp-2k-update-10} over the interval $(T - N,T]$ and analyzing their $\|\cdot\|_2$ norm, we obtain:
    \begin{equation}
        \label{eq:attp-2k-update-11}
        \left\|\sum_{i=T-N+1}^{T} \bm{\Delta}_i\right\|_2  \leq \sum_{i=T-N+1}^{T} \left\|\bm{\Delta}_i\right\|_2  = \sum_{i=T-N+1}^T (\sigma_\ell^2)_i.     \\
    \end{equation}

    We now aim to bound $\sum_{i=T-N+1}^T (\sigma_\ell^2)_i$. Let $\|\cdot\|_*$ be the \textit{1-Schatten norm}. If the singular value decomposition of $\bm{A}$ is $\bm{A}=\bm{U\Sigma V}^\top$, then $\|\bm{A}\|_* = \tr(\bm{\Sigma}) = \sum_{i=1}^d \sigma_i(\bm{A})$.
    \begin{equation}
        \label{eq:amm-2k-update-13}
        \begin{split}
                &\sum_{i=T-N+1}^{T}\left(\left\|\bm{A}_i^\prime {\bm{B}_i^\prime}^\top\right\|_* - \left\|\bm{A}_{i-1}\bm{B}_{i-1}^\top\right\|_*\right) \\=   & \sum_{i=T-N+1}^{T}\left(\left\|\tilde{\bm{A}}_i \tilde{\bm{B}}_i^\top\right\|_* - \left\|\bm{A}_{i-1}\bm{B}_{i-1}^\top\right\|_*\right) -\sum_{i=T-N+1}^{T}\left(\left\|\tilde{\bm{A}}_i \tilde{\bm{B}}_i^\top\right\|_* -\left\|\bm{A}_i^\prime {\bm{B}_i^\prime}^\top\right\|_*\right) \\
            \le & \sum_{i=T-N+1}^{T} \|\bm{x}_i\|_2\|\bm{y}_i\|_2 - \ell \sum_{i=T-N+1}^{T} (\sigma_\ell^2)_i.
        \end{split}
    \end{equation}
    Rearranging Eq.~\eqref{eq:amm-2k-update-13} yields
    \begin{equation}
        \label{eq:amm-2k-update-21}
        \begin{split}
            &\ell \sum_{i=T-N+1}^{T} (\sigma_\ell^2)_i \\ \le & \left(\sum_{i=T-N+1}^{T} \|\bm{x}_i\|_2\|\bm{y}_i\|_2\right) - \sum_{i=T-N+1}^{T} \left(\|\bm{A}_i^\prime {\bm{B}_i^\prime}^\top\|_* - \|\bm{A}_{i-1}\bm{B}_{i-1}^\top\|_*\right)                                         \\
            =                                              & \left(\sum_{i=T-N+1}^{T} \|\bm{x}_i\|_2\|\bm{y}_i\|_2\right)-\|\bm{A}_T^\prime {\bm{B}_T^\prime}^\top\|_* + \|\bm{A}_{T-N} \bm{B}_{T-N}^\top \|_* + \sum_{i=T-N+1}^{T-1} \left(\|\bm{A}_{i}\bm{B}_{i}^\top\|_*-\|\bm{A}_{i}^\prime {\bm{B}_{i}^\prime}^\top\|_*\right)   \\
            \le                                            & \left(\sum_{i=T-N+1}^{T} \|\bm{x}_i\|_2\|\bm{y}_i\|_2\right)+ \|\bm{A}_{T-N} \bm{B}_{T-N}^\top \|_*  + \sum_{i=T-N+1}^{T-1} \left(\|\bm{A}_{i}\bm{B}_{i}^\top\|_*-\|\bm{A}_{i}^\prime {\bm{B}_{i}^\prime}^\top\|_*\right).
        \end{split}
    \end{equation}

    From line 23 of Algorithm~\ref{alg:sw-amm}, we have \begin{equation}\bm{A}_i \bm{B}_i^\top = \left(\bm{I}-\bm{ZZ}^\top\right)\bm{A}_i^\prime {\bm{B}_i^\prime}^\top \left(\bm{I}-\bm{HH}^\top\right).\end{equation} As $\bm{I}-\bm{ZZ}^\top$ and $\bm{I}-\bm{HH}^\top$ are orthogonal projectors, we have
    \begin{equation*}
        \|\bm{A}_i\bm{B}_i^\top\|_* - \|\bm{A}_i^\prime {\bm{B}_i^\prime}^\top\|_* \le 0.
    \end{equation*}

    Then we continue from Eq.~\eqref{eq:amm-2k-update-21} and obtain:
    \begin{equation*}
        \begin{split}
             \ell \sum_{i=T-N+1}^{T} (\sigma_\ell^2)_i  \le & \left(\sum_{i=T-N+1}^{T} \left\|\bm{x}_i\right\|_2\left\|\bm{y}_i\right\|_2\right) + \left\|\bm{A}_{T-N} \bm{B}_{T-N}^\top \right\|_* + \sum_{i=T-N+1}^{T-1} \left(\left\|\bm{A}_{i}\bm{B}_{i}^\top\right\|_*-\left\|\bm{A}_{i}^\prime {\bm{B}_{i}^\prime}^\top\right\|_*\right) \\
            \le                                            & \left(\sum_{i=T-N+1}^{T} \left\|\bm{x}_i\right\|_2\left\|\bm{y}_i\right\|_2\right) + \left\|\bm{A}_{T-N}\bm{B}_{T-N}^\top\right\|_*                                                                                                                        \\
            \le                                            & \left(\sqrt{\sum_{i=T-N+1}^{T}\left\|\bm{x}_i\right\|_2^2}\sqrt{\sum_{i=T-N+1}^{T}\|\bm{y}_i\|_2^2}\right) + \left\|\bm{A}_{T-N}\bm{B}_{T-N}^\top\right\|_*                                                                        \\
            =                                              & \left\|\bm{X}_\text{W}\right\|_F\left\|\bm{Y}_\text{W}\right\|_F + \left\|\bm{A}_{T-N}\bm{B}_{T-N}^\top\right\|_*.
        \end{split}
    \end{equation*}

    Therefore, we get the following bound:
    \begin{equation}
        \label{eq:amm-2k-update-14}
        \text{Term 1} \le \sum_{i=T-N+1}^T (\sigma_\ell^2)_i \leq \frac{1}{\ell}\left(\left\|\bm{X}_\text{W}\right\|_F\left\|\bm{Y}_\text{W}\right\|_F + \left\|\bm{A}_{T-N}\bm{B}_{T-N}^\top\right\|_*\right).
    \end{equation}

    \paragraph{\textbf{Bounding Term 2.}}
    We state the following lemma:
    \begin{lemma}
        \label{lem:asym-power-iter}
        We have
        \begin{equation*}
            \begin{split}
                \left\|\bm{A}_{T-N} \bm{B}_{T-N}^\top\right\|_2 \leq 4\theta \leq 4 \varepsilon \left\|\bm{X}_{\text{W}}\right\|_F\left\|\bm{Y}_\text{W}\right\|_F,
            \end{split}
        \end{equation*}
        with success probability at least \(\frac{99}{100}\left(1- \frac{2}{\pi\sqrt{e}}\cdot\frac{1}{\sqrt{d \log_2 d}}\right)\). 
    \end{lemma}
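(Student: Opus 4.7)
The plan is to mirror the decomposition used in the proof of Lemma~\ref{lem:power-iter} in Appendix~\ref{sec:proof-lem-power-iter}, with the key adaptation being that Algorithm~\ref{alg:sw-amm} produces a two-sided projection $\bm{A}_i\bm{B}_i^\top = (\bm{I}-\bm{Z}_i\bm{Z}_i^\top)\bm{A}_i^\prime {\bm{B}_i^\prime}^\top (\bm{I}-\bm{H}_i\bm{H}_i^\top)$ rather than the one-sided projection of the original \sidsfd. First I would condition on the event $\|\bm{A}_{T-N}^\prime {\bm{B}_{T-N}^\prime}^\top\|_2 \ge 4\theta$ (the complementary event is immediate, since an orthogonal projection on either side can only shrink the spectral norm), and then split $\Pr[\|\bm{A}_{T-N}\bm{B}_{T-N}^\top\|_2 \ge 4\theta]$ into two disjoint failure modes via the law of total probability: (i) the Power Iteration at line~13 fails to detect the large norm so the if-branch at line~15 never fires, and (ii) the Power Iteration succeeds but the Simultaneous Iterations at lines~17--18 produce insufficiently accurate subspace approximations.

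For case (i), since Power Iteration applied to $\bm{A}_i^\prime {\bm{B}_i^\prime}^\top$ internally computes $(\bm{A}_i^\prime {\bm{B}_i^\prime}^\top)^\top (\bm{A}_i^\prime {\bm{B}_i^\prime}^\top)$ and thereby estimates the squared largest singular value of $\bm{A}_i^\prime {\bm{B}_i^\prime}^\top$, Corollary~\ref{thm:power} applies verbatim and bounds this failure probability by $\tfrac{2}{\pi\sqrt{e}}\cdot\tfrac{1}{\sqrt{d\log_2 d}}$. For case (ii), the central observation is that $(\bm{I}-\bm{H}_i\bm{H}_i^\top)$ is an orthogonal projector with operator norm at most one, so submultiplicativity of the spectral norm gives
\begin{equation*}
\|\bm{A}_{T-N}\bm{B}_{T-N}^\top\|_2 \;\le\; \|(\bm{I}-\bm{Z}_{T-N}\bm{Z}_{T-N}^\top)\bm{A}_{T-N}^\prime {\bm{B}_{T-N}^\prime}^\top\|_2,
\end{equation*}
reducing the two-sided analysis to a one-sided projection guarantee. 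Applying Theorem~\ref{thm:simul}(2) to the simultaneous iteration at line~17 yields $\|(\bm{I}-\bm{Z}\bm{Z}^\top)\bm{A}_{T-N}^\prime {\bm{B}_{T-N}^\prime}^\top\|_2 \le (1+\varepsilon_{\text{SI}})\sigma_{\xi+1}$, and Theorem~\ref{thm:simul}(3) combined with the loop-termination condition $\hat{\sigma}_{\xi+1}<\theta$ at line~20 gives $\sigma_{\xi+1}\le\hat{\sigma}_{\xi+1}/\sqrt{1-\varepsilon_{\text{SI}}}<\theta/\sqrt{1-\varepsilon_{\text{SI}}}$; choosing $\varepsilon_{\text{SI}}=0.4$ then bounds the whole expression by $\tfrac{1+\varepsilon_{\text{SI}}}{\sqrt{1-\varepsilon_{\text{SI}}}}\,\theta \le 4\theta$ with probability at least $99/100$.

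Combining the two cases via the law of total probability, exactly as in Appendix~\ref{sec:proof-lem-power-iter}, delivers the first inequality $\|\bm{A}_{T-N}\bm{B}_{T-N}^\top\|_2\le 4\theta$ with success probability at least $\tfrac{99}{100}(1-\tfrac{2}{\pi\sqrt{e}}\cdot\tfrac{1}{\sqrt{d\log_2 d}})$. The second inequality $4\theta\le 4\varepsilon\|\bm{X}_\text{W}\|_F\|\bm{Y}_\text{W}\|_F$ follows once Algorithm~\ref{alg:sw-amm} is embedded in a parallel multi-level framework (with levels indexed by dyadic estimates of $\|\bm{X}_\text{W}\|_F\|\bm{Y}_\text{W}\|_F$, analogous to Algorithm~\ref{alg:mlsidsfd} and Fact~\ref{prop:ml}), which guarantees the existence of at least one level with $\theta\le\varepsilon\|\bm{X}_\text{W}\|_F\|\bm{Y}_\text{W}\|_F$. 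The main obstacle, compared with Lemma~\ref{lem:power-iter}, is the two-sided projection structure; the submultiplicativity bound above sidesteps it cleanly, and the only remaining technical care is to verify that the truncation $\bm{Z}^\prime[:,1:k^\prime]\mapsto\bm{Z}$ at line~22 (and similarly for $\bm{H}$) preserves the approximate top-$\xi$ guarantee of Theorem~\ref{thm:simul}, which is inherited from the monotone nested structure of the subspaces produced by Simultaneous Iteration.
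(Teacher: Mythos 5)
Your proposal is correct and follows exactly the route the paper intends: the paper states Lemma~\ref{lem:asym-power-iter} without a separate proof, leaving it implicitly analogous to the proof of Lemma~\ref{lem:power-iter} in Appendix~\ref{sec:proof-lem-power-iter}, and your adaptation — handling the two-sided projection $(\bm{I}-\bm{Z}\bm{Z}^\top)\bm{A}^\prime{\bm{B}^\prime}^\top(\bm{I}-\bm{H}\bm{H}^\top)$ by dropping the right projector via submultiplicativity and then running the same two-failure-mode case analysis with $\tfrac{1+\varepsilon_{\text{SI}}}{\sqrt{1-\varepsilon_{\text{SI}}}}\le 4$ — is the natural and valid way to do it. The truncation subtlety you flag at the end is present in the paper's own proof of Lemma~\ref{lem:power-iter} as well, so your argument is no weaker than the original.
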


    Combining Eq.~\eqref{eq:amm-2k-update-14} with Lemma~\ref{lem:asym-power-iter}, we obtain:
    \begin{equation}
        \label{eq:attp-2k-update-24}
        \begin{split}
         \text{Term 1} + \text{Term 2}   =                 & \left\|\sum_{i=T-N+1}^{T} \bm{\Delta}_i\right\|_2 + \left\|\bm{A}_{T-N} \bm{B}_{T-N}^\top\right\|_2                                                           \\
        \overset{Eq.~\eqref{eq:amm-2k-update-14}}{\leq} & \frac{1}{\ell}\left\|\bm{X}_\text{W}\right\|_F\left\|\bm{Y}_\text{W}\right\|_F + \frac{1}{\ell}\left\|\bm{A}_{T-N}\bm{B}_{T-N}^\top\right\|_* + \left\|\bm{A}_{T-N}\bm{B}_{T-N}^\top\right\|_2  \\
        \leq                                            & \frac{1}{\ell}\left\|\bm{X}_\text{W}\right\|_F\left\|\bm{Y}_\text{W}\right\|_F + 3\left\|\bm{A}_{T-N}\bm{B}_{T-N}^\top\right\|_2                                                               \\
        \overset{Lemma~\ref{lem:power-iter}}{\leq}      & \frac{1}{\ell}\left\|\bm{X}_\text{W}\right\|_F\left\|\bm{Y}_\text{W}\right\|_F + 12\varepsilon \left\|\bm{X}_\text{W}\right\|_F\|\bm{Y}_\text{W}\|_F .                                                                     \\
        \end{split}
    \end{equation}
    where the second-to-last inequality holds because $\left\|\bm{A}_{T-N}\bm{B}_{T-N}^\top\right\|_* = \sum_{i=1}^{2\ell} \sigma_i^2 \le 2\ell \sigma_1^2 = 2\ell \left\|\bm{A}_{T-N}\bm{B}_{T-N}^\top\right\|_2$.

    In the initialization of Algorithm~\ref{alg:sw-amm}, the parameter is set as $\ell = \lceil \frac{2}{\varepsilon} \rceil$. Substituting this into Eq.~\eqref{eq:attp-2k-update-24}, we obtain:
    \begin{equation}
        \label{eq:amm-a}
        \text{Term 1} + \text{Term 2} \le  \frac{25}{2}\varepsilon \left\|\bm{X}_{T-N,T}\right\|_F\left\|\bm{Y}_{T-N,T}\right\|_F^2  =  O(1)\varepsilon \left\|\bm{X}_{T-N,T}\right\|_F\left\|\bm{Y}_{T-N,T}\right\|_F.
    \end{equation}

    Finally, lines~31-33 guarantee that
    \begin{equation}
        \label{eq:amm-c}
        \left\|\hat{\bm{A}}_T\hat{\bm{B}}_T^\top - \bm{A}^*\left(\bm{B}^*\right)^\top \right\|_2= \left\|\bm{U}\min\left(\bm{\Sigma}, \sigma_\ell \bm{I}\right)\bm{V}^\top\right\|_2 = \sigma_\ell\left(\hat{\bm{A}}_T\hat{\bm{B}}_T^\top\right).
    \end{equation}
    By Weyl's inequalities, we have
    \begin{equation}
        \label{eq:amm-b}
        \begin{split}
        \sigma_\ell\left(\hat{\bm{A}}_T\hat{\bm{B}}_T^\top\right)=&\sigma_\ell\left(\hat{\bm{A}}_T\hat{\bm{B}}_T^\top - \bm{X}_{T-N,T} \bm{Y}_{T-N,T}^\top+\bm{X}_{T-N,T} \bm{Y}_{T-N,T}^\top\right)\\ 
        \le & \sigma_1\left(\hat{\bm{A}}_T\hat{\bm{B}}_T^\top - \bm{X}_{T-N,T} \bm{Y}_{T-N,T}^\top\right) + \sigma_\ell\left(\bm{X}_{T-N,T} \bm{Y}_{T-N,T}^\top\right)\\
        \overset{Eq.~\eqref{eq:amm-a}}{\le}& O(1)\varepsilon \left\|\bm{X}_{T-N,T}\right\|_F\left\|\bm{Y}_{T-N,T}\right\|_F+\frac{1}{\ell}\left\|\bm{X}_{T-N,T} \bm{Y}_{T-N,T}^\top\right\|_*\\
        =&O(1)\varepsilon \left\|\bm{X}_{T-N,T}\right\|_F\left\|\bm{Y}_{T-N,T}\right\|_F.
        \end{split}
    \end{equation}

        Finally, combining Eq.~\eqref{eq:amm-a},~\eqref{eq:amm-c} and~\eqref{eq:amm-b}, we conclude the proof:
    \begin{multline*}
        \left\|\bm{X}_{T-N,T} \bm{Y}_{T-N,T}^\top-\bm{A}^*\left(\bm{B}^*\right)^\top \right\|_2\le \left\|\bm{X}_{T-N,T} \bm{Y}_{T-N,T}^\top-\hat{\bm{A}}_T \hat{\bm{B}}_T^\top\right\|_2+\left\|\hat{\bm{A}}_T\hat{\bm{B}}_T^\top - \bm{A}^*\left(\bm{B}^*\right)^\top \right\|_2\\ \le O(1)\varepsilon \left\|\bm{X}_{T-N,T}\right\|_F\left\|\bm{Y}_{T-N,T}\right\|_F.
    \end{multline*}

    \textbf{Space Complexity analysis:}

    The simultaneous iteration algorithm guarantees that:
    \begin{equation}
        \label{eq:amm-2k-update-100}
        \begin{split}
            \left\|\bm{A}^{\prime}_{i}{\bm{B}^\prime_i}^\top\right\|_* - \left\|\bm{A}_i\bm{B}_i^\top\right\|_* & \ge 2^{j-1}\theta, \\
        \end{split}
    \end{equation}

    From Eq.~\eqref{eq:amm-2k-update-13}, we obtain:
    \begin{equation}
        \label{eq:amm-2k-update-16}
        \sum_{i=T-N+1}^{T}\left(\left\|\bm{A}_i^\prime {\bm{B}_i^\prime}^\top\right\|_* - \left\|\bm{A}_{i-1}\bm{B}_{i-1}^\top\right\|_*\right)\le \sum_{i=T-N+1}^{T} \left\|\bm{x}\right\|_2\left\|\bm{y}\right\|_2.
    \end{equation}

    Combining Eq.~\eqref{eq:amm-2k-update-100} and Eq.~\eqref{eq:amm-2k-update-16}, we obtain:
    \begin{equation*}
        \sum_{i=T-N+1}^{T} \left\|\bm{x}\right\|_2\left\|\bm{y}\right\|_2 - \left\|\bm{A}_T\bm{B}_T\right\|_* +\left\|\bm{A}_{T-N}\bm{B}_{T-N}^\top\right\|_* \ge \frac{\theta}{2} \sum_{i=T-N+1}^{T}k_i,
    \end{equation*}
    which can be reformulated as:
    \begin{equation*}
        \begin{split}
            \sum_{i=T-N+1}^{T} k_i & \leq \frac{2}{\theta} \left(\left\|\bm{X}_{T-N,T}\right\|_F\left\|\bm{Y}_{T-N,T}^\top\right\|_F + \left\|\bm{A}_{T-N}\bm{B}_{T-N}^\top\right\|_* \right) \\
            & \leq                        \frac{2}{\theta}\left(\left\|\bm{X}_{T-N,T}\right\|_F\left\|\bm{Y}_{T-N,T}\right\|_F+8\ell\theta\right),
        \end{split}
    \end{equation*}
    where the last equality holds with high probability, according to Lemma~\ref{lem:power-iter}, and that $\left\|\bm{A}_{T-N}\bm{B}_{T-N}^\top\right\|_* \le 2\ell \left\|\bm{A}_{T-N}\bm{B}_{T-N}^\top\right\|_2 \overset{w.h.p.}{\leq} 8\ell\theta$.

    By Fact~\ref{prop:ml}, plugging $\theta \le \varepsilon \left\|\bm{X}_{T-N,T}\right\|_F \left\|\bm{Y}_{T-N,T}\right\|_F < 2\theta$ into Eq.~\eqref{eq:attp-2k-update-4}, we obtain:
    \begin{equation*}
        \begin{split}
            \sum_{i=T-N+1}^{T} k_i \leq \frac{2}{\theta} \left(\frac{2\theta}{\varepsilon} + 8\ell\theta\right)=\frac{4}{\varepsilon} +16\ell = O\left(\frac{1}{\varepsilon}\right).
        \end{split}
    \end{equation*}

    The space cost of the algorithm is dominated by the sketch matrices $\bm{A}$, $\bm{B}$, and the snapshots stored in the queue $\mathcal{S}$. The sketch matrices $\bm{A}$ and $\bm{B}$ require $O((d_x+d_y)\ell)$ space. The queue $\mathcal{S}$ stores matrices $\bm{Z}_i \in \mathbb{R}^{d_x \times k_i}$, $\bm{Z}_i^\top {\bm{A}_i^\prime} {\bm{B}_i^\prime}^\top \in \mathbb{R}^{k_i \times d_y}$, $\bm{A}_i^\prime {\bm{B}_i^\prime}^\top \bm{H} \in \mathbb{R}^{d_x\times k_i}$ and $\bm{H}\in \mathbb{R}^{d_y\times k_i}$, so the space cost of $\mathcal{S}$ is
    \begin{equation*}
        O\left((d_x+d_y) \sum_{i=T-N+1}^{T} k_i\right) = O\left(\frac{d_x+d_y}{\varepsilon}\right).
    \end{equation*}

    There are $\lceil\log R\rceil$ levels, so the total space cost is
    \begin{equation*}
        O\left(\frac{d_x+d_y}{\varepsilon} \log R \right).
    \end{equation*}

    \textbf{Time Complexity analysis:}

    Similar to the proof of Proposition~\ref{prop:time-complexity}, the amortized time complexity is
    \begin{equation*}
        O\left(\frac{d_x+d_y}{\varepsilon} \log R \log d\right) . \qedhere
    \end{equation*}
\end{proof}

\section{Proof of Theorem~\ref{thm:dist}}
\label{app:dist}

\begin{reptheorem}{thm:dist}
    \pfive (Algorithm~\ref{alg:dsfd}) solves the Distributed Matrix Sketch problem  (Problem~\ref{prob:dsfd}) with communication complexity $O\left(\frac{md}{\varepsilon}\log \left\|\bm{A}\right\|_F^2\right)$ and amortized time complexity $O\left(\frac{d}{\varepsilon}\log d \right)$ per update, with success probability at least that given in Eq.~\eqref{eq:prob}. Alternatively, using the probability amplification of Algorithm~\ref{alg:amplify}, it achieves amortized time complexity $O\!\left(\frac{d}{\varepsilon}\log d \log^2 \frac{1}{\delta}\right)$ per update while guarantee success probability at least $1-\delta$, where $\delta$ is a tunable parameter that can be chosen in the interval $(0, 1/100)$.
\end{reptheorem}

\begin{proof}
    The error bound guarantee is ensured by the decomposability of the $m$ Frequent Directions sketches~\cite{liberty2013simple}, meaning that:
    \begin{lemma}[Decomposability]
        \label{lem:Decomposability}
        Given a matrix $\bm{A}\in\mathbb{R}^{d\times n}$, we decompose $\bm{A}$ into $m$ submatrices, i.e., $\bm{A}^{\top} = [\bm{A}_1,\bm{A}_2,\dots,\bm{A}_m]$, where $\bm{A}_i\in\mathbb{R}^{d\times n_i}$ and $\sum_{i=1}^m n_i = n$.
        If we construct a matrix sketch with an $\epsilon_i$ covariance error guarantee for each submatrix $\bm{A}_i$, denoted as $\bm{B}_i$, such that
        \begin{equation*}
            \left\Vert\bm{A}_i\bm{A}_i^\top-\bm{B}_i\bm{B}_i^{\top}\right\Vert_2\le \epsilon_i\cdot\left\Vert\bm{A}_i\right\Vert_F^2.
        \end{equation*}
        Then $\bm{B} = [\bm{B}_1,\bm{B}_2,\cdots,\bm{B}_m]$ is an approximation of $\bm{A}$ and the error bound is
        \begin{equation*}
            \left\Vert\bm{A}\bm{A}^{\top}-\bm{B}\bm{B}^{\top}\right\Vert_2\le\sum_{i=1}^m\epsilon_i\cdot\left\Vert\bm{A}_i\right\Vert_F^2.
        \end{equation*}
    \end{lemma}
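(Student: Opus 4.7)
The plan is to exploit the block structure of the decomposition: the columns of $\bm{A}$ are partitioned across the $m$ blocks, which induces an \emph{additive} identity for the Gram matrices $\bm{A}\bm{A}^\top$ and $\bm{B}\bm{B}^\top$. Combined with subadditivity of the spectral norm, this immediately reduces the global covariance error to a sum of per-block covariance errors, each of which is controlled by hypothesis.

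Concretely, I will proceed in three short steps. First, since the blocks $\bm{A}_i \in \mathbb{R}^{d \times n_i}$ correspond to a partition of the columns of $\bm{A}$ (equivalently, of the rows of $\bm{A}^\top$), block matrix multiplication gives the key identity $\bm{A}\bm{A}^\top = \sum_{i=1}^{m} \bm{A}_i \bm{A}_i^\top$, and likewise $\bm{B}\bm{B}^\top = \sum_{i=1}^{m} \bm{B}_i \bm{B}_i^\top$ for the concatenated sketch $\bm{B} = [\bm{B}_1,\ldots,\bm{B}_m]$. Second, subtracting the two expansions yields $\bm{A}\bm{A}^\top - \bm{B}\bm{B}^\top = \sum_{i=1}^{m}\bigl(\bm{A}_i\bm{A}_i^\top - \bm{B}_i\bm{B}_i^\top\bigr)$, and the triangle inequality for $\|\cdot\|_2$ gives $\|\bm{A}\bm{A}^\top - \bm{B}\bm{B}^\top\|_2 \le \sum_{i=1}^{m}\|\bm{A}_i\bm{A}_i^\top - \bm{B}_i\bm{B}_i^\top\|_2$. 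Third, I substitute the per-block hypothesis $\|\bm{A}_i\bm{A}_i^\top - \bm{B}_i\bm{B}_i^\top\|_2 \le \epsilon_i \|\bm{A}_i\|_F^2$ to obtain $\|\bm{A}\bm{A}^\top - \bm{B}\bm{B}^\top\|_2 \le \sum_{i=1}^{m} \epsilon_i \|\bm{A}_i\|_F^2$, which is exactly the claim.

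There is no substantive technical obstacle here; the lemma is effectively a one-line observation once the block-Gram identity is written down. The only mild subtlety worth pausing on is notational: the statement writes $\bm{A}^\top = [\bm{A}_1,\ldots,\bm{A}_m]$ to indicate a column-partition of $\bm{A}$ (each $\bm{A}_i \in \mathbb{R}^{d \times n_i}$ collects the columns observed at site $i$), and correspondingly $\bm{B}$ is built by horizontal concatenation of the per-site sketches $\bm{B}_i$, so that the additive identity $\bm{A}\bm{A}^\top = \sum_i \bm{A}_i\bm{A}_i^\top$ actually holds. Once this convention is fixed, everything follows from triangle inequality and the per-site hypothesis in essentially a single display, so the entire proof can be stated in four or five lines.
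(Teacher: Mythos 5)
Your proof is correct and is exactly the standard argument: the column-partition gives the additive Gram identity $\bm{A}\bm{A}^\top=\sum_i\bm{A}_i\bm{A}_i^\top$ (and likewise for $\bm{B}$), after which the triangle inequality and the per-block hypothesis finish the claim. The paper itself states this lemma without proof, citing it as the known decomposability property of Frequent Directions sketches from prior work, and your argument (including the correct reading of the slightly inconsistent notation $\bm{A}^\top=[\bm{A}_1,\dots,\bm{A}_m]$ as a column partition of $\bm{A}$) is precisely the intended one.
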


    We let the vectors observed at site $S_i$ form the submatrix $\bm{A}_i$. In lines 19-20, the portion of the matrix $\bm{B}^\top\bm{B}$ maintained by the coordinator that comes from site $S_i$ is denoted as $\bm{B}_i^\top \bm{B}_i$. In the procedure of each site (lines 1-13 of Algorithm~\ref{alg:dsfd}), apart from replacing ``storing the dumped vector in the queue'' with ``sending it to the coordinator'', the other operations are equivalent to those in Algorithm~\ref{alg:attp}. Therefore, by Theorem~\ref{thm:attp}, we obtain:
    \begin{equation*}
        \left\|\bm{A}_i ^\top \bm{A}_i - \bm{B}_i^\top \bm{B}_i\right\|_2 \le \frac{\varepsilon}{m} \left\|\bm{A}\right\|_F^2.
    \end{equation*}
    
    From Lemma~\ref{lem:Decomposability}, we can obtain:
    \begin{equation*}
        \left\|\bm{A}^\top \bm{A} - \bm{B}^\top \bm{B}\right\|_2 \le \varepsilon \left\|\bm{A}\right\|_F^2.
    \end{equation*}

    We then bound the communication cost. We define the periods between two consecutive broadcasts of $\hat{F}$ (line 18 of Algorithm~\ref{alg:dsfd}) as a round. The $\hat{F}$ of the beginning of some round is $(1+\varepsilon)$ times the $\hat{F}$ of the beginning of the previous round. As the current $\hat{F}=\left\|\bm{A}\right\|_F^2$, we have $(1+\varepsilon)^r\le \left\|\bm{A}\right\|_F^2$. Therefore, at most $r=O\left(\log_{(1+\varepsilon)}\left\|\bm{A}\right\|_F^2\right)=O\left(\frac{1}{\varepsilon}\log \left\|\bm{A}\right\|_F^2\right)$ rounds are possible. Then, according to Lemma 3 of~\cite{ghashami2014continuous}:

    \begin{lemma}[Lemma 3 of~\cite{ghashami2014continuous}]
        After $r$ rounds, at most $O(m\cdot r)$ element update messages have been sent.
    \end{lemma}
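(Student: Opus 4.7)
The plan is to unwind the definition of ``round'' used in the paragraph just above the lemma and then trace the coordinator's bookkeeping in Algorithm~\ref{alg:dsfd} to count scalar messages. An ``element update message'' here means one of the scalar $F_j$ payloads that a site transmits to the coordinator (lines~3--4 of the site-side procedure): these are precisely the messages that increment the coordinator's running estimate $\hat{F}$ of $\|\bm{A}\|_F^2$. A round is the interval between two consecutive broadcasts of $\hat{F}$ by the coordinator (line~17), which is also the notion used just above to derive $r=O\!\left(\tfrac{1}{\varepsilon}\log\|\bm{A}\|_F^2\right)$ from the geometric growth $\hat{F}\ge(1+\varepsilon)^r$.

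The core of the argument is a single bookkeeping invariant on the coordinator counter $\#\texttt{msg}$. First I would observe that this counter is incremented by exactly one per incoming $F_j$ message (line~15), is reset to zero immediately after each broadcast (line~17), and a broadcast is triggered precisely at the moment $\#\texttt{msg}$ first reaches $m$ (line~16). Combining these three facts, each completed round contains exactly $m$ element update messages, no more and no less. Summing over the $r$ completed rounds, together with at most $m-1$ additional messages that may have been sent during a possibly-incomplete round at the moment of counting, yields at most $mr+(m-1) = O(m\cdot r)$ scalar messages in total.

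I do not anticipate a serious obstacle: the argument is a one-line invariant on the coordinator counter and does not interact with the matrix-snapshot messages or with the \sidsfd-level analysis. The only item worth explicitly checking is that the round boundaries used here coincide with the $r$ appearing in the outer communication-cost proof; this is immediate because both definitions refer to the same broadcast events on line~17 of Algorithm~\ref{alg:dsfd}.
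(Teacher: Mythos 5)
There is a genuine gap here: you have counted the wrong messages. The ``element update messages'' of Lemma~3 are not the scalar $F_j$ payloads of lines~3--4 of the site-side procedure; they are the data-carrying messages that transmit a dumped direction to the coordinator --- in Algorithm~\ref{alg:dsfd} these are the $(\bm{Z},\,\bm{Z}^\top\bm{C}^\top\bm{C})$ messages sent at lines~6--9. This is forced by how the lemma is used immediately afterwards: the paper multiplies the $O(m\cdot r)$ count by a per-message size of ``two $d$-dimensional vectors'' to obtain the total communication $O\!\left(\frac{md}{\varepsilon}\log\|\bm{A}\|_F^2\right)$, which would make no sense for scalar $\hat{F}$-bookkeeping messages. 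Moreover, under your reading the lemma is essentially tautological --- a round is \emph{defined} as terminating once $m$ messages of type $F_j$ have been received, so ``exactly $m$ per round'' carries no content --- whereas the actual lemma is a nontrivial bound on a different message stream whose cadence is governed by the dump threshold $\theta=\varepsilon\hat{F}$, not by the $\#\texttt{msg}$ counter.

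A correct argument needs a charging scheme: every direction a site dumps has (estimated) Rayleigh quotient at least $\theta=\varepsilon\hat{F}$, so each element update message accounts for at least $\Omega(\varepsilon\hat{F})$ of squared Frobenius norm removed from that site's residual; meanwhile, during one round the total squared norm arriving across all sites is $O(\varepsilon\hat{F})$ (the round ends once $\hat{F}$ grows by a $(1+\varepsilon)$ factor, plus at most $\frac{\varepsilon}{m}\hat{F}$ of unreported slack per site), and each site may additionally carry over one sub-threshold partial accumulation between rounds. This yields $O(m)$ element update messages per round and hence $O(m\cdot r)$ in total. Note also that the paper itself does not prove this statement --- it imports it verbatim as Lemma~3 of~\cite{ghashami2014continuous} --- so if you intend to supply a self-contained proof, it must be of this charging form, adapted to the fact that here a single dump may ship several columns of $\bm{Z}$ at once, each of which must be charged separately against the threshold.
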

    Instead of changing the weight using a single element, we use the Rayleigh quotient change along each corresponding estimated singular vector in $\bm{Z}$. In this setting, the update message for each element is represented by two $d$-dimensional vectors: one taken from a row of $\bm{Z}^\top \bm{C}^\top \bm{C}$ and the other from a column of $\bm{Z}$. Therefore, the total communication cost is $O\left(\frac{md}{\varepsilon}\log \left\|\bm{A}\right\|_F^2\right)$.

    For each site, the procedure is similar to that of Algorithm~\ref{alg:attp}. The amortized time complexity of Algorithm~\ref{alg:dsfd} is $O\left(\frac{d}{\varepsilon}\log d \right)$, which is the same as that of Algorithm~\ref{alg:attp}.
\end{proof}

\section{Proof of Theorem~\ref{thm:dist-sw}}
\label{app:dist-sw}

\begin{reptheorem}{thm:dist-sw}
    Algorithm~\ref{alg:dswfd} solves the Tracking Distributed Matrix Sketch over Sliding Window (Problem~\ref{prob:dswfd}) with communication complexity $O\left(\frac{md}{\varepsilon}\log \left\|\bm{A}\right\|_F^2\right)$ and amortized time complexity $O\left(\frac{d}{\varepsilon}\log d \right)$ per update, with success probability at least that given in Eq.~\eqref{eq:prob}. Alternatively, using the probability amplification of Algorithm~\ref{alg:amplify}, it achieves amortized time complexity $O\!\left(\frac{d}{\varepsilon}\log d \log^2 \frac{1}{\delta}\right)$ per update to guarantee success probability at least $1-\delta$, where $\delta$ is a tunable parameter that can be chosen arbitrarily in the interval $(0, 1/100)$.
\end{reptheorem}

\begin{proof}
    The error bound and communication cost guarantees follow from~\cite{zhang2017tracking}, similarly to those for DA2. As for the time complexity, each site in Algorithm~\ref{alg:dswfd} uses \pfive as a module, whose amortized time complexity is $O\left(\frac{d}{\varepsilon}\log d \right)$.
\end{proof}

\section{Optimize aDS-COD by our \sidscod Framework}
\label{sec:ads-cod}

Besides hDS-COD, Section~4.3 of~\cite{yao2025optimal} also proposes Adaptive DS-COD (aDS-COD) in \textit{Algorithm~6}, which optimizes both time and space complexity based on hDS-COD. However, it is explicitly pointed out that this algorithm is only suitable for datasets that ``have a large $R$ but little fluctuation within the window'', and not for the more strictly defined version in Problem~\ref{prob:sw-amm}, which requires handling arbitrary fluctuations within the window. Thus, for fairness in comparison, we do not include aDS-COD in Table~\ref{tab:comparison} as a competitor to our \sidsfd framework. Nonetheless, we find that our \sidscod framework (Algorithm~\ref{alg:sw-amm}) can replace the DS-COD component in aDS-COD, further optimizing its time complexity under conditions where \(\varepsilon\) is small. 

The high-level idea and algorithm skeleton share similarities with aDS-COD  (Algorithm~6 in~\cite{yao2025optimal}), although we replace the DS-COD module with \sidscod (Algorithm~\ref{alg:sw-amm}) to avoid frequent and time-consuming matrix factorizations. We refer to this algorithm as Adaptive AeroSketchCOD (Adaptive-AS in Algorithm~\ref{alg:adscod}).

\begin{algorithm}[h]
\SetAlgoLined
\caption{Adaptive \sidscod (Adaptive-AS)}
\label{alg:adscod}
    \SetKwProg{Fn}{Function}{:}{}

\KwIn{$\bm{X} \in \mathbb{R}^{m_x \times n},\ \bm{Y} \in \mathbb{R}^{m_y \times n}$, window size $N$, relative covariance error bound $\varepsilon$}

\textbf{Initialize:} Initialize a \sidscod (Algorithm~\ref{alg:sw-amm}) sketch $AS$ and an auxiliary sketch $AS_{\text{aux}}$ with initial threshold $\varepsilon N$ and threshold level $L=1$.\;

\Fn{\textsc{Update}($\bm{x}_i, \bm{y}_i$)}{
    \While{$AS.S[0].t + N \leq i$}{
        $AS.S.\operatorname{Dequeue}()$;
    }
    
    \If{$i\mod N == 1$}{
        $AS \leftarrow AS_{\text{aux}}$; \\
        $AS_{\text{aux}}.\operatorname{Initialize}()$;
    }
    
    $AS.\operatorname{Update}(\bm{x}_i, \bm{y}_i)$; \\
    $AS_{\text{aux}}.\operatorname{Update}(\bm{x}_i, \bm{y}_i)$;
    
    \If{$\operatorname{len}(AS.S) \geq \frac{L}{\varepsilon}$}{
        $L \leftarrow L + 1$; \\
        $AS.\theta \leftarrow 2 \cdot AS.\theta$;
    }
    \ElseIf{$\operatorname{len}(AS.S) \leq \frac{L-1}{\varepsilon}$}{
        $L \leftarrow L - 1$; \\
        $AS.\theta \leftarrow AS.\theta / 2$;
    }
}
    \Fn{\textsc{Query}()}{
        \Return $AS.\operatorname{Query}()$;
    }
\end{algorithm}

\begin{figure}[htbp]
    \centering
    \begin{subfigure}[t]{0.23\textwidth}
        \captionsetup{justification=centering}
        \includegraphics[width=\textwidth]{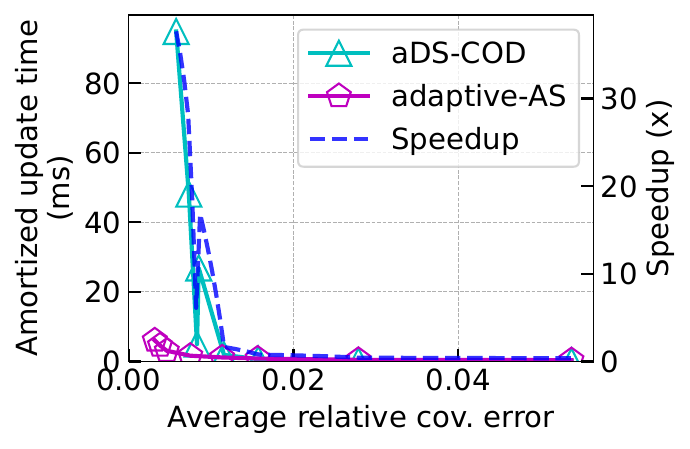}
        \caption{Amortized update time vs. average error}\label{fig:appendix-avg-err}
    \end{subfigure}
    \begin{subfigure}[t]{0.23\textwidth}
        \captionsetup{justification=centering}
        \includegraphics[width=\textwidth]{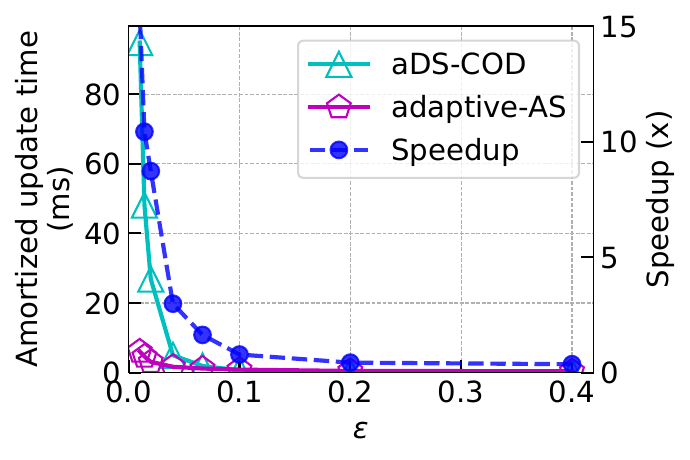}
        \caption{Amortized update time vs. $\varepsilon$}\label{fig:appendix-eps-time}
    \end{subfigure}
    \begin{subfigure}[t]{0.23\textwidth}
        \captionsetup{justification=centering}
        \includegraphics[width=\textwidth]{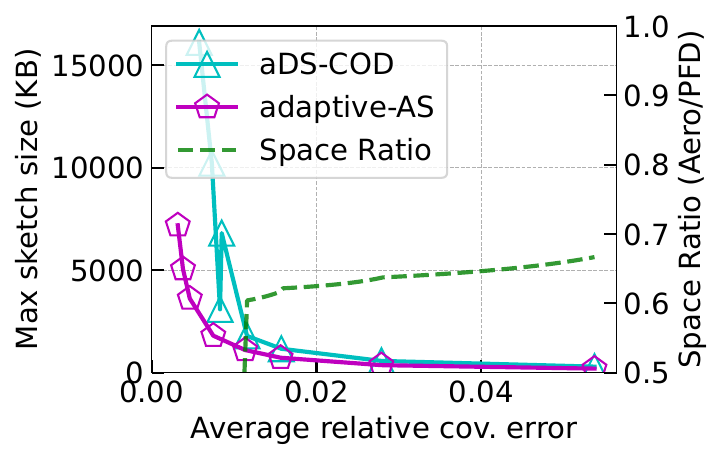}
        \caption{Memory cost vs. average error}\label{fig:appendix-size-err}
    \end{subfigure}
    \begin{subfigure}[t]{0.23\textwidth}
        \captionsetup{justification=centering}
        \includegraphics[width=\textwidth]{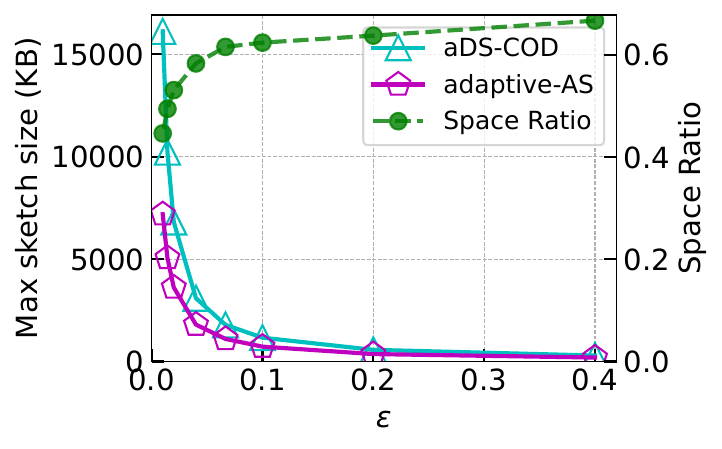}
        \caption{Memory cost vs. $\varepsilon$}\label{fig:appendix-size-eps}
    \end{subfigure}

    \caption{Adaptive \sidscod (Adaptive-AS) vs. aDS-COD}\label{fig:appendix-results}
\end{figure}

We conducted experiments on aDS-COD and Adaptive-AS using the \textbf{Uniform Random} dataset with parameters $d_x=300$, $d_y=500$, $N=5000$, and $T=10000$. Figure~\ref{fig:appendix-results} shows the relationships between the parameter $\varepsilon$, empirical error, time, and space for aDS-COD and Adaptive-AS. Figures~\ref{fig:appendix-avg-err} and~\ref{fig:appendix-eps-time} show that under the guarantee of the same approximation quality, the update time of the Adaptive-AS algorithm integrated with \sidscod is generally lower than that of aDS-COD, and this acceleration becomes more pronounced, especially when the error approaches zero. In addition, Figures~\ref{fig:appendix-size-err} and~\ref{fig:appendix-size-eps} show that, under the guarantee of the same approximation quality, Adaptive-AS can achieve a generally lower space overhead than aDS-COD.

\end{document}